\newcommand{\ignore}[1]{}
\newtheorem{theorem}{Theorem}[section]
\newtheorem{lemma}[theorem]{Lemma}
\newtheorem{corollary}[theorem]{Corollary}
\newtheorem{fact}[theorem]{Fact}
\newtheorem{proposition}[theorem]{Proposition}
\let\oldReturn\Return
\renewcommand{\Return}{\State\oldReturn}
\definecolor{winered}{rgb}{0.5,0,0}
\setlist[description]{leftmargin=\parindent,labelindent=\parindent}
\begin{document}
\title{Computing the $4$-Edge-Connected Components of a Graph in Linear Time\thanks{Research at the University of Ioannina supported by the Hellenic Foundation for Research and Innovation (H.F.R.I.) under the ``First Call for H.F.R.I. Research Projects to support Faculty members and Researchers and the procurement of high-cost research equipment grant'', Project FANTA (eFficient Algorithms for NeTwork Analysis), number HFRI-FM17-431. G. F. Italiano is partially supported by MIUR, the Italian Ministry for Education, University and Research, under PRIN Project AHeAD (Efficient Algorithms for HArnessing Networked Data)}}
\author{Loukas Georgiadis$^{1}$ \and Giuseppe F. Italiano$^{2}$ \and Evangelos Kosinas$^{3}$}

\maketitle

\begin{abstract}
We present the first linear-time algorithm that computes the $4$-edge-connected components of an undirected graph.
Hence, we also obtain the first linear-time algorithm for testing $4$-edge connectivity.
Our results are based on a linear-time algorithm that computes the $3$-edge cuts of a $3$-edge-connected graph $G$, and a linear-time procedure that, given the collection of all $3$-edge cuts, partitions the vertices of $G$ into the $4$-edge-connected components.
\end{abstract}

\footnotetext[1]{Department of Computer Science \& Engineering, University of Ioannina, Greece. E-mail: loukas@cs.uoi.gr}
\footnotetext[2]{LUISS University, Rome, Italy. E-mail: gitaliano@luiss.it}
\footnotetext[3]{Department of Computer Science \& Engineering, University of Ioannina, Greece. E-mail: ekosinas@cs.uoi.gr}

\section{Introduction}
\label{sec:introduction}

Let $G=(V,E)$ be a connected undirected graph with $m$ edges and $n$ vertices.
An \emph{(edge) cut} of $G$ is a set of edges $S \subseteq E$ such that $G \setminus S$ is not connected.
We say that $S$ is a \emph{$k$-cut} if its cardinality is $|S|=k$. Also, we refer to the $1$-cuts as the \emph{bridges} of $G$.
A cut $S$ is \emph{minimal} if no proper subset of $S$ is a cut of $G$.
The \emph{edge connectivity} of $G$, denoted by $\lambda(G)$, is the minimum cardinality of an edge cut of $G$.
A graph is \emph{$k$-edge-connected} if $\lambda(G) \ge k$.

A cut $S$ separates two vertices $u$ and $v$, if $u$ and $v$ lie in different connected components of $G \setminus S$.
Vertices $u$ and $v$ are $k$-edge-connected, denoted by $u \stackrel[]{G}{\equiv}_k v$, if there is no $(k-1)$-cut that separates them. By Menger's theorem~\cite{menger}, $u$ and $v$ are $k$-edge-connected if and only if there are $k$-edge-disjoint paths between $u$ and $v$.
A \emph{$k$-edge-connected component} of $G$ is a maximal set $C \subseteq V$ such that there is no $(k-1)$-edge cut in $G$ that disconnects any two vertices $u,v \in C$ (i.e., $u$ and $v$ are in the same connected component of $G \setminus S$ for any $(k-1)$-edge cut $S$).
We can define, analogously, the \emph{vertex cuts} and the \emph{$k$-vertex-connected components} of $G$.

Computing and testing the edge connectivity of a graph, as well as its $k$-edge-connected components, is a classical subject in graph theory, as it is an important notion in several application areas (see, e.g., \cite{connectivity:nagamochi-ibaraki}), that has been extensively studied since the 1970's.
It is known how to compute the $(k-1)$-edge cuts, $(k-1)$-vertex cuts,  $k$-edge-connected components and $k$-vertex-connected components of a graph in linear time for $k \in \{2,3\}$~\cite{GI:ECtoVC,3-connectivity:ht,NagamochiIbaraki:3CC,dfs:t,Tsin:3CC}.
The case $k=4$ has also received significant attention \cite{3cuts:Dinitz,4C:Online,KR:4C,Kanevsky:4CC}. Unfortunately, none of the previous algorithms achieved linear running time.
In particular,
Kanevsky and Ramachandran~\cite{KR:4C} showed
how to test whether a graph is $4$-vertex-connected in
$O(n^2)$ time. Furthermore,
Kanevsky et al.~\cite{Kanevsky:4CC} gave an $O(m+n\alpha(m,n))$-time algorithm to compute the $4$-vertex-connected components of a $3$-vertex-connected graph, where $\alpha$ is a functional inverse of Ackermann's function~\cite{dsu:tarjan}.
Using the reduction of Galil and Italiano~\cite{GI:ECtoVC} from edge connectivity to vertex connectivity, the same bounds can be obtained for $4$-edge connectivity. Specifically, one can test whether a graph is $4$-edge-connected in $O(n^2)$ time, and one can
compute the $4$-edge-connected components of a $3$-edge-connected graph in $O(m+n\alpha(m,n))$ time.
Dinitz and Westbrook~\cite{4C:Online} presented an $O(m+n\log{n})$-time algorithm to compute the $4$-edge-connected components of a general graph $G$ (i.e., when $G$ is not necessarily $3$-edge-connected).
Nagamochi and Watanabe~\cite{ni93} gave an $O(m+k^2n^2)$-time algorithm to compute the $k$-edge-connected components of a graph $G$, for any integer $k$.
We also note that the edge connectivity of a simple undirected graph can be computed in ${O}(m \mathrm{polylog}{n})$ time, randomized~\cite{FasterEC,Karger:MinCut} or deterministic~\cite{LocalFlow:EC,KT:EC}.
The best current bound is $O(m \log^2{n} \log{\log}^2{n})$, achieved by Henzinger et al.~\cite{LocalFlow:EC} which provided an improved version of the algorithm of Kawarabayashi and Thorup~\cite{KT:EC}.

\paragraph*{Our results and techniques}
In this paper we present the first linear-time algorithm that computes the $4$-edge-connected components of a general graph $G$, thus resolving a problem that remained open for more than 20 years.
Hence, this also implies the first linear-time algorithm for testing $4$-edge connectivity.
We base our results on the following ideas. First, we extend the framework of Georgiadis and Kosinas~\cite{TwinlessSAP} for computing $2$-edge cuts (as well as mixed cuts consisting of a single vertex and a single edge) of $G$.
Similar to known linear-time algorithms for computing $3$-vertex-connected and $3$-edge-connected components~\cite{3-connectivity:ht,Tsin:3CC}, Georgiadis and Kosinas~\cite{TwinlessSAP} define various concepts with respect to a depth-first search (DFS) spanning tree of $G$. 
We extend this framework by introducing new key parameters that can be computed efficiently and provide characterizations of the various types of $3$-edge cuts that may appear in a $3$-edge-connected graph.
We deal with the general case by dividing $G$ into auxiliary graphs $H_1, \ldots, H_{\ell}$, such that each $H_i$ is $3$-edge-connected and corresponds to a different $3$-edge-connected component of $G$. Also,  for any two vertices $x$ and $y$, we have $x \stackrel[]{G}{\equiv}_4 y$ if and only if $x$ and $y$ are both in the same auxiliary graph $H_i$ and $x \stackrel[]{H_i}{\equiv}_4 y$.
Furthermore, this reduction allows us to compute in linear time the number of \emph{minimal $3$-edge cuts} in a general graph $G$.
Next, in order to compute the $4$-edge-connected components in each auxiliary graph $H_i$, we utilize the fact that a minimum cut of a graph $G$ separates $G$ into two connected components.
Hence, we can define the set $V_C$ of the vertices in the connected component of $G\setminus{C}$ that does not contain a specified root vertex $r$. We refer to the number of vertices in $V_C$ as the \emph{$r$-size} of the cut $C$.
Then, we apply a recursive algorithm that successively splits $H_i$ into smaller graphs according to its $3$-cuts.
When no more splits are possible, the connected components of the final split graph correspond to the $4$-edge-connected components of $G$.
We show that we can implement this procedure in linear time by processing the cuts in non-decreasing order with respect to their $r$-size.

\section{Concepts defined on a DFS-tree structure}
\label{sec:preliminaries}

Let $G=(V,E)$ be a connected undirected graph, which may have multiple edges.
For a set of vertices $S \subseteq V$, the induced subgraph of $S$, denoted by $G[S]$, is the subgraph of $G$ with vertex set $S$ and edge set $\{e\in E \mid $ both ends of $e$ lie in $S\}$.
Let $T$ be the spanning tree of $G$ provided by a depth-first search (DFS) of $G$~\cite{dfs:t}, with start vertex $r$.
The edges in $T$ are called tree-edges; the edges in $E \setminus T$ are called back-edges, as their endpoints have ancestor-descendant relation in $T$.
A vertex $u$ is an ancestor of a vertex $v$ ($v$ is a descendant of $u$) if the tree path from $r$ to $v$ contains $u$.
Thus, we consider a vertex to be an ancestor (and, consequently, a descendant) of itself.
We let $p(v)$ denote the parent of a vertex $v$ in $T$.
If $u$ is a descendant of $v$ in $T$, we denote the set of vertices of the simple tree path from $u$ to $v$ as $\mathit{T}[u,v]$. The expressions $\mathit{T}[u,v)$ and $\mathit{T}(u,v]$ have the obvious meaning (i.e., the vertex on the side of the parenthesis is excluded).
%
%
From now on, we identify vertices with their preorder number (assigned during the DFS).
Thus, $v$ being an ancestor of $u$ in $T$ implies that $v\leq u$. Let $T(v)$ denote the set of descendants of $v$, and let $\mathit{ND}(v)$ denote the number of descendants of $v$ (i.e. $\mathit{ND}(v)=|T(v)|$). With all $\mathit{ND}(v)$ computed, we can check in constant time whether a vertex $u$ is a descendant of $v$, since $u\in T(v)$ if and only if $v\leq u$ and $u<v+\mathit{ND}(v)$~\cite{domin:tarjan}.

Whenever $(x,y)$ denotes a back-edge, we shall assume that $x$ is a descendant of $y$. We let $B(v)$ denote the set of back-edges $(x,y)$, where $x$ is a descendant of $v$ and $y$ is a proper ancestor of $v$. Thus, if we remove the tree-edge $(v,p(v))$, $T(v)$ remains connected to the rest of the graph through the back-edges in $B(v)$.
This implies that $G$ is $2$-edge-connected if and only if $|B(v)|>0$, for every $v\neq r$. Furthermore, $G$ is $3$-edge-connected only if $|B(v)|>1$, for every $v\neq r$. We let $\mathit{b\_count}(v)$ denote the number of elements of $B(v)$ (i.e. $\mathit{b\_count}(v)=|B(v)|$). $\mathit{low}(v)$ denotes the lowest $y$ such that there exists a back-edge $(x,y)\in B(v)$. Similarly, $\mathit{high}(v)$ is the highest $y$ such that there exists a back-edge $(x,y)\in B(v)$.

We let $M(v)$ denote the nearest common ancestor of all $x$ for which there exists a back-edge $(x,y)\in B(v)$. Note that $M(v)$ is a descendant of $v$. Let $m$ be a vertex and $v_1,\dotsc,v_k$ be all the vertices with $M(v_1)=\dotsc=M(v_k)=m$, sorted in decreasing order. (Observe that $v_{i+1}$ is an ancestor of $v_i$, for every $i\in\{1,\dotsc,k-1\}$, since $m$ is a common descendant of all $v_1,\dotsc,v_k$.) Then we have $M^{-1}(m)=\{v_1,\dotsc,v_k\}$, and we define $\mathit{nextM}(v_i)$ $:=$ $v_{i+1}$, for every $i\in\{1,\dotsc,k-1\}$, and $\mathit{lastM}(v_i)$ $:=$ $v_k$, for every $i\in\{1,\dotsc,k\}$. Thus, for every vertex $v$, $\mathit{nextM}(v)$ is the successor of $v$ in the decreasingly sorted list $M^{-1}(M(v))$, and $\mathit{lastM}(v)$ is the lowest element in $M^{-1}(M(v))$.


The following two simple facts have been proved in \cite{TwinlessSAP}. 

\begin{fact}
All $\mathit{ND}(v)$, $\mathit{b\_count}(v)$, $M(v)$, $\mathit{low}(v)$ and $\mathit{high}(v)$ can be computed in total linear-time, for all vertices $v$.
\end{fact}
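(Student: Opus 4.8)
The plan is to compute all five quantities with the initial DFS plus a constant number of further traversals of $T$, assuming the DFS also records, for each vertex $v$, the back-edges incident to $v$.

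First, the easy three. The descendant counts follow from the post-order recurrence $\mathit{ND}(v) = 1 + \sum_{c}\mathit{ND}(c)$ over the children $c$ of $v$. For $\mathit{b\_count}(v)=|B(v)|$ I would use the observation that a back-edge $(x,y)$ lies in $B(v)$ exactly when $v$ is an ancestor of $x$ and a proper descendant of $y$, i.e.\ when $v$ lies on the tree path $\mathit{T}[x,y)$; hence, charging $+1$ to $x$ and $-1$ to $y$ for every back-edge $(x,y)$, the value $\mathit{b\_count}(v)$ is the sum of all charges inside $T(v)$ --- a back-edge with both endpoints in $T(v)$, or both outside, nets $0$, while one with $x\in T(v)$ and $y\notin T(v)$ (so $y$ a proper ancestor of $v$) nets $+1$ --- and these subtree sums are again obtained in one post-order pass. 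For $\mathit{low}(v)$ I would use Tarjan's classical recurrence: $\mathit{low}(v)$ is the minimum of the endpoints $y$ of back-edges $(v,y)$ incident to $v$ together with the values $\mathit{low}(c)$ over children $c$ of $v$ with $\mathit{low}(c)<v$; correctness follows since a back-edge of $B(v)$ either has lower endpoint $v$ or lies in $B(c)$ for a child $c$ of $v$ (using that a proper ancestor of $c$ of preorder $<v=p(c)$ is a proper ancestor of $v$), and the recurrence is evaluated in a post-order pass.

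The quantity $\mathit{high}(v)$ is the first point that needs real care, because the naive ``max'' analogue of the $\mathit{low}$ recurrence is \emph{wrong}: for a child $c$ one must discard from $B(c)$ the back-edges landing exactly at $v=p(c)$ while retaining those reaching strictly higher, and $\mathit{high}(c)$ alone does not carry that information. Instead I would use the path viewpoint again: a back-edge $(x,y)$ witnesses $\mathit{high}(v)\ge y$ precisely for the vertices $v$ on $\mathit{T}[x,y)$. So I would bucket the back-edges by their upper endpoint and scan the buckets in \emph{decreasing} order of $y$; for a back-edge $(x,y)$, walk up from $x$ assigning $\mathit{high}(v):=y$ to every still-unassigned vertex $v$ that is a proper descendant of $y$, stopping on reaching $y$. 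Scanning from the largest $y$ downward, the first value written to a vertex is the correct one and each vertex is written once; to keep the total work linear, maintain the already-assigned vertices in a disjoint-set-union structure that unions each assigned vertex with its parent, so that the upward walks never revisit a vertex. With a standard linear-time union-find for this restricted ``union with parent in a tree'' setting, the whole computation is $O(m+n)$.

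Finally, $M(v)$; this is the part I expect to be the main obstacle, since again there is no clean child-to-parent recurrence. My plan is to reduce it to the previous machinery using the following fact about the DFS preorder: for \emph{any} vertex set $S$, $\mathrm{NCA}(S)$ equals the NCA of the two vertices of $S$ of smallest and largest preorder --- because every subtree occupies a contiguous preorder interval, so a common ancestor $a$ of these two extremes has $S\subseteq T(a)$, while conversely $\mathrm{NCA}(S)$ is an ancestor of both extremes. Writing $R(v)$ for the set of lower endpoints $x$ of back-edges in $B(v)$, it then suffices to find, for each $v$, the vertices of smallest and largest preorder in $R(v)$ and take their NCA with a linear-preprocessing, constant-query nearest-common-ancestor structure. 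For that: $x\in R(v)$ iff $x\in T(v)$ and $x$ has a back-edge whose upper endpoint is a proper ancestor of $v$; letting $\ell(x)$ be the smallest upper endpoint over back-edges incident to $x$, this is exactly the condition that $v$ lie on $\mathit{T}[x,\ell(x))$. So each back-edge-bearing vertex $x$ offers itself as a candidate (for the preorder-minimum and for the preorder-maximum of $R(v)$) to every $v$ on $\mathit{T}[x,\ell(x))$, and --- just as for $\mathit{high}$ --- scanning the candidates in increasing (resp.\ decreasing) preorder and using the same tree union-find to skip settled vertices yields all answers in linear total time. Apart from the $\mathit{high}$ and $M$ subtlety, everything is a routine post-order pass, so the residual effort is bookkeeping.
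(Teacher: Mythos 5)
Your proposal is correct, and for most of the quantities it coincides with what the paper (via its citation and its Section~2.1 algorithms) actually does: $\mathit{ND}$, $\mathit{b\_count}$ and $\mathit{low}$ by post-order recurrences (your $+1/-1$ charging for $\mathit{b\_count}$ is a valid way to see the standard subtree-sum computation), and $\mathit{high}$ by exactly the paper's Algorithm~1 --- bucket back-edges by upper endpoint, sweep the upper endpoints in decreasing preorder, walk up from the lower endpoint writing the first (hence largest) value to each vertex, and use the Gabow--Tarjan static-tree disjoint-set-union structure to skip settled vertices. Your identification of why the naive ``max'' analogue of the $\mathit{low}$ recurrence fails is also the right one. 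Where you genuinely diverge is $M(v)$: the paper's Algorithm~2 computes $M(v)$ self-containedly by descending from $v$ through the $\mathit{low1}$/$\mathit{low2}$ children, testing $l(m)<v$ and $\mathit{low}(c_2)<v$ at each step (with a $\mathit{currentM}$ shortcut array to keep the total descent linear), and as a by-product produces the decreasingly sorted lists $M^{-1}(m)$ and the $\mathit{nextM}$ pointers that the rest of the paper relies on. You instead observe that $M(v)$ is the nearest common ancestor of the preorder-minimum and preorder-maximum of $R(v)=\{x\mid\exists(x,y)\in B(v)\}$, recover those two extremes by two further interval-stabbing/DSU sweeps (using that $x\in R(v)$ iff $v\in T[x,l(x))$), and finish with an off-the-shelf linear-preprocessing, constant-query NCA oracle. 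Both routes are linear and correct; yours is more modular in that it reuses one mechanism three times, but it imports the NCA black box and does not by itself yield $\mathit{nextM}$, whereas the paper's descent avoids any NCA machinery and delivers the auxiliary lists needed later.
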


\begin{fact}
$B(u)=B(v)$ $\Leftrightarrow$ $M(u)=M(v)$, and $\mathit{high}(u)=\mathit{high}(v)$  $\Leftrightarrow$ $M(u)=M(v)$ and $\mathit{b\_count}(u)=\mathit{b\_count}(v)$.
\end{fact}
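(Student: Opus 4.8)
The statement asserts the equivalence of three conditions on a pair of vertices $u,v$: (a) $B(u)=B(v)$; (b) $M(u)=M(v)$ and $\mathit{high}(u)=\mathit{high}(v)$; (c) $M(u)=M(v)$ and $\mathit{b\_count}(u)=\mathit{b\_count}(v)$. The plan is to prove $(a)\Rightarrow(b)$, $(a)\Rightarrow(c)$, $(b)\Rightarrow(a)$ and $(c)\Rightarrow(a)$; the first two are immediate, since if $B(u)$ and $B(v)$ are literally the same set of back-edges then they have the same multiset of lower endpoints $x$ (hence the same nearest common ancestor $M$), the same maximum upper endpoint $y$ (hence the same $\mathit{high}$), and the same cardinality (hence the same $\mathit{b\_count}$). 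Throughout I tacitly assume $B(u)$ and $B(v)$ are nonempty, so that $M$ and $\mathit{high}$ are defined; the case $B(u)=B(v)=\emptyset$ is trivial.

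The engine for the two converse directions is the following claim: \emph{if $M(u)=M(v)$, then one of $u,v$ is an ancestor of the other, and whenever $u$ is an ancestor of $v$ we have $B(u)\subseteq B(v)$.} The ancestor relation is clear because $M(w)$ is always a descendant of $w$, so both $u$ and $v$ are ancestors of $m:=M(u)=M(v)$ and therefore lie together on the tree path from $r$ to $m$. For the inclusion, take $(x,y)\in B(u)$ with $u$ an ancestor of $v$. Since $M(u)=m$, the vertex $m$ is a common ancestor of all lower endpoints of back-edges in $B(u)$, so $x$ is a descendant of $m$; and since $m=M(v)$ is a descendant of $v$, the vertex $x$ is a descendant of $v$. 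On the other hand $y$, being a proper ancestor of $u$, satisfies $y<u\le v$ and lies on the root-to-$v$ path, so $y$ is a proper ancestor of $v$. Hence $(x,y)\in B(v)$, proving the claim.

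Given the claim, both converses follow. Assume $M(u)=M(v)$ and, without loss of generality, that $u$ is an ancestor of $v$, so that $B(u)\subseteq B(v)$ by the claim. If $\mathit{b\_count}(u)=\mathit{b\_count}(v)$, then $|B(u)|=|B(v)|$, and a subset of equal finite cardinality equals the whole set, so $B(u)=B(v)$. If instead $\mathit{high}(u)=\mathit{high}(v)$, suppose for contradiction that $B(u)\subsetneq B(v)$ and pick $(x,y)\in B(v)\setminus B(u)$. Since $x$ is a descendant of $v$, hence of $u$, the only way $(x,y)$ can fail to lie in $B(u)$ is that $y$ is not a proper ancestor of $u$; but $y$ is an ancestor of $v$ and hence comparable to $u$, so this forces $y\ge u$. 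Then $\mathit{high}(v)\ge y\ge u>\mathit{high}(u)$, the last inequality because $\mathit{high}(u)$ is a proper ancestor of $u$, contradicting $\mathit{high}(u)=\mathit{high}(v)$. Hence $B(u)=B(v)$. The one place that requires care is the claim itself: the inclusion $B(u)\subseteq B(v)$ for an ancestor $u$ of $v$ is \emph{not} automatic, since a back-edge contributing to $B(u)$ could a priori have its lower endpoint outside $T(v)$; it is exactly the hypothesis $M(u)=M(v)$ that prevents this, by confining all lower endpoints of $B(u)$ to $T(m)\subseteq T(v)$. After that observation the rest is bookkeeping.
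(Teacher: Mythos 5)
Your proof is correct. Note that the paper itself does not prove this fact at all: it is stated as having been proved in the cited reference \cite{TwinlessSAP}, so there is no in-paper argument to compare against. Your argument is a clean, self-contained derivation, and its engine --- that $M(u)=M(v)$ forces $u$ and $v$ to be comparable in $T$ and yields the inclusion $B(u)\subseteq B(v)$ for the ancestral one --- is exactly the content of the unnumbered lemma the paper proves right after Lemma~\ref{lemma:Muv} (there phrased as $B(v)\subseteq B(u)$ for $u$ a proper descendant of $v$ with $M(u)=M(v)$), so your route is consistent with the paper's toolkit. The two converses then follow as you say: equal cardinality upgrades the inclusion to equality directly, and for the $\mathit{high}$ version the observation that any $(x,y)\in B(v)\setminus B(u)$ must have $y$ a descendant of $u$, hence $\mathit{high}(v)\geq y\geq u>\mathit{high}(u)$, is exactly the right contradiction. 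Your explicit caveat that the inclusion is not automatic for arbitrary ancestor--descendant pairs, but is rescued by $M(u)=M(v)$ confining the lower endpoints to $T(M(v))\subseteq T(v)$, is the one genuinely delicate point and you handle it correctly.
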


Furthermore, \cite{TwinlessSAP} implies the following characterization of a $3$-edge-connected graph.

\begin{fact}
\label{fact:3edgeconn}
$G$ is $3$-edge-connected if and only if $|B(v)|>1$, for every $v\neq r$, and $B(v)\neq B(u)$, for every pair of vertices $u$ and $v$, $u \not= v$.
\end{fact}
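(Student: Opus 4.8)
The plan is to derive Fact~\ref{fact:3edgeconn} from a description of the possible $2$-edge cuts of $G$ relative to the DFS tree $T$. The starting observation is that every $2$-edge cut of $G$ must contain at least one tree-edge: $G$ minus any set of back-edges still contains $T$ as a spanning subgraph, hence is connected. Consequently a $2$-edge cut has one of two shapes: $\{(v,p(v)),e\}$ with $e$ a back-edge, or $\{(u,p(u)),(v,p(v))\}$ with two tree-edges, in which case either $u$ and $v$ are incomparable in $T$, or, after renaming, $v$ is a proper descendant of $u$. Both directions of the equivalence then follow by matching these shapes against the conditions on the sets $B(\cdot)$.

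For the forward direction, suppose $G$ is $3$-edge-connected; then $G$ is $2$-edge-connected, so $|B(v)|\ge 1$ for every $v\neq r$. If $|B(v)|=1$ for some $v\neq r$, say $B(v)=\{e\}$, then $\{(v,p(v)),e\}$ is a $2$-edge cut: after deleting $(v,p(v))$ the set $T(v)$ is attached to the rest of $G$ only through $e$, and $\emptyset\neq T(v)\neq V$ because $v\neq r$; this is impossible. Hence $|B(v)|>1$ for all $v\neq r$. Now suppose $B(u)=B(v)$ for some $u\neq v$. Since $2$-edge-connectivity forces $B(w)\neq\emptyset$ for every $w\neq r$, both $u$ and $v$ are distinct from $r$, so $B(u)=B(v)$ gives $M(u)=M(v)=:m$; as $m$ is a descendant of both $u$ and $v$, these two vertices are comparable in $T$, and we may assume $v$ is a proper descendant of $u$. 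I claim $\{(u,p(u)),(v,p(v))\}$ is a $2$-edge cut, because it separates $C:=T(u)\setminus T(v)$ from $V\setminus C$. Indeed, a back-edge with tail in $C$ and head outside $T(u)$ would belong to $B(u)=B(v)$ and therefore have its tail in $T(v)$, a contradiction; and a back-edge with one end in $C$ and the other in $T(v)$ must, by the ancestor--descendant structure of back-edges, have its tail in $T(v)$ and its head in $C$, hence belong to $B(v)=B(u)$ and have its head outside $T(u)$, again a contradiction. Thus no edge leaves $C$, which is impossible in a $3$-edge-connected graph. So $B(u)\neq B(v)$ for all $u\neq v$.

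For the reverse direction, assume $|B(v)|>1$ for all $v\neq r$ and $B(u)\neq B(v)$ whenever $u\neq v$; in particular $G$ is $2$-edge-connected. Suppose, towards a contradiction, that $S$ is a $2$-edge cut. If $S=\{(v,p(v)),e\}$ with $e$ a back-edge: when $e\in B(v)$, the set $T(v)$ is still attached to $V\setminus T(v)$ through $B(v)\setminus\{e\}\neq\emptyset$; and when $e\notin B(v)$, the back-edge $e$ has both ends inside $T(v)$ or both ends inside $V\setminus T(v)$, so deleting it keeps that side connected via its tree-edges; either way $G\setminus S$ is connected, a contradiction. If $S=\{(u,p(u)),(v,p(v))\}$ with $u,v$ incomparable: $T\setminus S$ consists of the connected pieces $T(u)$, $T(v)$ and $R:=V\setminus(T(u)\cup T(v))$, and the nonempty sets $B(u)$ and $B(v)$ reattach $T(u)$ and $T(v)$ to $R$, so $G\setminus S$ is connected, a contradiction. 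Finally, if $v$ is a proper descendant of $u$: $T\setminus S$ consists of the connected pieces $T(v)$, $C:=T(u)\setminus T(v)$ and $R:=V\setminus T(u)$ (each internally connected in $G\setminus S$), and a short analysis of the ancestor structure (using that every back-edge of $B(u)$ whose tail lies in $T(v)$ automatically lies in $B(v)$) shows that the back-edges joining these pieces are exactly $B(u)\cap B(v)$ between $T(v)$ and $R$, $B(v)\setminus B(u)$ between $T(v)$ and $C$, and $B(u)\setminus B(v)$ between $C$ and $R$. Hence $G\setminus S$ is disconnected precisely when at least two of these three sets are empty; since $B(u)$ and $B(v)$ are nonempty, the only way this can happen is $B(v)\setminus B(u)=B(u)\setminus B(v)=\emptyset$, that is, $B(u)=B(v)$, which is excluded. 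Therefore $G$ has no $2$-edge cut and is $3$-edge-connected.

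The step I expect to be the main obstacle is the last case of the reverse direction: pinning down the three sets of back-edges that run between $T(v)$, $T(u)\setminus T(v)$ and $V\setminus T(u)$ when $v$ is a proper descendant of $u$, and then checking that ``$\{(u,p(u)),(v,p(v))\}$ disconnects $G$'' is equivalent to $B(u)=B(v)$. The remaining cases, and the passage from $B(u)=B(v)$ to the comparability of $u$ and $v$ in $T$ (via $M(u)=M(v)$), are routine.
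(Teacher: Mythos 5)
Your proof is correct. Note that the paper does not actually prove this fact; it imports it from the cited work of Georgiadis and Kosinas, so there is no in-paper argument to compare against, and your self-contained derivation is a genuine addition. The structure is sound: every edge cut must contain a tree-edge, so a $2$-cut is either a tree-edge plus a back-edge or two tree-edges, and in the latter case the two tree-edges' lower endpoints are either incomparable or comparable in $T$. Your key computation in the comparable case -- that when $v$ is a proper descendant of $u$ the back-edges joining the three pieces $T(v)$, $T(u)\setminus T(v)$, $V\setminus T(u)$ of $T\setminus\{(u,p(u)),(v,p(v))\}$ are exactly $B(u)\cap B(v)$, $B(v)\setminus B(u)$ and $B(u)\setminus B(v)$ respectively -- checks out, and the resulting criterion (disconnection iff at least two of the three classes are empty, which under $B(u),B(v)\neq\emptyset$ forces $B(u)=B(v)$) is exactly right; the forward direction's converse construction of a $2$-cut from $B(u)=B(v)$ is the same computation read backwards. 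Two points you leave implicit but which are harmless: the passage from ``no $2$-cut'' to ``$3$-edge-connected'' also needs ``no $1$-cut,'' which follows since any bridge together with any second edge would form a $2$-cut; and in the forward direction the comparability of $u$ and $v$ needs only that $B(u)=B(v)$ is nonempty, so any single common back-edge's tail serves as the common descendant without invoking $M$.
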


\begin{lemma}
\label{lemma:Muv}
Let $v$ be an ancestor of $u$ and $M(v)$ a descendant of $u$. Then, $M(v)$ is a descendant of $M(u)$.
\end{lemma}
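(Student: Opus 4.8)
The plan is to reduce everything to the sets of lower endpoints of the relevant back-edges, and then to show that the hypothesis forces one such set to be contained in the other; monotonicity of the nearest-common-ancestor operation will then finish the argument immediately. Concretely, for a vertex $w$ let $X_w$ denote the set of all $x$ such that $(x,y)\in B(w)$ for some $y$. By definition $M(w)$ is the nearest common ancestor of $X_w$, so the whole statement becomes a statement about nearest common ancestors of vertex sets.

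First I would record the easy observation that every $x\in X_v$ is a descendant of $u$: indeed, since $M(v)$ is the nearest common ancestor of $X_v$, every $x\in X_v$ is a descendant of $M(v)$, and $M(v)$ is a descendant of $u$ by hypothesis.

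Next I would prove the inclusion $X_v\subseteq X_u$. Take any $x\in X_v$, witnessed by a back-edge $(x,y)\in B(v)$. Then $y$ is a proper ancestor of $v$, and since $v$ is an ancestor of $u$ we get $y<v\le u$ with $y$ an ancestor of $u$, i.e.\ $y$ is a proper ancestor of $u$. Combining this with the previous paragraph ($x$ is a descendant of $u$) gives $(x,y)\in B(u)$, hence $x\in X_u$. In particular $X_u\neq\emptyset$, so $M(u)$ is well defined whenever $M(v)$ is.

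Finally I would invoke the monotonicity of the nearest common ancestor under inclusion: if $X_v\subseteq X_u$, then the nearest common ancestor of $X_u$ is a common ancestor of $X_v$ as well, so the nearest common ancestor of $X_v$ is a descendant of it; that is, $M(v)$ is a descendant of $M(u)$. I do not expect a genuine obstacle here beyond careful bookkeeping with the definitions of $B(\cdot)$ and $M(\cdot)$. The single place where the full strength of the hypothesis is used is the inclusion $X_v\subseteq X_u$ — specifically the step guaranteeing that no lower endpoint of a back-edge in $B(v)$ lies strictly below $u$, which is exactly what ``$M(v)$ is a descendant of $u$'' provides; without it the inclusion can fail.
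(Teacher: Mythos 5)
Your proposal is correct and follows essentially the same route as the paper: both establish $B(v)\subseteq B(u)$ (you phrase it via the endpoint sets $X_v\subseteq X_u$) by noting that each upper endpoint $x$ is a descendant of $M(v)$ and hence of $u$, while each lower endpoint is a proper ancestor of $v$ and hence of $u$, and then conclude by monotonicity of the nearest common ancestor. No gaps.
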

\begin{proof}
Let $(x,y)\in B(v)$. Then $x$ is a descendant of $M(v)$, and therefore a descendant of $u$. Furthermore, $y$ is a proper ancestor of $v$, and therefore a proper ancestor of $u$. This shows that $(x,y)\in B(u)$, and thus we have $B(v)\subseteq B(u)$. This shows that $M(v)$ is a descendant of $M(u)$.
\end{proof}

The following lemma will be implicitly evoked several times in the following sections.

\begin{lemma}
Let $u$ be a proper descendant of $v$ such that $M(u)=M(v)$. Then, $B(v)\subseteq B(u)$. Furthermore, if the graph is $3$-edge-connected, $B(v)\subset B(u)$.
\end{lemma}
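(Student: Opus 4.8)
The statement to prove is: if $u$ is a proper descendant of $v$ with $M(u)=M(v)$, then $B(v)\subseteq B(u)$, and moreover $B(v)\subset B(u)$ when $G$ is $3$-edge-connected.

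The plan is to argue the inclusion $B(v)\subseteq B(u)$ directly from the definitions, in the same spirit as the proof of Lemma~\ref{lemma:Muv}. First I would take an arbitrary back-edge $(x,y)\in B(v)$; by definition this means $x$ is a descendant of $v$ and $y$ is a proper ancestor of $v$. Since $x$ is a descendant of $v$ and it is the lower endpoint of a back-edge in $B(v)$, $x$ is a descendant of $M(v)$; by hypothesis $M(v)=M(u)$, so $x$ is a descendant of $M(u)$, hence a descendant of $u$. For the upper endpoint, since $y$ is a proper ancestor of $v$ and $v$ is a proper ancestor of $u$ (because $u$ is a \emph{proper} descendant of $v$), transitivity of the ancestor relation gives that $y$ is a proper ancestor of $u$. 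Therefore $(x,y)\in B(u)$, establishing $B(v)\subseteq B(u)$.

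For the strict inclusion under $3$-edge-connectivity, I would invoke Fact~\ref{fact:3edgeconn}: in a $3$-edge-connected graph we have $B(u)\neq B(v)$ whenever $u\neq v$. Since $u$ is a proper descendant of $v$, in particular $u\neq v$, so $B(v)\neq B(u)$; combined with the already-proved inclusion $B(v)\subseteq B(u)$, this forces $B(v)\subset B(u)$.

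I do not expect any serious obstacle here — the only point requiring a little care is making sure the ancestor/descendant relations are used with the correct ``proper'' qualifiers (that $v$ is a \emph{proper} ancestor of $u$, which is exactly what lets us conclude $y$ is a proper ancestor of $u$ rather than merely an ancestor). The characterization of $M(v)$ as the nearest common ancestor of the lower endpoints of back-edges in $B(v)$ is what drives the descendant part of the argument, and the transitivity of the tree order handles the ancestor part.
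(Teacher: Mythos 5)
Your proof is correct and follows essentially the same route as the paper's: the inclusion $B(v)\subseteq B(u)$ is obtained by tracking both endpoints of an arbitrary back-edge of $B(v)$ through the hypothesis $M(u)=M(v)$ and the ancestor relation, and strictness follows from Fact~\ref{fact:3edgeconn}. No gaps.
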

\begin{proof}
Let $(x,y)\in B(v)$. Then $x$ is a descendant of $M(v)$, and therefore a descendant of $M(u)$, and therefore a descendant of $u$. Furthermore, $y$ is a proper ancestor of $v$, and therefore a proper ancestor of $u$. This shows that $(x,y)\in B(u)$, and thus $B(v)\subseteq B(u)$ is established. If the graph is $3$-edge-connected, $B(v)\subset B(u)$ is an immediate consequence of fact \ref{fact:3edgeconn}.
\end{proof}

Now let us provide some extensions of those concepts that will be needed for our purposes. Assume that $G$ is $3$-edge-connected, and let $v\neq r$ be a vertex of $G$. By fact \ref{fact:3edgeconn}, $\mathit{b\_count}(v)>1$, and therefore there are at least two back-edges in $B(v)$. Of course, there is at least one back-edge $(x,y)\in B(v)$ such that $y=\mathit{low}(v)$. We let $\mathit{low1}(v)$ denote $y$, and $\mathit{low1D}(v)$ denote $x$. That is, $\mathit{low1}(v)$ is the $\mathit{low}$ point of $v$, and $\mathit{low1D}(v)$ is a descendant of $v$ which is connected with a back-edge to its $\mathit{low}$ point. (Of course, $\mathit{low1D}(v)$ is not uniquely determined, but we need to have at least one such descendant stored in a variable.) Similarly, we let $\mathit{highD}(v)$ denote a descendant of $v$ which is connected with a back-edge to the $\mathit{high}$ point of $v$. (Again, $\mathit{highD}(v)$ is not uniquely determined.) Then, there may exist another back-edge $(x',y')\in B(v)$ with $x'\neq x$ and $y'=y$. In this case, we let $\mathit{low2}(v)$ denote $y'$ (that is, $\mathit{low2}(v)$ is, again, the $\mathit{low}$ point of $v$) and $\mathit{low2D}(v)$ denote $x'$. If there is no back-edge $(x',y')\in B(v)$ with $x'\neq x$ and $y'=y$, let $(x',y')\in B(v)$ denote a back-edge with $y'=\mathit{min}(\{w\mid\exists (z,w)\in B(v)\}\setminus\{y\})$. Then we let $\mathit{low2}(v)$ denote $y'$ and $\mathit{low2D}(v)$ denote $x'$. Thus, if $v\neq r$, we know that $(\mathit{low1D}(v),\mathit{low}(v))$ and $(\mathit{low2D}(v),\mathit{low2}(v))$ are two distinct back-edges in $B(v)$.
We have defined $\mathit{low1}$, $\mathit{low1D}$, $\mathit{low2}$ and $\mathit{low2D}$ because we need to have stored, for every vertex $v\neq r$, two back-edges from $B(v)$ (see section \ref{section:one_tree_edge}). Any other pair of back-edges from $B(v)$ could do as well. 
It is easy to compute all $\mathit{low1}(v)$, $\mathit{low1D}(v)$, $\mathit{low2}(v)$ and $\mathit{low2D}(v)$ during the DFS. 

We let $l(v)$ denote the lowest $y$ for which there exists a back-edge $(v,y)$, or $v$ if no such back-edge exists. Thus, $\mathit{low}(v)\leq l(v)$. Now let $c_1,\dotsc,c_k$ be the children of $v$ sorted in non-decreasing order w.r.t. their $\mathit{low}$ point.
Then we call $c_1$ the $\mathit{low1}$ child of $v$, and $c_2$ the $\mathit{low2}$ child of $v$. (Of course, the $\mathit{low1}$ and $\mathit{low2}$ children of $v$ are not uniquely determined after a DFS on $G$, since we may have $\mathit{low}(c_1)=\mathit{low}(c_2)$.) We let $\tilde{M}(v)$ denote the nearest common ancestor of all $x$ for which there exists a back-edge $(x,y)\in B(v)$ with $x$ a proper descendant of $M(v)$. Formally, $\tilde{M}(v)$ $:=$ nca$\{x\mid\exists (x,y)\in B(v) \mbox{ and } x\neq M(v)\}$. If the set $\{x\mid\exists (x,y)\in B(v) \mbox{ and } x\neq M(v)\}$ is empty, we leave $\tilde{M}(v)$ undefined. We also define $M_{low1}(v)$ as the nearest common ancestor of all $x$ for which there exists a back-edge $(x,y)\in B(v)$ with $x$ being a descendant of the $\mathit{low1}$ child of $M(v)$, and $M_{low2}(v)$ as the nearest common ancestor of all $x$ for which there exists a back-edge $(x,y)\in B(v)$ with $x$ a descendant of the $\mathit{low2}$ child of $M(v)$. Formally, $M_{low1}(v)$ $:=$ nca$\{x\mid\exists (x,y)\in B(v) \mbox{ and } x \mbox{ is a descendant of the } \mathit{low1} \mbox{ child of } M(v)\}$ and $M_{low2}(v)$ $:=$ nca$\{x\mid\exists (x,y)\in B(v) \mbox{ and } x \mbox{ is a descendant of the } \mathit{low2} \mbox{ child of } M(v)\}$. If the set in the formal definition of $M_{low1}(v)$ (resp. $M_{low2}(v)$) is empty, we leave $M_{low1}(v)$ (resp. $M_{low2}(v)$) undefined.
\ignore{
Algorithm \ref{algorithm:computeM} shows how we can compute all $M(v)$ and $\mathit{nextM}(v)$, algorithm \ref{algorithm:computeM2} shows how we can compute all $\tilde{M}(v)$, and algorithm \ref{algorithm:computeM3} shows how we can compute all $M_{low1}(v)$ and $M_{low2}(v)$, for all vertices $v\neq r$, in total linear time. These algorithms process the vertices in a bottom-up fashion, and they work recursively on the descendants of a vertex. To perform these computations in linear time, we have to avoid descending to the same vertices an excessive amount of times during the recursion. To achieve this, we use a variable $\mathit{currentM}[w]$, that has the property that, during the course of the algorithm, when we process a vertex $v$, all back-edges that start from a descendant of $w$ and end in a proper ancestor of $v$ have their higher end in $T(\mathit{currentM}[w])$ (this means, of course, that $\mathit{currentM}[w]$ is a descendant of $w$). And so, if we want e.g. to compute $M_{low1}(v)$, we may descend immediately to $\mathit{currentM}[c_1]$, where $c_1$ is the $\mathit{low1}$ child of $M(v)$. In Lemma \ref{lemma:Malg_is_correct}, we give a formal proof of the correctness and linear complexity of algorithms \ref{algorithm:computeM2} and \ref{algorithm:computeM3}.
}

\subsection{Computing the DFS parameters in linear time}
\label{sec:dfs}

Algorithm \ref{algorithm:high} shows how we can easily compute $\mathit{highD}(v)$ during the computation of all $\mathit{high}$ points. The algorithm uses the static tree disjoint-set-union data structure of Gabow and Tarjan~\cite{dsu:gt} to achieve linear running time.

\begin{algorithm}
\caption{\textsf{Compute all $\mathit{high}(v)$ and $\mathit{highD}(v)$, for all vertices $v\neq r$}}
\label{algorithm:high}
\LinesNumbered
\DontPrintSemicolon
initialize a DSU structure on the vertices of $G$, where the $\mathit{link}$ operations are predetermined by the edges of $T$\;
\For{$v=n$ to $v=1$}{
  \ForEach{$u$ adjacent to $v$}{
    \If{$u$ is a descendant of $v$}{
      $x \leftarrow \mathit{find}(u)$\;
      \While{$x>v$}{
        $\mathit{high}[x] \leftarrow v$\;
        $\mathit{highD}[x] \leftarrow u$\;
        $\mathit{next} \leftarrow \mathit{find}(p(x))$\;
        $\mathit{link}(x,p(x))$\;
        $x \leftarrow \mathit{next}$
      }
    }
  }
}
\end{algorithm}

Algorithm \ref{algorithm:computeM} shows how we can compute all $M(v)$ and $\mathit{nextM}(v)$, algorithm \ref{algorithm:computeM2} shows how we can compute all $\tilde{M}(v)$, and algorithm \ref{algorithm:computeM3} shows how we can compute all $M_{low1}(v)$ and $M_{low2}(v)$, for all vertices $v\neq r$, in total linear time. These algorithms process the vertices in a bottom-up fashion, and they work recursively on the descendants of a vertex. To perform these computations in linear time, we have to avoid descending to the same vertices an excessive amount of times during the recursion. To achieve this, we use a variable $\mathit{currentM}[w]$, that has the property that, during the course of the algorithm, when we process a vertex $v$, all back-edges that start from a descendant of $w$ and end in a proper ancestor of $v$ have their higher end in $T(\mathit{currentM}[w])$ (this means, of course, that $\mathit{currentM}[w]$ is a descendant of $w$). And so, if we want e.g. to compute $M_{low1}(v)$, we may descend immediately to $\mathit{currentM}[c_1]$, where $c_1$ is the $\mathit{low1}$ child of $M(v)$. In Lemma \ref{lemma:Malg_is_correct}, we give a formal proof of the correctness and linear complexity of Algorithms \ref{algorithm:computeM2} and \ref{algorithm:computeM3}.

\begin{algorithm}[!h]
\caption{\textsf{Compute all $M(v)$ and $\mathit{nextM}(v)$, for all vertices $v\neq r$}}
\label{algorithm:computeM}
\LinesNumbered
\DontPrintSemicolon

\tcp{\textit{Compute all $M(v)$ and $\mathit{nextM}(v)$}}
\For{$v=n$ to $v=2$}{
  $\mathit{nextM}[v] \leftarrow \emptyset$\;
  $c \leftarrow v$, $m \leftarrow v$\;
  \While{$M(v)=\emptyset$}{
    \lIf{$l(m)<v$}{$M(v) \leftarrow m$, \textbf{break}}
    $c_1 \leftarrow \mathit{low1}$ child of $m$\;
    $c_2 \leftarrow \mathit{low2}$ child of $m$\;
    \lIf{$\mathit{low}(c_2)<v$}{$M(v) \leftarrow m$, \textbf{break}}
    $c \leftarrow c_1$, $m \leftarrow M(c)$\;
  }
  \lIf{$c\neq v$}{$\mathit{nextM}(c) \leftarrow v$}
}
\end{algorithm}

\begin{algorithm}[!h]
\caption{\textsf{Compute all $\tilde{M}(v)$, for all vertices $v\neq r$}}
\label{algorithm:computeM2}
\LinesNumbered
\DontPrintSemicolon
initialize an array $\mathit{currentM}$ with $n$ entries\;
\tcp{\textit{Compute all $\tilde{M}(v)$}}
\lForEach{vertex $v$}{$\mathit{currentM}[v] \leftarrow v$}
\For{$v=n$ to $v=2$}{
  $m \leftarrow M(v)$\;
  $c \leftarrow \mathit{low1}$ child of $m$\;
  \lIf{$\mathit{low}(c)\geq v$}{$\tilde{M}(v)\leftarrow\emptyset$, \textbf{continue}}
  $c' \leftarrow \mathit{low2}$ child of $m$\;
  \lIf{$\mathit{low}(c')<v$}{$\tilde{M}(v)\leftarrow m$, \textbf{continue}}
  $m \leftarrow \mathit{currentM}[c]$\;
  \While{$\tilde{M}(v)=\emptyset$}{
    \lIf{$l(m)<v$}{$\tilde{M}(v) \leftarrow m$, \textbf{break}}
    $c_1 \leftarrow \mathit{low1}$ child of $m$\;
    $c_2 \leftarrow \mathit{low2}$ child of $m$\;
    \lIf{$\mathit{low}(c_2)<v$}{$\tilde{M}(v) \leftarrow m$, \textbf{break}}
    $m \leftarrow \mathit{currentM}[c_1]$\;
  }
  $\mathit{currentM}[c] \leftarrow m$\;
}
\end{algorithm}

\begin{algorithm}[!h]
\caption{\textsf{Compute all $M_{low1}(v)$ and $M_{low2}(v)$, for all vertices $v\neq r$}}
\label{algorithm:computeM3}
\LinesNumbered
\DontPrintSemicolon
initialize an array $\mathit{currentM}$ with $n$ entries\;
\tcp{\textit{Compute all $M_{low1}(v)$}}
\lForEach{vertex $v$}{$\mathit{currentM}[v] \leftarrow v$}
\For{$v=n$ to $v=2$}{
  $m \leftarrow M(v)$\;
  $c \leftarrow \mathit{low1}$ child of $m$\;
  \lIf{$\mathit{low}(c)\geq v$}{$M_{low1}(v)\leftarrow\emptyset$, \textbf{continue}}
  \label{algM_null}
  $m \leftarrow \mathit{currentM}[c]$\;
  \label{algM_first_assignment}
  \While{$M_{low1}(v)=\emptyset$}{
    \label{algM_while1}
    \lIf{$l(m)<v$}{$M_{low1}(v) \leftarrow m$, \textbf{break}}
    \label{algM_low1}
    $c_1 \leftarrow \mathit{low1}$ child of $m$\;
    $c_2 \leftarrow \mathit{low2}$ child of $m$\;
    \lIf{$\mathit{low}(c_2)<v$}{$M_{low1}(v) \leftarrow m$, \textbf{break}}
    \label{algM_low2}
    $m \leftarrow \mathit{currentM}[c_1]$\;
    \label{algM_setM}
  }
  \label{algM_while2}
  $\mathit{currentM}[c] \leftarrow m$\;
  \label{algM_end}
}
\tcp{\textit{Compute all $M_{low2}(v)$}}
\lForEach{vertex $v$}{$\mathit{currentM}[v] \leftarrow v$}
\For{$v=n$ to $v=2$}{
  $m \leftarrow M(v)$\;
  $c \leftarrow \mathit{low2}$ child of $m$\;
  \lIf{$\mathit{low}(c)\geq v$}{$M_{low2}(v)\leftarrow\emptyset$, \textbf{continue}}
  $m \leftarrow \mathit{currentM}[c]$\;
  \While{$M_{low2}(v)=\emptyset$}{
    \lIf{$l(m)<v$}{$M_{low2}(v) \leftarrow m$, \textbf{break}}
    $c_1 \leftarrow \mathit{low1}$ child of $m$\;
    $c_2 \leftarrow \mathit{low2}$ child of $m$\;
    \lIf{$\mathit{low}(c_2)<v$}{$M_{low2}(v) \leftarrow m$, \textbf{break}}
    $m \leftarrow \mathit{currentM}[c_1]$\;
  }
  $\mathit{currentM}[c] \leftarrow m$\;
}
\end{algorithm}

\begin{lemma}
\label{lemma:MlowEtc}
Let $v$ and $v'$ be two vertices such that $v'$ is an ancestor of $v$ with $M(v')=M(v)$. Then, $\tilde{M}(v')$ (resp. $M_{low1}(v')$, resp. $M_{low2}(v')$), if it is defined, is a descendant of $\tilde{M}(v)$ (resp. $M_{low1}(v)$, resp. $M_{low2}(v)$).
\end{lemma}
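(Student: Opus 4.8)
The plan is to reduce everything to a single monotonicity property of the sets $B(\cdot)$ together with an elementary fact about nearest common ancestors. The monotonicity property is the (unnamed) lemma immediately following Lemma~\ref{lemma:Muv}: if $v$ is a proper descendant of $v'$ and $M(v')=M(v)$, then $B(v')\subseteq B(v)$. The elementary fact is that, in a rooted tree, if $A\subseteq B$ are two nonempty sets of vertices, then $\mathrm{nca}(A)$ is a descendant of $\mathrm{nca}(B)$, since $\mathrm{nca}(B)$ is a common ancestor of every vertex of $B$, hence of every vertex of $A$, and $\mathrm{nca}(A)$ is by definition the lowest such common ancestor.

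First I would dispose of the trivial case $v'=v$, and otherwise assume $v'$ is a proper ancestor of $v$, so that the monotonicity lemma applies and gives $B(v')\subseteq B(v)$. Then I would treat the three quantities in turn, each time showing that the vertex set whose nca defines the primed quantity is contained in the vertex set whose nca defines the unprimed quantity. For $\tilde{M}$: since $M(v')=M(v)$, the vertex excluded in both defining sets is the same, so $\{x : \exists (x,y)\in B(v'),\ x\neq M(v')\}\subseteq\{x : \exists (x,y)\in B(v),\ x\neq M(v)\}$ follows directly from $B(v')\subseteq B(v)$. For $M_{low1}$ and $M_{low2}$: again because $M(v')=M(v)$, the $\mathit{low1}$ (resp. $\mathit{low2}$) child of $M(v')$ is the very same vertex as the $\mathit{low1}$ (resp. $\mathit{low2}$) child of $M(v)$ — the choice of these children is fixed once the DFS is performed — so the constraint ``$x$ is a descendant of the $\mathit{low1}$ child of $M(v')$'' is identical to ``$x$ is a descendant of the $\mathit{low1}$ child of $M(v)$'', and the set inclusion again follows from $B(v')\subseteq B(v)$. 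In each case, invoking the nca fact yields that the primed quantity is a descendant of the unprimed one.

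A small point to handle cleanly is the ``if it is defined'' hypothesis: each of $\tilde{M}(v')$, $M_{low1}(v')$, $M_{low2}(v')$ is declared undefined exactly when its defining vertex set is empty, so whenever the primed quantity is defined its defining set is nonempty, hence — being a superset — so is the corresponding set for $v$, and the unprimed quantity is defined as well; thus the assertion ``is a descendant of $\tilde{M}(v)$'', etc., is meaningful. I do not foresee a genuine obstacle here: the entire argument is just ``shrinking the set over which we take an nca can only move the nca downward,'' and the only care needed is to verify that the hypothesis $M(v')=M(v)$ really makes the two defining sets differ only by one being a subset of the other — in particular that the reference points ($M(v')$ itself, and the $\mathit{low1}$/$\mathit{low2}$ children of $M(v')$) coincide with those used for $v$.
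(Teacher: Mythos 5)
Your proposal is correct and follows essentially the same route as the paper's proof: the paper also argues, in each of the three cases, that every $x$ in the defining set for the primed quantity lies in the defining set for the unprimed one (by showing $(x,y)\in B(v')$ implies $(x,y)\in B(v)$ with the same reference point, since $M(v')=M(v)$), and then concludes via the nca. Your version merely packages the element-wise containment as an appeal to the preceding monotonicity lemma $B(v')\subseteq B(v)$ plus the general ``nca of a subset is a descendant of the nca of the superset'' fact, and correctly handles the trivial case $v'=v$ and the definedness issue.
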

\begin{proof}
Let $v'$ be an ancestor of $v$ such that $M(v')=M(v)$.

Assume, first, that $\tilde{M}(v')$ is defined. Then, there exists a back-edge $(x,y)\in B(v')$ where $x$ is a proper descendant of $M(v')$. Since $M(v')=M(v)$, $x$ is a proper descendant of $M(v)$. Furthermore, since $y$ is a proper ancestor of $v'$, it is also a proper ancestor of $v$. This shows that $(x,y)\in B(v)$, and $\tilde{M}(v)$ is an ancestor of $x$. Due to the generality of $(x,y)$, we conclude that $\tilde{M}(v)$ is an ancestor of $\tilde{M}(v')$.

Now assume that $M_{low1}(v')$ is defined. Then, there exists a back-edge $(x,y)\in B(v')$ where $x$ is a descendant of the $\mathit{low1}$ child of $M(v')$. Since $M(v')=M(v)$, $x$ is a descendant of the $\mathit{low1}$ child of $M(v)$. Furthermore, since $y$ is a proper ancestor of $v'$, it is also a proper ancestor of $v$. This shows that $(x,y)\in B(v)$, and $M_{low1}(v)$ is an ancestor of $x$. Due to the generality of $(x,y)$, we conclude that $M_{low1}(v)$ is an ancestor of $M_{low1}(v')$.

Finally, assume that $M_{low2}(v')$ is defined. Then, there exists a back-edge $(x,y)\in B(v')$ where $x$ is a descendant of the $\mathit{low2}$ child of $M(v')$. Since $M(v')=M(v)$, $x$ is a descendant of the $\mathit{low2}$ child of $M(v)$. Furthermore, since $y$ is a proper ancestor of $v'$, it is also a proper ancestor of $v$. This shows that $(x,y)\in B(v)$, and $M_{low2}(v)$ is an ancestor of $x$. Due to the generality of $(x,y)$, we conclude that $M_{low2}(v)$ is an ancestor of $M_{low2}(v')$.
\end{proof}

\begin{lemma}
\label{lemma:Malg_is_correct}
Algorithms \ref{algorithm:computeM2} and \ref{algorithm:computeM3} compute all $\tilde{M}(v)$, $M_{low1}(v)$ and $M_{low2}(v)$, for all vertices $v\neq r$, in total linear time.
\end{lemma}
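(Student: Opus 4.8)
All three quantities $\tilde M(v)$, $M_{low1}(v)$, $M_{low2}(v)$ are produced by the same template in Algorithms~\ref{algorithm:computeM2} and~\ref{algorithm:computeM3}: the search is seeded at the relevant child $c$ of $M(v)$ (the $\mathit{low1}$ child for $\tilde M$ and $M_{low1}$, the $\mathit{low2}$ child for $M_{low2}$), it jumps to $\mathit{currentM}[c]$, and from then on it repeatedly inspects the current vertex $m$ with the tests ``$l(m)<v$'' and ``$\mathit{low}(c_2)<v$'', jumping to $\mathit{currentM}[c_1]$ otherwise, where $c_1,c_2$ are the $\mathit{low1}$, $\mathit{low2}$ children of $m$. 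So the plan is to establish correctness and linear running time for the $M_{low1}$ computation in detail and then observe that the other two loops are handled verbatim, with the obvious adjustments for the two extra early-exit tests of the $\tilde M$-loop.

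The core of the correctness argument is the following invariant on the array $\mathit{currentM}$, which I would prove by induction on the iterations of the main \textbf{for} loop, i.e. on decreasing $v$: \emph{when the loop is about to process $v$, for every vertex $w$ the value $\mathit{currentM}[w]$ is a descendant of $w$ and an ancestor of every $x$ admitting a back-edge $(x,y)$ with $x\in T(w)$ and $y<v$.} The base case is the initialization $\mathit{currentM}[w]=w$, which is trivially an ancestor of everything in $T(w)$. For the inductive step, going from $v$ to $v-1$, only one entry is rewritten while processing $v$, namely $\mathit{currentM}[c]$ with $c$ the $\mathit{low1}$ child of $M(v)$; for every other $w$ the invariant passes over for free, since strengthening the condition ``$y<v$'' to ``$y<v-1$'' only shrinks the set of relevant $x$'s. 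For $w=c$ one uses the correctness claim below that the rewritten value $m$ equals $M_{low1}(v)$; since $c$ is a descendant of $v$, $M_{low1}(v)$ is the nearest common ancestor of those $x\in T(c)$ that have a back-edge $(x,y)$ with $y<v$ (note that for $x\in T(v)$ the condition $y<v$ is equivalent to $(x,y)\in B(v)$, because it forces $y$ to be a proper ancestor of $v$), and because the analogous set for $v-1$ is contained in it, $m$ remains an ancestor of every $x$ relevant to $v-1$. When the entry needed at step $v-1$ was in fact written at an earlier iteration, one applies the same monotonicity to that iteration. This is precisely where Lemma~\ref{lemma:MlowEtc} is being used: it says exactly that the value stored in $\mathit{currentM}$ is a legitimate — not too deep — starting point for the next vertex that will consult it.

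Granting the invariant, I would show that processing $v$ returns $M_{low1}(v)$. If $\mathit{low}(c)\geq v$ then no $x\in T(c)$ carries a back-edge to a proper ancestor of $v$, so $M_{low1}(v)$ is undefined and the output $\emptyset$ is correct. Otherwise the \textbf{while} loop preserves the secondary property that every $x\in T(m)$ with a back-edge to a proper ancestor of $v$ still lies in $T(m)$ and is not skipped by a jump: we enter with $m=\mathit{currentM}[c]$, which by the main invariant is an ancestor of all such $x$ inside $T(c)$; and each time we set $m\leftarrow\mathit{currentM}[c_1]$ we have already checked $l(m)\geq v$ (so $m$ itself carries no such back-edge) and $\mathit{low}(c_2)\geq v$ (so no descendant of the $\mathit{low2}$ child, hence of any child of $m$ other than the $\mathit{low1}$ child $c_1$, carries one), which confines the surviving $x$'s to $T(c_1)$, and then the main invariant confines them to $T(\mathit{currentM}[c_1])$. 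The loop terminates exactly when $l(m)<v$ (so $m$ is itself such an $x$) or $\mathit{low}(c_2)<v$ (so two distinct children of $m$ each contain such an $x$); in both cases the nearest common ancestor of the surviving set is $m$, so $m=M_{low1}(v)$. The $M_{low2}$-loop is identical with $c$ the $\mathit{low2}$ child of $M(v)$, and the $\tilde M$-loop is identical once one checks its two extra exits: it outputs $\emptyset$ when the $\mathit{low1}$ child of $M(v)$ has $\mathit{low}\geq v$ (no proper descendant of $M(v)$ is in $B(v)$), and it outputs $M(v)$ when the $\mathit{low2}$ child of $M(v)$ has $\mathit{low}<v$ (then two children of $M(v)$ contain such an $x$, so the nearest common ancestor is $M(v)$).

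For the running time, each iteration of the main loop is $O(1)$ plus the length of its \textbf{while} loop, so it suffices to bound the total number of \textbf{while}-iterations over all $v$ and over the three sub-loops. Those that terminate a loop number at most one per vertex, hence $O(n)$ in total. Those that execute $m\leftarrow\mathit{currentM}[c_1]$ I would amortize against the downward motion of $\mathit{currentM}$, in the same spirit as the linear-time analysis of Algorithm~\ref{algorithm:computeM}: $\mathit{currentM}[w]$ is rewritten only while processing a vertex $v$ with $w$ the $\mathit{low1}$ (or $\mathit{low2}$) child of $M(v)$, and every rewrite moves it to a proper descendant, so the values it successively takes lie on a single downward path inside $T(w)$; a careful accounting of which vertices can be traversed on these paths, and from which starting children, should bound the grand total by $O(n)$. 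Adding the constantly many $O(n)$-time re-initializations of $\mathit{currentM}$, this gives linear time. I expect this last accounting to be the main obstacle — specifically, ruling out that a single vertex is traversed again and again by descents issued from different children — and it is the place where one must combine the monotonicity of $\mathit{currentM}$, the decreasing processing order, and the invariant above, which guarantees that the stored value already reflects every back-edge that can ever be relevant to the vertices still to be processed.
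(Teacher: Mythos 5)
Your correctness argument is essentially the paper's: the invariant that $\mathit{currentM}[w]$ is a descendant of $w$ and an ancestor of every $x\in T(w)$ carrying a back-edge ending below the vertex currently being processed is exactly the property the paper maintains (via Lemma~\ref{lemma:MlowEtc} and induction on decreasing $v$), and your analysis of the \textbf{while}-loop exits is the same. That half is sound.

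The linearity half, however, has a genuine gap, and it is precisely the one you flag yourself. Charging the \textbf{while}-iterations to ``the downward motion of $\mathit{currentM}$'' does not work directly: during the descent for $v$ the loop \emph{reads} a different entry $\mathit{currentM}[c_1]$ at each step but \emph{writes} only one entry at the end, namely $\mathit{currentM}[c]$ for the starting child $c$ of $M(v)$, which receives the final value $M_{low1}(v)$. The intermediate vertices visited on the way down leave no trace in the data structure, so nothing in your invariant prevents a later descent — issued while processing a proper ancestor $v'$ of $v$ whose search enters $T(c)$ from a higher starting child — from walking through the same intermediate vertices again. What must be proved, and what the paper's proof actually establishes, is a disjointness property of the visited sets: writing $S(v)$ for the vertices assigned to $m$ inside the \textbf{while} loop while processing $v$, one shows $S(v)\cap S(v')\subseteq\{M_{low1}(v)\}$ for $v\neq v'$. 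The paper does this by proving that every vertex visited during the descent for an ancestor $v'$ is either an ancestor of $M(v)$ or a descendant of $M_{low1}(v)$ — i.e.\ the later descent jumps over the entire segment $S(v)$ in one step — via a case analysis on whether $M(v')$ is an ancestor of $M(v)$ or a descendant of $M_{low1}(v)$, followed by an induction along the steps of the later descent. Without this (or an equivalent potential-function argument), the claimed $O(n)$ bound on the total number of iterations is unsupported; a single vertex could a priori be traversed by many descents. So the proposal is correct in outline but incomplete where the real work of the lemma lies.
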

\begin{proof}
Let us show e.g. that Algorithm \ref{algorithm:computeM3} correctly computes all $M_{low1}(v)$, for all $v\neq r$, in total linear time. The proofs for the other cases are similar. So let $v$ be a vertex $\neq r$. Since we are interested in the back-edges $(x,y)\in B(v)$ with $x$ a descendant of the $\mathit{low1}$ child $c$ of $M(v)$, we first have to check whether $\mathit{low}(c)<v$. If $\mathit{low}(c)\geq v$, then there is no such back-edge, and therefore we set $M_{low1}(v)\leftarrow\emptyset$ (in line \ref{algM_null}). If $\mathit{low}(c)<v$, then $M_{low1}(v)$ is defined, and in line \ref{algM_first_assignment} we assign $m$ the value $\mathit{currentM}[c]$. We claim that, at that moment, $\mathit{currentM}[c]$ is an ancestor of $M_{low1}(v)$, and every $\mathit{currentM}[c_1]$ that we will access in the \textbf{while} loop in line \ref{algM_setM} is also an ancestor of $M_{low1}(v)$; furthermore, when we reach line \ref{algM_end}, $\mathit{currentM}[c]$ is assigned $M_{low1}(v)$. It is not difficult to see this inductively. Suppose, then, that this was the case for every vertex $v'>v$, and let us see what happens when we process $v$. Let $c$ be the $\mathit{low1}$ child of $M(v)$. Initially, $\mathit{currentM}[c]$ was set to be $c$. Now, if $\mathit{currentM}[c]$ is still $c$, $M_{low1}(v)$ is a descendant of $c$ (by definition). Otherwise, due to the inductive hypothesis, $\mathit{currentM}[c]$ had been assigned $M_{low1}(v')$ during the processing of a vertex $v'>v$ with $M(v')=M(v)$. This implies that $v'$ is a descendant of $v$, and by Lemma \ref{lemma:MlowEtc} we have that $M_{low1}(v')$ is an ancestor of $M_{low1}(v)$. In any case, then, we have that $m=\mathit{currentM}[c]$ in an ancestor of $M_{low1}(v)$. Now we enter the \textbf{while} loop in line \ref{algM_while1}. If either $l(m)<v$ or $\mathit{low}(c_2)<v$, where $c_2$ is the $\mathit{low2}$ child of $m$, we have that $M_{low1}(v)$ is an ancestor of $m$. Since $m$ is also an ancestor of $M_{low1}(v)$, we correctly set $M_{low1}(v)\leftarrow m$ (in lines \ref{algM_low1} or \ref{algM_low2}). Otherwise, we have that $M_{low1}(v)$ is a descendant of the $\mathit{low1}$ child $c_1$ of $m$. Now, due to the inductive hypothesis, $\mathit{currentM}[c_1]$ is either $c_1$ or $M_{low1}(v')$ for a vertex $v'>v$ with $M(v')=m$. In the first case we obviously have that $\mathit{currentM}[c_1]$ is an ancestor of $M_{low1}(v)$. Now assume that the second case is true, and let $(x,y)$ be a back-edge with $x$ a descendant of $c_1$ and $y$ a proper ancestor of $v$. Then, since $v'>v$ and $v,v'$ have $m$ as a common descendant, we have that $v$ is ancestor of $v'$, and therefore $y$ is a proper ancestor of $v'$. This shows that $x$ is a descendant of $M_{low1}(v')$. Thus, due to the generality of $(x,y)$, we have that $M_{low1}(v)$ is a descendant of $M_{low1}(v')$. In any case, then, we have that $\mathit{currentM}[c_1]$ is an ancestor of $M_{low1}(v)$. Thus we set $m\leftarrow\mathit{currentM}[c_1]$ and we continue the \textbf{while} loop, until we have that $m=M_{low1}(v)$, in which case we will set $\mathit{currentM}[c]\leftarrow m$ in line \ref{algM_end}.
Thus we have proved that Algorithm \ref{algorithm:computeM3} correctly computes $M_{low1}(v)$, for every vertex $v\neq r$, and that, during the processing of a vertex $v$, every $\mathit{currentM}[c]$ that we access is an ancestor of $M_{low1}(v)$ (until, in line \ref{algM_end}, we assign $\mathit{currentM}[c]$ to $M_{low1}(v)$).

Now, to prove linearity, let $S(v)=\{m_1,\dotsc,m_k\}$, ordered increasingly, denote the (possible empty) set of all vertices that we had to descend to before leaving the \textbf{while} loop in lines \ref{algM_while1}-\ref{algM_while2}. (Thus, if $k\geq 1$, $m_k=M_{low1}(v)$.) In other words, $S(v)$ contains all vertices that were assigned to $m$ in line \ref{algM_setM}. We will show that Algorithm \ref{algorithm:computeM3} runs in linear time, by showing that, for every two vertices $v$ and $v'$, $v\neq v'$ implies that $S(v)\cap S(v')\subseteq\{M_{low1}(v)\}$, where we have $S(v)\cap S(v')=\{M_{low1}(v)\}$ only if $M_{low1}(v)=M_{low1}(v')$. Of course, it is definitely the case that $S(v)\cap S(v')=\emptyset$ if $v$ and $v'$ are not related as ancestor and descendant, since the \textbf{while} loop descends to descendants of the vertex under processing. So let $v'$ be a proper ancestor of $v$. If $M_{low1}(v')$ is not a descendant of the $\mathit{low1}$ child $c$ of $M(v)$, then we obviously have $S(v)\cap S(v')=\emptyset$ (since $S(v)$ consists of descendants of $c$, but the \textbf{while} loop during the computation of $M_{low1}(v')$ will not descend to the subtree of $c$). Thus we may assume that $M_{low1}(v')$ is a descendant of $c$. Now, let $S(v')=\{m_1,\dotsc,m_k\}$ and $m_0=\mathit{currentM}[c']$, where $c'$ is the $\mathit{low1}$ child of $M(v')$. We will show that every $m_i$, for every $i\in\{1,\dotsc,k\}$, is either an ancestor of $M(v)$ or a descendant of $M_{low1}(v)$. (This obviously implies that $S(v')\cap S(v)\subseteq\{M_{low1}(v)\}$.) First observe that $M(v')$ is either an ancestor of $M(v)$ or a descendant of $M_{low1}(v)$. To see this, suppose that $M(v')$ is not an ancestor of $M(v)$. Since $M_{low1}(v')$ is a descendant of $c$, there is at least one back-edge $(x,y)$ in $B(v')$ with $x$ a descendant of $c$. Then, since $y$ is a proper ancestor of $v'$ and $v'$ is a proper ancestor of $v$, we have that $(x,y)$ is in $B(v)$, and therefore $x$ is a descendant of $M_{low1}(v)$. Now let $(x',y')$ be a back-edge in $B(v')$. If $x'$ is a descendant of a vertex in $T[c,v']$, but not a descendant of $c$, then the nearest common ancestor of $x$ and $x'$ is in $T[M(v),v']$, and therefore $M(v')$ is an ancestor of $M(v)$, contradicting our supposition. Thus, $x'$ is a descendant of $c$. Furthermore, $y'$ is a proper ancestor of $v$, and therefore $(x',y')\in B(v)$. Thus, $x'$ is a descendant of $M_{low1}(v)$. Due to the generality of $(x',y')\in B(v')$, we conclude that $M(v')$ is a descendant of $M_{low1}(v)$. Thus we have shown that $M(v')$ is either an ancestor of $M(v)$ or a descendant of $M_{low1}(v)$.

Now, if $M(v')$ is a descendant of $M_{low1}(v)$, we obviously have $S(v)\cap S(v')=\emptyset$. Let's assume, then, that $M(v')$ is an ancestor of $M(v)$. If $M(v')$ coincides with $M(v)$, then $c'=c$, and so $m_0$ coincides with $\mathit{currentM}[c]$, which is a descendant of $M_{low1}(v)$ (since $M_{low1}(v)$ has already been calculated), and therefore every $m_i$, for every $i\in\{1,\dotsc,k\}$, is a proper descendant of $M_{low1}(v)$ (since $m_1$, if it exists, is a proper descendant of $m_0$), and so we have $S(v')\cap S(v)=\emptyset$.  So let's assume that $M(v')$ is a proper ancestor of $M(v)$. Then, $c'$ is an ancestor of $M(v)$. Suppose that $m_0$ is not an ancestor of $M(v)$. This means that $\mathit{currentM}[c']\neq c'$, and therefore there is a vertex $\tilde{v}>v'$ with $M(\tilde{v})=M(v')$ and $M_{low1}(\tilde{v})=\mathit{currentM}[c']$. Furthermore, since $m_0$ is not an ancestor of $M(v)$, it must be a descendant of $c$. Now, since $v'$ is an ancestor of $v$ and $M(v')$ is a proper ancestor of $M(v)$, Lemma \ref{lemma:Muv} implies that $M(v')$ is a proper ancestor of $v$. Since $M(v')=M(\tilde{v})$, this implies that $M(\tilde{v})$ is a proper ancestor of $v$, and therefore $\tilde{v}$ is a proper ancestor of $v$. Now let $(x,y)$ be a back-edge in $B(\tilde{v})$ such that $x$ is a descendant of $M_{low1}(\tilde{v})=\mathit{currentM}[c']=m_0$. Then, since $m_0$ is a descendant of $c$, $x$ is also  descendant of $c$. Furthermore, since $\tilde{v}$ is an ancestor of $v$, $y$ is a proper ancestor of $v$. This shows that $x$ is a descendant of $M_{low1}(v)$. Due to the generality of $(x,y)$, we conclude that $M_{low1}(\tilde{v})$ is a descendant of $M_{low1}(v)$. Thus we have shown that $m_0$ is either an ancestor of $M(v)$ or a descendant of $M_{low1}(v)$.

Now let's assume that $m_i$ is either an ancestor of $M(v)$ or a descendant of $M_{low1}(v)$, for some $i\in\{0,\dotsc,k-1\}$. We will prove that the same is true for $m_{i+1}$. If $m_i$ is a descendant of $M_{low1}(v)$, then the same is true for $m_{i+1}$. Let's assume, then, that $m_i$ is an ancestor of $M(v)$. Now we have that $m_{i+1}=\mathit{currentM}[c_1]$, where $c_1$ is the $\mathit{low1}$ child of $m_i$. If $m_i=M(v)$, then we have $c_1=c$, and therefore $\mathit{currentM}[c_1]=\mathit{currentM}[c]$ is a descendant of $M_{low1}(v)$ (since $M_{low1}(v)$ has already been computed). Suppose, then, that $m_i$ is a proper ancestor of $M(v)$. Then, $c_1$ is an ancestor of $M(v)$. If $\mathit{currentM}[c_1]=c_1$, we obviously have that $\mathit{currentM}[c_1]$ is an ancestor of $M(v)$. Otherwise, if $\mathit{currentM}[c_1]\neq c_1$, there is a vertex $\tilde{v}$ such that $M(\tilde{v})=m_i$ and $\mathit{currentM}[c_1]=M_{low1}(\tilde{v})$. Assume, first, that $\tilde{v}$ is an ancestor of $v$. Suppose that $M_{low1}(\tilde{v})$ is not an ancestor of $M(v)$. Then it must be a descendant of $c$. Let $(x,y)$ be a back-edge in $B(\tilde{v})$ with $x$ a descendant of $M_{low1}(\tilde{v})$. Then $x$ is a descendant of $c$. Furthermore, $y$ is a proper ancestor of $\tilde{v}$, and therefore a proper ancestor of $v$. This shows that $x$ is a descendant of $M_{low1}(v)$. Due to the generality of $(x,y)$, we conclude that $M_{low1}(\tilde{v})$ is a descendant of $M_{low1}(v)$. Thus, if $\tilde{v}$ is an ancestor of $v$, $M_{low1}(\tilde{v})$ is either an ancestor of $M(v)$ or a descendant of $M_{low1}(v)$. Suppose, now, that $\tilde{v}$ is a descendant of $v$. Let $(x,y)$ be a back-edge in $B(v)$. Then, $x$ is a descendant of $M(v)$, and therefore a descendant of $c_1$. Furthermore, $y$ is a proper ancestor of $v$, and therefore a proper ancestor of $\tilde{v}$. This shows that $x$ is a descendant of $M_{low1}(\tilde{v})$. Due to the generality of $(x,y)$, we conclude that $M(v)$ is a descendant of $M_{low1}(\tilde{v})$. In any case, then, $m_{i+1}$ is either an ancestor of $M(v)$ or a descendant of $M_{low1}(v)$. Thus, $S(v)\cap S(v')\subseteq\{M_{low1}(v)\}$ is established.
\end{proof}

\section{Computing the $3$-cuts of a $3$-edge-connected graph}
\label{sec:3-cuts}

In this section we present a linear-time algorithm that computes all the $3$-edge-cuts of a $3$-edge-connected graph $G=(V,E)$.
It is well-known that the number of the $3$-edge-cuts of $G$ is $O(n)$~\cite{connectivity:nagamochi-ibaraki} (e.g., it follows from the definition of the cactus graph~\cite{Cactus,KP:Cactus}), but we provide an independent proof of this fact.
Then, in Section~\ref{sec:reduction}, we show how to extend this algorithm so that it can also count the number of minimal $3$-edge-cuts of a general graph. Note that there can be $O(n^3)$ such cuts~\cite{3cuts:Dinitz}.

%
Our method is to classify the $3$-cuts on the DFS-tree $T$ in a way that allows us to compute them efficiently. If $\{e_1,e_2,e_3\}$ is a $3$-cut, we can initially distinguish three cases: either $e_1$ is a tree-edge and both $e_2$ and $e_3$ are back-edges (section \ref{section:one_tree_edge}), or $e_1$ and $e_2$ are two tree-edges and $e_3$ is a back-edge (section \ref{section:two_tree_edges}), or $e_1$, $e_2$ and $e_3$ is a triplet of tree-edges (section \ref{section:three_tree_edges}). Then, we divide those cases in subcases based on the concepts we have introduced in the previous section. Figure \ref{figure:all_cases} gives a general overview of the cases we will handle in detail in the following sections.

\begin{figure}
\begin{center}
\centerline{\includegraphics[trim={0cm 14cm 0cm 0cm},clip, width=0.9\textwidth]{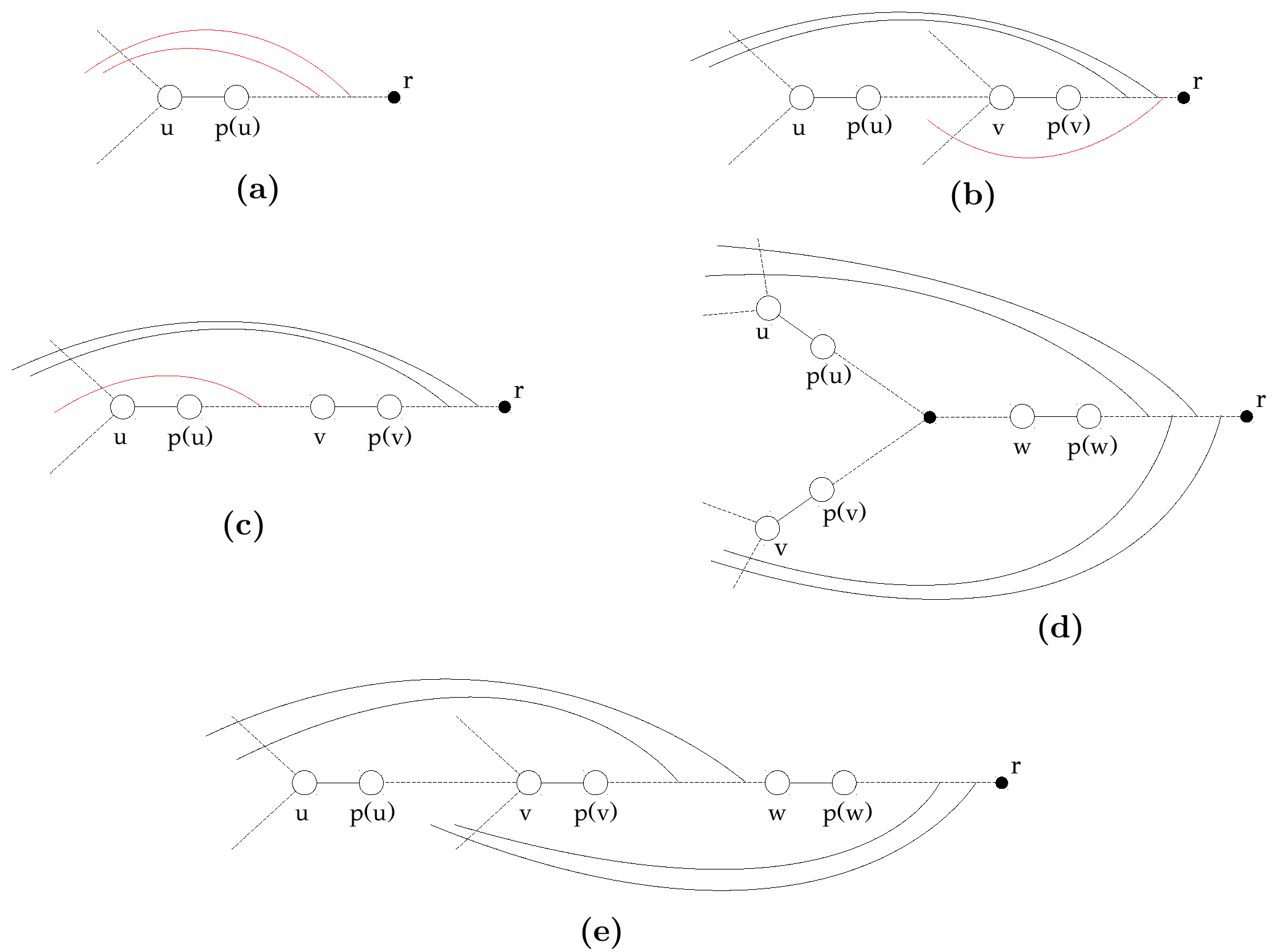}}
\caption{The types of $3$-cuts with respect to a DFS-tree. \textbf{(a)} One tree-edge $(u,p(u))$ and two back-edges (section \ref{section:one_tree_edge}). \textbf{(b)} Two tree-edges $(u,p(u))$ and $(v,p(v))$, where $u$ is a descendant of $v$, and one-back edge in $B(v)\setminus B(u)$ (section \ref{section:uv}). \textbf{(c)} Two tree-edges $(u,p(u))$ and $(v,p(v))$, where $u$ is a descendant of $v$, and one-back edge in $B(u)\setminus B(v)$ (section \ref{section:vu}). \textbf{(d)} Three tree-edges $(u,p(u))$, $(v,p(v))$ and $(w,p(w))$, where $w$ is an ancestor of $u$ and $v$, but $u$ and $v$ are not related as ancestor and descendant (section \ref{section:wuv}). \textbf{(d)} Three tree-edges $(u,p(u))$, $(v,p(v))$ and $(w,p(w))$, where $u$ is a descendant of $v$ and $v$ is a descendant of $w$ (section \ref{section:uvw}).}
\label{figure:all_cases}
\end{center}
\end{figure}

\subsection{One tree-edge and two back-edges}
\label{section:one_tree_edge}

\begin{lemma}
Let $\{(u,p(u)),e,e'\}$ be a $3$-cut such that $e$ and $e'$ are back-edges. Then $B(u)=\{e,e'\}$. Conversely, if for a vertex $u\neq r$ we have $B(u)=\{e,e'\}$ where $e$ and $e'$ are back-edges, then $\{(u,p(u)),e,e'\}$ is a $3$-cut.
\end{lemma}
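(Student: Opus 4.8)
The plan is to exploit the basic structural meaning of $B(u)$: removing the tree-edge $(u,p(u))$ disconnects $T(u)$ from the rest of $G$ precisely up to the back-edges in $B(u)$, since every edge leaving $T(u)$ is either $(u,p(u))$ itself or a back-edge in $B(u)$ (a back-edge with both endpoints inside $T(u)$ does not leave the subtree, and there are no cross edges in a DFS tree). I will use this observation in both directions.

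For the forward direction, suppose $\{(u,p(u)),e,e'\}$ is a $3$-cut with $e,e'$ back-edges. Removing $(u,p(u))$ alone does not disconnect $G$ (as $G$ is $3$-edge-connected, in fact $2$-edge-connected suffices), so the component structure of $G\setminus\{(u,p(u))\}$ has $T(u)$ still attached via the back-edges of $B(u)$. Since further removing $e,e'$ disconnects the graph, and the only way to separate $T(u)$ from the rest after deleting $(u,p(u))$ is to delete all of $B(u)$, we get $B(u)\subseteq\{e,e'\}$; hence $|B(u)|\le 2$. On the other hand $|B(u)|\ge 2$ by Fact~\ref{fact:3edgeconn} (since $u\neq r$), and moreover $e,e'\in B(u)$ would follow once we know $B(u)$ has exactly two elements — I should argue that $e,e'$ must actually be the two back-edges of $B(u)$: if, say, $e\notin B(u)$, then deleting $\{(u,p(u)),e,e'\}$ still leaves $T(u)$ connected to the rest through the (nonempty, in fact $\ge 1$) remaining edges of $B(u)\setminus\{e'\}$, and one checks this does not disconnect $G$ — contradiction. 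So $B(u)=\{e,e'\}$.

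For the converse, suppose $B(u)=\{e,e'\}$ for some $u\neq r$ with $e,e'$ back-edges. Then the edges leaving $T(u)$ are exactly $(u,p(u))$, $e$, and $e'$, so removing all three separates $T(u)$ from $V\setminus T(u)$; both parts are nonempty ($u\in T(u)$ and $r\notin T(u)$ since $u\neq r$), so $\{(u,p(u)),e,e'\}$ is indeed an edge cut, of size $3$.

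The step that needs the most care is pinning down that $e$ and $e'$ coincide with the two elements of $B(u)$ rather than merely containing $B(u)$ or being a superset situation — i.e., ruling out that one of $e,e'$ lies outside $B(u)$ while $B(u)$ happens to be a singleton or the cut is ``wasteful.'' This is handled by the $3$-edge-connectivity lower bound $|B(u)|\ge 2$ together with the observation that any proper subset of the edges leaving $T(u)$ fails to disconnect $T(u)$ from the rest (and there is no other way to disconnect $G$ using only these three edges, since a minimal cut separates into exactly two sides). I would phrase this cleanly by first establishing the general principle ``the minimal cuts of $G$ that include $(u,p(u))$ are exactly $\{(u,p(u))\}\cup B(u)$ when that set has the right size'' and then specializing.
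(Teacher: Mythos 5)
Your proof is correct and takes essentially the same route as the paper's: both rest on the single observation that after deleting $(u,p(u))$ the edges joining $T(u)$ to the rest of the graph are exactly those of $B(u)$ (since $e,e'$ are back-edges, the remaining tree edges keep each side internally connected), so the triple disconnects $G$ precisely when $B(u)=\{e,e'\}$. The paper's own proof is in fact terser than yours, declaring the equivalence obvious once this observation is made, whereas you spell out the use of $|B(u)|\ge 2$ to pin down $B(u)=\{e,e'\}$ in the forward direction.
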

\begin{proof}
After removing the tree-edge $(u,p(u))$, the edges that connect $T(u)$ with the rest of the graph are precisely those contained in $B(u)$. Let $e$ and $e'$ be two back-edges in $B(u)$. Then it is obvious that $\{(u,p(u)),e,e'\}$ is a $3$-cut if and only if $B(u)$ consists precisely of these two back-edges.
\end{proof}

Thus, to find all $3$-cuts of the form $\{(u,p(u)),e,e'\}$, where $e$ and $e'$ are back-edges, we only have to store, for every vertex $u$, two back-edges $e,e'\in B(u)$. Since $(\mathit{low1D}(u),\mathit{low1}(u))$ and  $(\mathit{low2D}(u),\mathit{low2}(u))$ are two such back-edges, we mark the triplet $\{(u,p(u)),(\mathit{low1D}(u),\mathit{low1}(u)),\linebreak (\mathit{low2D}(u),\mathit{low2}(u))\}$, for every $u$ that has $\mathit{b\_count}(u)=2$.

\subsection{Two tree-edges and one back-edge}
\label{section:two_tree_edges}

\begin{lemma}
\label{lemma:relation_of_uv}
Let $\{(u,p(u)),(v,p(v)),e\}$ be a $3$-cut such that $e$ is a back-edge. Then $u$ and $v$ are related as ancestor and descendant.
\end{lemma}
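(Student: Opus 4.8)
The plan is to argue by contradiction: suppose $\{(u,p(u)),(v,p(v)),e\}$ is a $3$-cut with $e=(x,y)$ a back-edge, and suppose $u$ and $v$ are \emph{not} related as ancestor and descendant in $T$. The key structural observation is that removing two tree-edges $(u,p(u))$ and $(v,p(v))$ from the DFS-tree $T$ breaks $V$ into exactly three tree-components when $u$ and $v$ are unrelated (namely $T(u)$, $T(v)$, and $V\setminus(T(u)\cup T(v))$), whereas it breaks $V$ into three tree-components \emph{in a nested configuration} when $u$ is a proper descendant of $v$ (namely $T(u)$, $T(v)\setminus T(u)$, and $V\setminus T(v)$). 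In the unrelated case, the only edge we are allowed to delete besides the two tree-edges is the single back-edge $e$, and I would show this cannot possibly disconnect the graph, giving the contradiction.

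Concretely, in the unrelated case, consider the three tree-components $A=T(u)$, $B=T(v)$, and $C=V\setminus(A\cup B)$. Note $r\in C$ (since $r$ is an ancestor of everything, and $r\notin T(u)\cup T(v)$ as $u,v\neq r$; this uses $3$-edge-connectivity so $u,v\neq r$, or more simply one observes $r$ cannot be in $T(u)$ or $T(v)$ unless it equals $u$ or $v$ — and if $u=r$ then $(u,p(u))$ is not an edge). Since $G$ is $3$-edge-connected (indeed $2$-edge-connected suffices here), by Fact~\ref{fact:3edgeconn} we have $|B(u)|>1$ and $|B(v)|>1$; that is, there are at least two back-edges joining $T(u)$ to its complement, and at least two joining $T(v)$ to its complement. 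Each such back-edge, having both endpoints in ancestor-descendant relation, runs from $A$ to $C$ (for the back-edges in $B(u)$) or from $B$ to $C$ (for those in $B(v)$) — it cannot run from $A$ to $B$ because $u$ and $v$ are unrelated, so no vertex of $T(u)$ is an ancestor of a vertex of $T(v)$ or vice versa. Deleting the single back-edge $e$ can destroy at most one of these connections. Hence after deleting $(u,p(u))$, $(v,p(v))$, and $e$, component $A$ still has a back-edge to $C$ and component $B$ still has a back-edge to $C$ (at least one of the $\ge 2$ in each bundle survives), so the graph remains connected — contradicting that $\{(u,p(u)),(v,p(v)),e\}$ is a cut.

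The one subtlety to handle carefully is the corner case $u=v$, which is trivially an ancestor-descendant relation (a vertex is its own ancestor) but would make the "two tree-edges" degenerate; this is excluded since the three edges of a $3$-cut are distinct, so $(u,p(u))\neq(v,p(v))$ forces $u\neq v$. I expect the main (mild) obstacle is just being precise about \emph{where the back-edges of $B(u)$ and $B(v)$ land}: one must note that a back-edge in $B(u)$ goes from a descendant of $u$ to a proper ancestor of $u$, and a proper ancestor of $u$ lies outside $T(u)$, and — crucially — also outside $T(v)$ when $u,v$ are unrelated (a proper ancestor $y$ of $u$ satisfies $y<u$ and $u\notin T(v)$, and if $y\in T(v)$ then $v\le y<u$ with $u$ a descendant of... no: one checks $y\in T(v)$ would force $y$ and $u$ to both be descendants of $v$, but $u\notin T(v)$; however $y$ being an ancestor of $u$ and a descendant of $v$ would make $v$ an ancestor of $u$, contradiction). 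Once this placement of back-edges is pinned down, the counting argument above closes the proof immediately.
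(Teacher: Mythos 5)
Your proof is correct and rests on the same two key facts as the paper's proof: $|B(u)|>1$ (from Fact~\ref{fact:3edgeconn}) and the observation that a back-edge in $B(u)$ ends at a proper ancestor of $u$, which lies outside both $T(u)$ and $T(v)$ when $u$ and $v$ are unrelated; the paper phrases the contradiction locally (a back-edge in $B(u)\setminus\{e\}$ keeps $u$ connected to $p(u)$, so one edge of the cut is redundant), while you argue globally that the three tree components $T(u)$, $T(v)$, $V\setminus(T(u)\cup T(v))$ remain mutually connected, which is a slightly more self-contained route to the same contradiction. One correction: your parenthetical claim that $2$-edge-connectivity would suffice is false --- if $B(u)=\{e\}$ then $\{(u,p(u)),(v,p(v)),e\}$ is a (non-minimal) $3$-cut with $u$ and $v$ possibly unrelated --- so the bounds $|B(u)|>1$ and $|B(v)|>1$, and hence $3$-edge-connectivity, are genuinely needed.
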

\begin{proof}
Suppose that $u$ and $v$ are not related as ancestor or descendant. Since the graph is $3$-edge-connected, $\mathit{b\_count}(u)>1$, and therefore there is least one back-edge $(x,y)\in B(u)\setminus\{e\}$. Since $v$ is not a descendant of $u$, $v\notin T[x,u]$; and since $v$ is not an ancestor of $u$, $v\notin T[p(u),y]$. Thus, by removing the edges $(u,p(u))$, $(v,p(v))$, and $e$, from the graph, $u$ remains connected with $p(u)$, through the path $T[u,x],(x,y),T[p(u),y]$. This contradicts that fact that $\{(u,p(u)),(v,p(v)),e\}$ is a $3$-cut.
\end{proof}

\begin{proposition}
\label{proposition:B(u)B(v)}
Let $\{(u,p(u)),(v,p(v)),e\}$ be a $3$-cut, where $e$ is a back-edge. Then, either $(1)$ $B(v)=B(u)\sqcup \{e\}$ or $(2)$ $B(u)=B(v)\sqcup \{e\}$. Conversely, if there exists a back-edge $e$ such that $(1)$ or $(2)$ is true, then $\{(u,p(u)),(v,p(v)),e\}$ is a $3$-cut.
\end{proposition}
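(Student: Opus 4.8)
The plan is to turn the statement into the combinatorics of a three-block quotient of $G$. By Lemma~\ref{lemma:relation_of_uv}, $u$ and $v$ are related as ancestor and descendant, and since the three listed edges are distinct we have $u\neq v$; so assume first that $u$ is a proper descendant of $v$. Partition $V$ into the blocks $A:=V\setminus T(v)$ (which contains $r$), $D:=T(v)\setminus T(u)$, and $C:=T(u)$; each of these is connected using tree-edges alone (removing a subtree from a tree leaves trees). After deleting the tree-edges $(u,p(u))$ and $(v,p(v))$, every remaining inter-block edge is a back-edge, and a short case check on its endpoints shows that the $A$--$C$ back-edges are exactly $X:=B(u)\cap B(v)$, the $D$--$C$ back-edges are exactly $Y:=B(u)\setminus B(v)$, and the $A$--$D$ back-edges are exactly $Z:=B(v)\setminus B(u)$. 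In particular $B(u)=X\sqcup Y$ and $B(v)=X\sqcup Z$.

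The next step is the elementary observation that a triangle on three nodes is connected precisely when at most one of its three sides carries no edge. Applied to the quotient on $\{A,D,C\}$ with edge multisets $X\setminus\{e\}$, $Y\setminus\{e\}$, $Z\setminus\{e\}$, this says: for any back-edge $e$, the graph $G\setminus\{(u,p(u)),(v,p(v)),e\}$ is disconnected if and only if at least two of $X\setminus\{e\}$, $Y\setminus\{e\}$, $Z\setminus\{e\}$ are empty.

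For the forward direction, suppose $\{(u,p(u)),(v,p(v)),e\}$ is a $3$-cut, so by the observation at least two of $X\setminus\{e\}$, $Y\setminus\{e\}$, $Z\setminus\{e\}$ are empty. The pair $\{X\setminus\{e\},Y\setminus\{e\}\}$ cannot be the empty one: since $X$ and $Y$ are disjoint, at most one of them contains $e$, so one of $X,Y$ would be empty and the other of size $\leq 1$, giving $|B(u)|\leq 1$ and contradicting Fact~\ref{fact:3edgeconn}; symmetrically $\{X\setminus\{e\},Z\setminus\{e\}\}$ is excluded via $|B(v)|\geq 2$. Hence $Y\setminus\{e\}=Z\setminus\{e\}=\emptyset$. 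Now $Y\cap Z=\emptyset$, and $Y,Z$ are not both empty (that would give $B(u)=X=B(v)$, again contradicting Fact~\ref{fact:3edgeconn}), so exactly one of $Y,Z$ equals $\{e\}$ and the other is empty: if $Y=\{e\}$ then $B(u)=B(v)\sqcup\{e\}$, and if $Z=\{e\}$ then $B(v)=B(u)\sqcup\{e\}$.

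For the converse, assume $B(v)=B(u)\sqcup\{e\}$ (the case $B(u)=B(v)\sqcup\{e\}$ is symmetric). Then $B(u)\subseteq B(v)$ and $B(u)\neq\emptyset$, so $u$ and $v$ have a common descendant and hence are comparable. If $u$ is a proper descendant of $v$, then with $X,Y,Z$ as above the hypothesis forces $Y=\emptyset$ and $Z=\{e\}$, so after deleting the three edges the block $D$ (nonempty, as $v\in D$) has no edge to $A$ and no edge to $C$, whence $G$ becomes disconnected; if instead $v$ is a proper descendant of $u$, the identical argument with the blocks $V\setminus T(u)$, $T(u)\setminus T(v)$, $T(v)$ isolates the nonempty block $T(u)\setminus T(v)$. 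Either way the triple is a $3$-cut. I expect the main obstacle to be the bookkeeping in the first paragraph --- pinning down that the inter-block back-edges are exactly $X$, $Y$, $Z$ --- together with the point in the converse that the labelling of cases $(1)$ and $(2)$ does not by itself fix which of $u,v$ is the ancestor, so both orientations must be treated.
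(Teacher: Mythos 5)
Your proof is correct, and it is organized differently from the paper's. The paper argues each direction separately and locally: the forward implication is proved by contradiction, exhibiting for each hypothetical stray back-edge an explicit replacement path $T[v,x],(x,y),T[y,p(v)]$ (or the analogue for $u$) that would keep the graph connected, and the converse is proved by directly exhibiting the separation of $T(v)\setminus T(u)$ from the rest. You instead set up the three-block partition $A=V\setminus T(v)$, $D=T(v)\setminus T(u)$, $C=T(u)$ once, verify that the crossing back-edges are exactly $X=B(u)\cap B(v)$, $Y=B(u)\setminus B(v)$, $Z=B(v)\setminus B(u)$, and reduce both implications to the single two-way observation that a graph with three internally connected blocks is disconnected precisely when at least two of the three crossing sets are empty; after that, both directions are set-theoretic bookkeeping driven by $|B(u)|\geq 2$, $|B(v)|\geq 2$ and $B(u)\neq B(v)$ from Fact~\ref{fact:3edgeconn} --- the same facts the paper uses, but invoked through counting rather than path construction. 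Your approach buys a symmetric, reusable characterization (the same triangle criterion underlies Propositions~\ref{lemma:wuv_1} and~\ref{proposition:uvw} as well) and makes it transparent that the one extra edge $e$ must live in $Y\cup Z$; the paper's version is more elementary per step and matches the path-based style of the surrounding lemmas. You are also right that the converse needs both orientations of the ancestor relation to be addressed, since conditions $(1)$ and $(2)$ do not determine which of $u,v$ is the ancestor; the paper dispatches this with a symmetry remark, and your explicit second block decomposition is an equally valid way to close it.
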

\begin{proof}
($\Rightarrow$) By Lemma \ref{lemma:relation_of_uv}, we may assume, without loss of generality, that $v$ is an ancestor of $u$. Now, suppose that $(1)$ does not hold; we will prove that $(2)$ does. Since $(1)$ is not true, there must exist a back-edge $e'$ such that $e'\in B(v)$ and $e'\notin B(u)\cup\{e\}$, or $e'\notin B(v)$ and $e'\in B(u)\cup\{e\}$. Suppose the first is true: that is, there exists a back-edge $(x,y)$ such that $(x,y)\in B(v)$ and $(x,y)\notin B(u)\cup\{e\}$. Then $y$ is an ancestor of $v$, and therefore an ancestor of $u$. But, since $(x,y)\notin B(u)$, $x$ cannot be a descendant of $u$, and thus it belongs to $T(v)\setminus T(u)$. Now, by removing the edges $(u,p(u))$, $(v,p(v))$ and $e$ from the graph, we can see that $v$ remains connected with $p(v)$ through the path $T[v,x],(x,y),T[y,p(v)]$. This contradicts the fact that $\{(u,p(u)),(v,p(v)),e\}$ is a $3$-cut. Thus we have shown that there exists a back-edge $e'$ such that $e'\notin B(v)$ and $e'\in B(u)\cup\{e\}$, and also that $B(v)\subseteq B(u)\cup\{e\}$. Now, suppose that there exists a back-edge $(x,y)\neq e$ such that $(x,y)\notin B(v)$ and $(x,y)\in B(u)$. Then $x$ is a descendant of $u$, and therefore a descendant of $v$. But, since $(x,y)\notin B(v)$, $y$ is not a proper ancestor of $v$, and thus belongs to $T[p(u),v]$. Now, by removing the edges $(u,p(u))$, $(v,p(v))$ and $e$ from the graph, we can see that $u$ remains connected with $p(u)$ through the path $T[u,x],(x,y),T[y,p(u)]$. This contradicts the fact that $\{(u,p(u)),(v,p(v)),e\}$ is a $3$-cut. Thus we have shown that $e$ is the unique back-edge such that $e\notin B(v)$ and $e\in B(u)$, and also that $B(u)\subseteq B(v)\cup\{e\}$. In conjunction with $B(v)\subseteq B(u)\cup\{e\}$, this implies that $B(u)=B(v)\sqcup\{e\}$.\\
($\Leftarrow$) First, observe that both $(1)$ and $(2)$ imply that $u$ and $v$ are related as ancestor and descendant: Since the graph is $2$-edge-connected, we have $\mathit{b\_count}(x)>0$, for every vertex $x\neq r$; and whenever we have $B(u)\cap B(v)\neq\emptyset$, for two vertices $u$ and $v$, (and such is the case if either $(1)$ or $(2)$ is true), we can infer that $u$ and $v$ are related as ancestor and descendant. Now, due to the symmetry of the relations $(1)$ and $(2)$, we may assume, without loss of generality, that $v$ is an ancestor of $u$. Let's assume first that $(1)$ is true, and let $e=(x,y)$. Since $(x,y)\in B(v)$, $y$ is a proper ancestor of $v$, and therefore a proper ancestor of $u$. But, since $(x,y)\notin B(u)$, $x$ cannot be a descendant of $u$, and thus it belongs to $T(v)\setminus T(u)$. Furthermore, this is the only back-edge that starts from $T(v)\setminus T(u)$ and ends in a proper ancestor of $v$, since $B(v)\setminus\{e\}=B(u)$. Thus we can see that, by removing the edges $(u,p(u))$, $(v,p(v))$ and $e$ from the graph, the graph becomes disconnected. (For the subgraph $T(v)\setminus T(u)$ becomes disconnected from $T(u)\cup (T(r)\setminus T(v))$.) Now assume that $(2)$ is true, and let $e=(x,y)$. Since $(x,y)\in B(u)$, $x$ is a descendant of $u$, and therefore a descendant of $v$. But, since $(x,y)\notin B(v)$, $y$ is not a proper ancestor of $v$, and thus it belongs to $T[p(u),v]$. Furthermore, it is the only back-edge that starts from $T(u)$ and ends in $T[p(u),v]$, since $B(u)\setminus\{e\}=B(v)$. Thus we can see that, by removing the edges $(u,p(u))$, $(v,p(v))$ and $e$ from the graph, the graph becomes disconnected. (For the subgraph $T(v)\setminus T(u)$ becomes disconnected from $T(u)\cup (T(r)\setminus T(v))$.)
\end{proof}

Here we distinguish two cases, depending on whether $B(v)=B(u)\sqcup\{e\}$ or $B(u)=B(v)\sqcup\{e\}$.

\subsubsection{$v$ is an ancestor of $u$ and $B(v)=B(u)\sqcup\{e\}$.}
\label{section:uv}

Throughout this section let $V(u)$ denote the set of vertices $v$ that are ancestors of $u$ and such that $B(v)=B(u)\sqcup\{e\}$, for a back-edge $e$. By proposition \ref{proposition:B(u)B(v)}, this means that $\{(u,p(u)),(v,p(v)),e\}$ is a $3$-cut. The following lemma shows that, for every vertex $v$, there is at most one vertex $u$ such that $v\in V(u)$.

\begin{lemma}
Let $u,u'$ be two distinct vertices. Then $V(u)\cap V(u')=\emptyset$.
\end{lemma}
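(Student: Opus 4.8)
The plan is to argue by contradiction: suppose there is a vertex $v$ with $v \in V(u) \cap V(u')$ for two distinct vertices $u \neq u'$. By the definition of $V(\cdot)$, $v$ is an ancestor of both $u$ and $u'$, and there are back-edges $e$ and $e'$ with $B(v) = B(u) \sqcup \{e\}$ and $B(v) = B(u') \sqcup \{e'\}$. The first task is to pin down the relationship between $u$ and $u'$. Since $B(u) \subseteq B(v)$ and $B(u') \subseteq B(v)$, and both $B(u)$ and $B(u')$ are nonempty (the graph is $3$-edge-connected, so in fact $|B(u)|, |B(u')| > 1$), we have $B(u) \cap B(u') \neq \emptyset$ as long as $B(u)$ and $B(u')$ are not disjoint — but even if they share only elements of $B(v)$, I should check: $B(u) = B(v)\setminus\{e\}$ and $B(u') = B(v)\setminus\{e'\}$, so $B(u)\cap B(u') = B(v)\setminus\{e,e'\}$, which is nonempty since $|B(v)| = |B(u)|+1 > 2$. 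As noted in the proof of Proposition \ref{proposition:B(u)B(v)}, whenever two vertices have intersecting $B$-sets they are related as ancestor and descendant. So without loss of generality $u'$ is a proper ancestor of $u$ (they are distinct), and $v$ is an ancestor of $u'$.

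Now I would extract the contradiction from the counting. From $B(v) = B(u) \sqcup \{e\}$ and $B(v) = B(u') \sqcup \{e'\}$ we get $|B(u)| = |B(u')| = |B(v)| - 1$. But $u'$ is a proper ancestor of $u$, and since $B(u) \subseteq B(v) = B(u') \sqcup \{e'\}$ while also every back-edge in $B(u)$ whose lower endpoint is a proper ancestor of $u$ is in particular a back-edge whose lower endpoint is a proper ancestor of... wait, that's the wrong direction. The cleaner route: since $u'$ is a proper ancestor of $u$, consider $B(u')$ versus $B(u)$ directly. A back-edge $(x,y) \in B(u')$ has $x$ a descendant of $u'$ and $y$ a proper ancestor of $u'$; such an edge need not be in $B(u)$. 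Conversely, a back-edge $(x,y)\in B(u)$ has $x$ a descendant of $u$ hence of $u'$, and $y$ a proper ancestor of $u$, but $y$ might lie in $T[p(u), u')$ and then fail to be in $B(u')$. So neither containment is automatic. However, I can use $B(u) = B(v)\setminus\{e\}$ and $B(u') = B(v)\setminus\{e'\}$: if $e = e'$ then $B(u) = B(u')$, which contradicts Fact \ref{fact:3edgeconn} (the graph is $3$-edge-connected so distinct vertices have distinct $B$-sets), forcing $u = u'$, a contradiction. So $e \neq e'$, and then $B(u) = B(v) \setminus \{e\} \ni e'$ while $B(u') = B(v)\setminus\{e'\} \ni e$.

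The remaining work is to see why $e' \in B(u)$ and $e \in B(u')$ together with the ancestor-descendant structure is impossible. Since $u'$ is a proper ancestor of $u$, I claim $B(u) \subseteq B(u') \cup \{\text{back-edges with lower end in } T[p(u),u')\}$ — more useful is to observe that $B(u') \cap B(u)$ consists exactly of those back-edges of $B(u)$ whose lower endpoint is a proper ancestor of $u'$. Let me instead directly compare: $e' \in B(u)$ means $e'$ has its high endpoint (the lower vertex) a proper ancestor of $u$, hence $e' = (x', y')$ with $y'$ a proper ancestor of $u$; and $e' \notin B(u')$ forces $y'$ to not be a proper ancestor of $u'$, so $y' \in T[p(u), u')$ — in particular $y'$ is a proper descendant of $u'$ or equals... no, $y'$ is a proper ancestor of $u$ and not a proper ancestor of $u'$, and since $u'$ is an ancestor of $u$, this means $y'$ lies strictly between $u'$ (exclusive) and $u$ (exclusive) on the tree path, or $y' = u'$. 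Similarly $e = (x,y) \in B(u')$ gives $y$ a proper ancestor of $u'$, hence a proper ancestor of $u$, so $e \in B(u)$ — but that contradicts $e \notin B(u)$! That is the contradiction. So I expect the main obstacle to be setting up the ancestor/descendant reduction cleanly and then carefully tracking that $e \in B(u') \Rightarrow e$ has lower endpoint a proper ancestor of $u' \Rightarrow$ a proper ancestor of $u \Rightarrow e \in B(u)$ (since $e$'s upper endpoint is a descendant of $u'$ — here I need $e$'s upper endpoint to be a descendant of $u$ as well, which requires a short additional argument using that $e \in B(v) = B(u)\sqcup\{e\}$... actually $e \in B(v)$ only tells me the upper endpoint is a descendant of $v$; I will need to use $B(u) = B(v)\setminus\{e\}$ more carefully, perhaps via $\tilde M$ or via the observation that the upper endpoints of all edges in $B(v)$ are descendants of $M(v)$, and relate $M(v)$ to $u$). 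Nailing down that the upper endpoint of $e$ is a descendant of $u$ will be the delicate point; one approach is to note that if it weren't, then removing $\{(u,p(u)),(u',p(u')),e'\}$ leaves $u$ connected to $p(u)$ via that edge, contradicting that it is a $3$-cut by Proposition \ref{proposition:B(u)B(v)}.
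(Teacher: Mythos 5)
Your reduction to the case where $u'$ is a proper ancestor of $u$ is sound and matches the paper (a nonempty intersection of $B(u)$ and $B(u')$ forces the ancestor--descendant relation, and $B(u)\cap B(u')=B(v)\setminus\{e,e'\}\neq\emptyset$ since $|B(v)|\geq 3$), and your disposal of the case $e=e'$ via Fact \ref{fact:3edgeconn} is fine. The gap is in the final contradiction, and it is a real one: you chase the wrong edge. To conclude $e\in B(u)$ from $e\in B(u')$ you need the upper endpoint $x$ of $e$ to be a descendant of $u$, and nothing you have established gives you that: $e\in B(u')$ only places $x$ in $T(u')$, and since $y$ is a proper ancestor of $u$ while $e\notin B(u)$, the configuration actually forces $x\notin T(u)$ --- so the premise you need is exactly the one that fails. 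Your proposed patch does not close this either: $\{(u,p(u)),(u',p(u')),e'\}$ has never been shown to be a $3$-cut (by Proposition \ref{proposition:B(u)B(v)} that would require $B(u)$ and $B(u')$ to differ by exactly one back-edge, which is essentially what you are trying to refute), so exhibiting a path that survives its removal contradicts nothing.

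The contradiction is available from the other edge, and you in fact had it in hand before pivoting. You correctly derived $e'\in B(u)$ and observed that $e'\notin B(u')$ forces its lower endpoint $y'$ not to be a proper ancestor of $u'$ (the upper endpoint $x'$ is in $T(u)\subseteq T(u')$, so only the $y'$ condition can fail). But $e'$ is the extra edge in $B(v)=B(u')\sqcup\{e'\}$, so $e'\in B(v)$ and $y'$ is a proper ancestor of $v$, hence of $u'$ --- contradiction, and you are done. The paper packages the same observation as a containment: every $(x,y)\in B(u)$ lies in $B(v)$, so $y$ is a proper ancestor of $v$ and hence of $u'$, while $x\in T(u)\subseteq T(u')$; thus $B(u)\subseteq B(u')$, and $|B(u)|=|B(u')|$ then forces $B(u)=B(u')$, contradicting Fact \ref{fact:3edgeconn}. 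Either formulation finishes in one line; as written, yours does not.
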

\begin{proof}
Suppose that there exists a $v\in V(u)\cap V(u')$. Then there are back-edges $e,e'$ such that $B(v)=B(u)\sqcup\{e\}$ and $B(v)=B(u')\sqcup\{e'\}$, and so we have $B(u)\sqcup\{e\}=B(u')\sqcup\{e'\}$. Since $\mathit{b\_count}(u)>1$ and $\mathit{b\_count}(u')>1$ (for the graph is $3$-edge-connected), we infer that $B(u)\cap B(u')\neq\emptyset$, and thus $u$ and $u'$ are related as ancestor and descendant. Thus we can assume, without loss of generality, that $u'$ is an ancestor of $u$. Now let $(x,y)\in B(u)$. Then $x$ is a descendant of $u$, and therefore a descendant of $u'$. Furthermore, since $B(v)=B(u)\sqcup\{e\}$, we have $(x,y)\in B(v)$, and so $y$ is a proper ancestor of $v$, and therefore a proper ancestor of $u'$. This shows that $(x,y)\in B(u')$, and thus we have $B(u)\subseteq B(u')$. In conjunction with $B(u)\sqcup\{e\}=B(u')\sqcup\{e'\}$ (which implies that $|B(u)|=|B(u')|$), we infer that $B(u)=B(u')$ (and $e=e'$). This contradicts the fact that the graph is $3$-edge-connected.
\end{proof}

Thus, the total number of $3$-cuts of the form $\{(u,p(u)),(v,p(v)),e\}$, where $u$ is a descendant of $v$ and $e$ is a back-edge such that $B(v)=B(u)\sqcup\{e\}$, is $O(n)$. Now we will show how to compute, for every vertex $v$, the vertex $u$ such that $v\in V(u)$ (if such a vertex $u$ exists), together with the back-edge $e$ such that $\{(u,p(u)),(v,p(v)),e\}$ is a $3$-cut, in total linear time.

Let $u,v,e$ be such that $v\in V(u)$ and $B(v)=B(u)\sqcup\{e\}$, and let $e=(x,y)$. Then $y$ is a proper ancestor of $v$, and therefore a proper ancestor of $u$, so $x$ cannot be a descendant of $v$ (since $e\notin B(u)$). Thus, $x$ is either on the tree-path $T(u,v]$, or it is a proper descendant of a vertex in $T(u,v]$, but not a descendant of $u$. In the first case we have $\tilde{M}(v)=M(u)$ (and $x=M(v)$); in the second case either $M_{low1}(v)=M(u)$ (and $x=M_{low2}(v)$) or $M_{low2}(v)=M(u)$ (and $x=M_{low1}(v)$). (For an illustration, see figure \ref{figure:V(u)}.) The following lemma shows how we can determine $u$ from $v$.

\begin{figure}
\begin{center}
\centerline{\includegraphics[trim={0cm 22cm 0cm 0cm}, scale=1,clip, width=\textwidth]{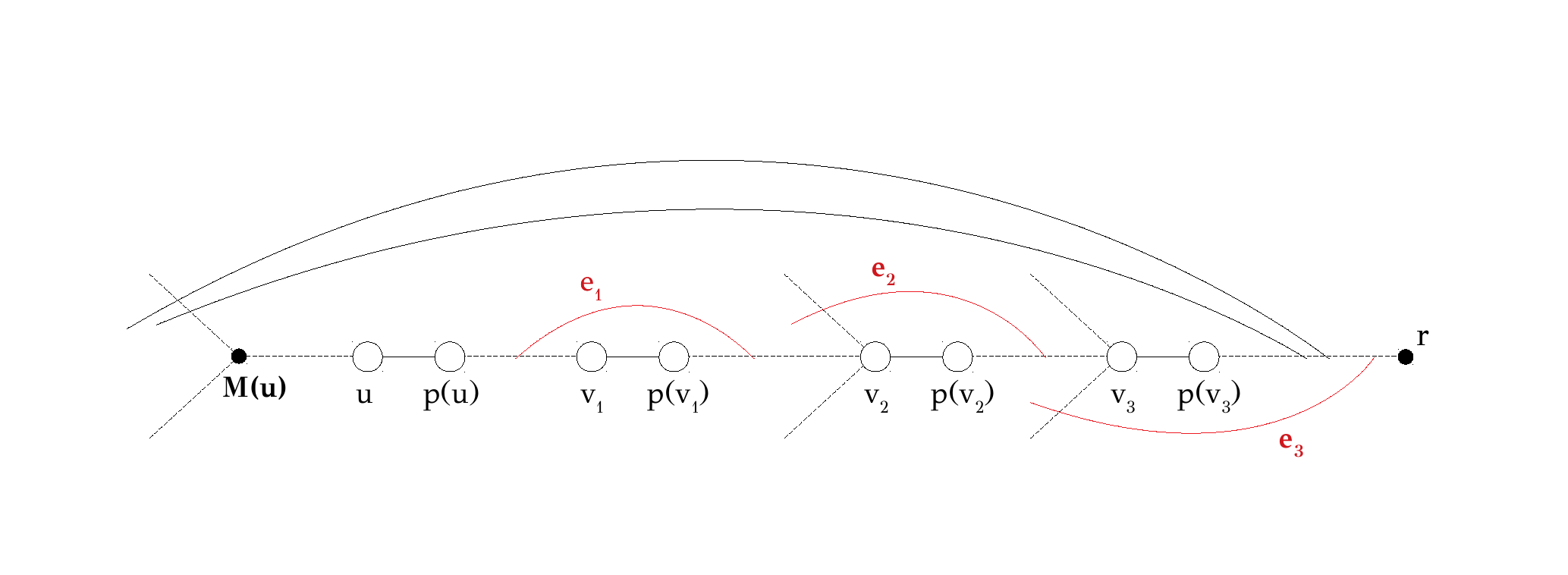}}
\caption{In this example we have $V(u)=\{v_1,v_2,v_3\}$, and every back-edge $e_i$ satisfies $B(v_i)=B(u)\sqcup\{e_i\}$. It should be clear that every $M(v_i)$ is an ancestor of $M(u)$, and $\tilde{M}(v_1)=M(u)$, $M_{low1}(v_2)=M(u)$ and $M_{low2}(v_3)=M(u)$. It is perhaps worth noting that, for every vertex $u$, we may have many vertices $v\in V(u)$ with $\tilde{M}(v)=M(u)$ or $M_{low1}(v)=M(u)$, but only the lowest $v$ in $V(u)$ may have $M_{low2}(v)=M(u)$.}
\label{figure:V(u)}
\end{center}
\end{figure}

\begin{lemma}
\label{lemma:find_u_from_v}
Let $v$ be an ancestor of $u$ such that $\tilde{M}(v)=M(u)$ or $M_{low1}(v)=M(u)$ or $M_{low2}(v)=M(u)$, and let $m$ $=$ $\tilde{M}(v)$ or $M_{low1}(v)$ or $M_{low2}(v)$, depending on whether $\tilde{M}(v)=M(u)$ or $M_{low1}(v)=M(u)$ or $M_{low2}(v)=M(u)$. Then, $v\in V(u)$ if and only if $u$ is the lowest element in $M^{-1}(m)$ which is greater than $v$ and such that $\mathit{high}(u)<v$ and $\mathit{b\_count}(v)=\mathit{b\_count}(u)+1$.
\end{lemma}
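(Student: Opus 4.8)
The plan is to characterize, for a fixed ancestor $v$ of $u$ with the property that one of $\tilde M(v), M_{low1}(v), M_{low2}(v)$ equals $M(u)$, exactly when the back-edge witnessing the equality $B(v) = B(u) \sqcup \{e\}$ exists. Write $m$ for the relevant one of $\tilde M(v), M_{low1}(v), M_{low2}(v)$, so $m = M(u)$; the discussion preceding the lemma already establishes that if $v \in V(u)$, then $M(u)$ must equal one of these three quantities and the extra back-edge $e = (x,y)$ has its descendant endpoint $x$ equal to the corresponding one of $M(v), M_{low2}(v), M_{low1}(v)$. So the task reduces to: among all candidate vertices $u'$ with $M(u') = m$ that are proper descendants of $v$, identify which one (if any) actually satisfies $B(v) = B(u') \sqcup \{e\}$.

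First I would prove the forward direction. Suppose $v \in V(u)$, so $B(v) = B(u)\sqcup\{e\}$ for a back-edge $e=(x,y)$. Since $B(u) \subset B(v)$ with one extra edge, $M(u)$ is a descendant of $v$; and because $B(u) \subseteq B(v)$, every back-edge of $B(u)$ ends at a proper ancestor of $v$, hence $v$ is a proper ancestor of $u$ is consistent, and $\mathit{b\_count}(v) = \mathit{b\_count}(u) + 1$ is immediate from the disjoint-union. Also $\mathit{high}(u) < v$: the highest endpoint of any back-edge in $B(u)$ is a proper ancestor of $v$ because $B(u) \subseteq B(v)$ and back-edges in $B(v)$ end at proper ancestors of $v$. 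It remains to see $u$ is the \emph{lowest} element of $M^{-1}(m)$ that is $> v$ with $\mathit{high}(u) < v$ and $\mathit{b\_count}(v) = \mathit{b\_count}(u)+1$. For this I would argue that if $u''$ were a strictly lower such candidate (still a proper descendant of $v$, still with $M(u'') = m$), then $u''$ is a proper ancestor of $u$ (since both have $m$ as a common descendant, they are related, and $u'' < u$); by the lemma on page with $M(u'')=M(u)$ we get $B(u) \subseteq B(u'')$, and then $\mathit{b\_count}(u) \le \mathit{b\_count}(u'')$; combined with $\mathit{b\_count}(u) = \mathit{b\_count}(u'') = \mathit{b\_count}(v)-1$ this forces $B(u) = B(u'')$, contradicting $3$-edge-connectivity via Fact~\ref{fact:3edgeconn}. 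Hence $u$ is indeed the lowest candidate, but I must be careful that there is no even lower candidate that is \emph{not} below $u$; that cannot happen because $\mathit{high}(u'') < v$ forces $M(u'')$ (a descendant of some back-edge tail, hence a descendant of $v$ after the edge $(v,p(v))$ analysis) — more precisely $M(u'') = m$ is a common descendant of $u''$ and $v$, so $u''$ and $v$ are related, and $u'' > v$ means $u''$ is a proper descendant of $v$, so all candidates lie in $T(v)$ and the ancestor/descendant argument applies.

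For the converse, suppose $u$ is the lowest element of $M^{-1}(m)$ with $u > v$, $\mathit{high}(u) < v$, and $\mathit{b\_count}(v) = \mathit{b\_count}(u) + 1$. I would first show $B(u) \subseteq B(v)$: take $(x',y') \in B(u)$; then $x'$ is a descendant of $M(u) = m$, and since $m$ is a descendant of $v$ (as $m$ and $v$ share $m$ as common descendant and $m > v$ — need $m$ proper descendant of $v$, which follows from how $m$ was obtained as $\tilde M(v)$ or $M_{low\cdot}(v)$, all of which are proper descendants of $v$), $x'$ is a descendant of $v$; and $y' \le \mathit{high}(u) < v$, so $y'$ is a proper ancestor of $v$; hence $(x',y') \in B(v)$. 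Since $\mathit{b\_count}(v) = \mathit{b\_count}(u)+1$, there is exactly one back-edge $e \in B(v) \setminus B(u)$, giving $B(v) = B(u) \sqcup \{e\}$, which is precisely $v \in V(u)$. The only subtlety is confirming that the candidate set is nonempty precisely when a valid $u$ exists — but this is automatic since the "if and only if" only asserts equivalence once the hypothesis "$\tilde M(v) = M(u)$ or $M_{low1}(v) = M(u)$ or $M_{low2}(v) = M(u)$" is granted, and the characterization of $u$ as "lowest element of $M^{-1}(m)$ with the three conditions" is well-defined (possibly vacuous, in which case neither side holds).

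The main obstacle I anticipate is the bookkeeping in the forward direction to rule out a spurious lower candidate in $M^{-1}(m)$: one must verify that any element $u''$ of $M^{-1}(m)$ with $u'' > v$ is necessarily a proper descendant of $v$ (so that the clean ancestor/descendant containment lemmas apply) rather than merely some unrelated vertex, and then that the $\mathit{b\_count}$ equality together with $B(u) \subseteq B(u'')$ forces equality of the two back-edge sets. A second, minor, point is being careful that the three subcases ($\tilde M$, $M_{low1}$, $M_{low2}$) really do exhaust all possibilities for where the tail $x$ of $e$ sits relative to $T(u,v]$ and the $\mathit{low1}$/$\mathit{low2}$ children of $M(u)$ — but that decomposition was already argued in the paragraph preceding the lemma, so I would simply cite it rather than redo it.
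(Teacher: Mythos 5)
Your converse direction matches the paper's almost verbatim (the paper routes ``$x$ descendant of $u$ descendant of $v$'' rather than through $m$, but that is immaterial), and the first half of your forward direction ($\mathit{b\_count}$ and $\mathit{high}$ conditions from $B(v)=B(u)\sqcup\{e\}$) is also the same. The problem is in how you establish minimality. You only rule out a lower $u''\in M^{-1}(m)$ with $v<u''<u$ \emph{that also satisfies the two side conditions}, by nesting $B(u'')$ and $B(u)$ and comparing cardinalities. The paper proves something strictly stronger, and it is the stronger statement that the lemma is actually used for: there is \emph{no} element of $M^{-1}(m)$ between $v$ and $u$ at all. Algorithm~\ref{algorithm:B(v)=B(u)e} takes the unconditionally lowest $u'\in M^{-1}(m)$ with $u'>v$ and only then tests $\mathit{high}$ and $\mathit{b\_count}$; if some $u'$ with $v<u'<u$ existed but failed the tests, the algorithm would stop at $u'$ and miss the cut. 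Your argument leaves this possibility open. The paper closes it with a different mechanism: for such a $u'$ one has $B(u')\subsetneq B(u)$, so some back-edge of $B(u)$ ends in $T[p(u),u']$, forcing $\mathit{high}(u)\geq u'>v$ and contradicting $\mathit{high}(u)<v$, which was already established. This is the one idea your proof is missing.

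Two smaller points. First, in your minimality argument the inclusion is stated backwards: since $u''$ would be a proper \emph{ancestor} of $u$ with $M(u'')=M(u)$, the relevant lemma gives $B(u'')\subseteq B(u)$, not $B(u)\subseteq B(u'')$ (the contradiction via equal cardinalities survives either way). Second, your worry about exhaustiveness of the three subcases ($\tilde{M}$, $M_{low1}$, $M_{low2}$) is not part of this lemma at all — the lemma takes the equality $m=M(u)$ as a hypothesis — so citing the preceding discussion is indeed the right move there.
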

\begin{proof}
($\Rightarrow$) $v\in V(u)$ means that there exists a back-edge $e$ such that $B(v)=B(u)\sqcup\{e\}$. Thus we get immediately $\mathit{b\_count}(v)=\mathit{b\_count}(u)+1$ as a consequence. Furthermore, since $B(u)\subset B(v)$, we also get $\mathit{high}(u)<v$ (since for every $(x,y)\in B(u)$ it must be the case that $y$ is a proper ancestor of $v$, and therefore $\mathit{high}(u)$ is a proper ancestor of $v$). Now, suppose that there exists a $u'\in M^{-1}(m)$ which is lower than $u$ and greater than $v$. Then, since $B(u)=B(u')$ (and, in particular, $B(u')\subset B(u)$), there is a back-edge $(x,y)\in B(u)$ with $x\in T(u)$ and $y\in T[p(u),u']$. But this contradicts the fact that $\mathit{high}(u)<v$. \\
($\Leftarrow$) Let $(x,y)\in B(u)$. Then $x$ is a descendant of $u$, and therefore a descendant of $v$. Furthermore, $\mathit{high}(u)<v$ implies that $y$ is a proper ancestor of $v$. This shows that $(x,y)\in B(v)$, and thus we have $B(u)\subseteq B(v)$. Then, $\mathit{b\_count}(v)=\mathit{b\_count}(u)+1$ implies the existence of a back-edge $e\in B(v)\setminus B(u)$ such that $B(v)=B(u)\sqcup\{e\}$.
\end{proof}

Thus, for every vertex $v$, we have to check whether the lowest element $u$ of $M^{-1}(m)$ which is greater than $v$ satisfies $\mathit{b\_count}(v)=\mathit{b\_count}(u)+1$, for all $m\in\{\tilde{M}(v),M_{low1}(v),M_{low2}(v)\}$. To do this efficiently, we process the vertices in a bottom-up fashion, and we keep in a variable $\mathit{currentVertex}[m]$ the lowest element of $M^{-1}(m)$ currently under consideration, so that we do not have to traverse the list $M^{-1}(m)$ from the beginning each time we process a vertex. Algorithm \ref{algorithm:B(v)=B(u)e} is an implementation of this procedure.

\begin{algorithm}
\caption{\textsf{Find all $3$-cuts $\{(u,p(u)),(v,p(v)),e)\}$, where $u$ is a descendant of $v$ and $B(v)=B(u)\sqcup\{e\}$, for a back-edge $e$.}}
\label{algorithm:B(v)=B(u)e}
\LinesNumbered
\DontPrintSemicolon
initialize an array $\mathit{currentVertex}$ with $n$ entries\;
\tcp{\textit{$m=\tilde{M}(v)$}}
\lForEach{vertex $x$}{$\mathit{currentVertex}[x] \leftarrow x$}
\For{$v\leftarrow n$ to $v=1$}{
  $m \leftarrow \tilde{M}(v)$\;
  \lIf{$m=\emptyset$}{\textbf{continue}}
  \tcp{\textit{find the lowest $u\in M^{-1}(m)$ which is greater than $v$}}
  $u \leftarrow \mathit{currentVertex}[m]$\;
  \lWhile{$\mathit{nextM}(u)\neq\emptyset$ \textbf{and} $\mathit{nextM}(u)> v$}{$u \leftarrow \mathit{nextM}(u)$}
  $\mathit{currentVertex}[m] \leftarrow u$\;
  \tcp{\textit{check the condition in lemma \ref{lemma:find_u_from_v}}}
  \If{$\mathit{high}(u)<v$ \textbf{and} $\mathit{b\_count}(v)=\mathit{b\_count}(u)+1$}{
    mark the triplet $\{(u,p(u)),(v,p(v)),(M(v),l(M(v)))\}$
  }
}
\tcp{\textit{$m=M_{low1}(v)$}}
\lForEach{vertex $x$}{$\mathit{currentVertex}[x] \leftarrow x$}
\For{$v\leftarrow n$ to $v=1$}{
  $m \leftarrow M_{low1}(v)$\;
  \lIf{$m=\emptyset$}{\textbf{continue}}
  \tcp{\textit{find the lowest $u\in M^{-1}(m)$ which is greater than $v$}}
  $u \leftarrow \mathit{currentVertex}[m]$\;
  \lWhile{$\mathit{nextM}(u)\neq\emptyset$ \textbf{and} $\mathit{nextM}(u)> v$}{$u \leftarrow \mathit{nextM}(u)$}
  $\mathit{currentVertex}[m] \leftarrow u$\;
  \tcp{\textit{check the condition in lemma \ref{lemma:find_u_from_v}}}
  \If{$\mathit{high}(u)<v$ \textbf{and} $\mathit{b\_count}(v)=\mathit{b\_count}(u)+1$}{
    mark the triplet $\{(u,p(u)),(v,p(v)),(M_{low2}(v),l(M_{low2}(v)))\}$
  }
}
\tcp{\textit{$m=M_{low2}(v)$}}
\lForEach{vertex $x$}{$\mathit{currentVertex}[x] \leftarrow x$}
\For{$v\leftarrow n$ to $v=1$}{
  $m \leftarrow M_{low2}(v)$\;
  \lIf{$m=\emptyset$}{\textbf{continue}}
  \tcp{\textit{find the lowest $u\in M^{-1}(m)$ which is greater than $v$}}
  $u \leftarrow \mathit{currentVertex}[m]$\;
  \lWhile{$\mathit{nextM}(u)\neq\emptyset$ \textbf{and} $\mathit{nextM}(u)> v$}{$u \leftarrow \mathit{nextM}(u)$}
  $\mathit{currentVertex}[m] \leftarrow u$\;
  $\mathit{currentVertex}[m] \leftarrow \mathit{prev}$\;
  \tcp{\textit{check the condition in lemma \ref{lemma:find_u_from_v}}}
  \If{$\mathit{high}(u)<v$ \textbf{and} $\mathit{b\_count}(v)=\mathit{b\_count}(u)+1$}{
    mark the triplet $\{(u,p(u)),(v,p(v)),(M_{low1}(v),l(M_{low1}(v)))\}$
  }
}
\end{algorithm}

\subsubsection{$v$ is an ancestor of $u$ and $B(u)=B(v)\sqcup\{e\}$.}
\label{section:vu}

Throughout this section let $U(v)$ denote the set of vertices $u$ that are descendants of $v$ and such that $B(u)=B(v)\sqcup\{e\}$, for a back-edge $e$. By proposition \ref{proposition:B(u)B(v)}, this means that $\{(u,p(u)),(v,p(v)),e\}$ is a $3$-cut. The following lemma shows that, for every vertex $u$, there is at most one vertex $v$ such that $u\in U(v)$.

\begin{lemma}
Let $v,v'$ be two distinct vertices. Then $U(v)\cap U(v')=\emptyset$.
\end{lemma}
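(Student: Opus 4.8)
The plan is to mirror the argument of the preceding lemma (the one showing $V(u)\cap V(u')=\emptyset$), exploiting the symmetry between the roles of $u$ and $v$ in the two cases of Proposition~\ref{proposition:B(u)B(v)}. So suppose, for contradiction, that there is a vertex $u$ lying in both $U(v)$ and $U(v')$ for distinct $v,v'$. By definition there are back-edges $e,e'$ with $B(u)=B(v)\sqcup\{e\}$ and $B(u)=B(v')\sqcup\{e'\}$, hence $B(v)\sqcup\{e\}=B(v')\sqcup\{e'\}$. Since the graph is $3$-edge-connected, $\mathit{b\_count}(v)>1$ and $\mathit{b\_count}(v')>1$, so $B(v)\cap B(v')\neq\emptyset$ (both contain all but at most one back-edge of $B(u)$, and $|B(u)|\geq 3$), and therefore $v$ and $v'$ are related as ancestor and descendant.

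Next I would fix the order, say $v'$ an ancestor of $v$, and show $B(v)\subseteq B(v')$. Take any $(x,y)\in B(v)$. Since $B(v)\subseteq B(u)$ (as $B(u)=B(v)\sqcup\{e\}$), we have $(x,y)\in B(u)$, so $x$ is a descendant of $u$, hence a descendant of $v$, hence a descendant of $v'$ (using that $v'$ is an ancestor of $v$, which is an ancestor of $u$ — the ancestor-descendant chain following from $B(v)\cap B(v')\neq\emptyset$ and $u\in U(v)\cap U(v')$). On the other hand $y$ is a proper ancestor of $v$; I need $y$ to be a proper ancestor of $v'$. This is where a little care is needed: $y$ being a proper ancestor of $v$ only gives $y$ a proper ancestor of $v'$ if $v'$ is an ancestor of $v$ — which is exactly our assumed orientation — wait, that gives $y\le v'$ but possibly $y=v'$. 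To rule that out, note that since $B(v)=B(v')\sqcup\{e'\}\setminus\{\dots\}$—more directly, since $B(v)\subseteq B(u)$ and $B(v')\subseteq B(u)$ with $|B(v)|=|B(v')|$, it suffices to produce one common back-edge and compare. So the cleaner route: from $B(v)\sqcup\{e\}=B(v')\sqcup\{e'\}$ we get $|B(v)|=|B(v')|$, and the inclusion $B(v)\subseteq B(v')$ (once established) forces $B(v)=B(v')$, contradicting Fact~\ref{fact:3edgeconn}.

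To nail the inclusion cleanly I would argue as follows: since $v'$ is a proper ancestor of $v$ and $B(v)\subseteq B(u)$, pick any $(x,y)\in B(v)$; then $x\in T(u)\subseteq T(v)\subseteq T(v')$, and $y$ is a proper ancestor of $v$, so $y<v$; if $y\in T[v',v)$ were possible we would need $(x,y)\notin B(v')$, but then since $B(v)\sqcup\{e\}=B(v')\sqcup\{e'\}$ and $(x,y)\neq e$ (as $(x,y)\in B(v)$), we'd need $(x,y)=e'$, i.e. $e'=(x,y)$ is the single back-edge in $B(v)\setminus B(v')$ — but $B(u)=B(v')\sqcup\{e'\}$ means $e'\in B(u)$ ends at a proper ancestor of $v'$, contradicting $y\in T[v',v)$. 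Hence every $(x,y)\in B(v)$ has $y$ a proper ancestor of $v'$, giving $B(v)\subseteq B(v')$, then $B(v)=B(v')$ by cardinality, contradicting $3$-edge-connectivity. The main obstacle is precisely this bookkeeping about where the ``extra'' back-edges $e,e'$ sit relative to $v$ and $v'$; everything else is a direct transcription of the $V(u)$-disjointness proof with the roles of ancestor and descendant swapped.
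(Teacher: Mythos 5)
Your overall strategy is the same as the paper's: assume $u\in U(v)\cap U(v')$, reduce to $B(v)=B(v')$, and contradict Fact~\ref{fact:3edgeconn}. However, the execution has a genuine error at the crucial step. You try to prove the inclusion $B(v)\subseteq B(v')$ (descendant's set inside ancestor's set), and to rule out a back-edge $(x,y)\in B(v)$ whose lower end $y$ lies in $T[p(v),v']$ you argue that such an edge would have to equal $e'$, and that ``$B(u)=B(v')\sqcup\{e'\}$ means $e'\in B(u)$ ends at a proper ancestor of $v'$.'' That claim is false, and in fact the opposite holds: $e'\in B(u)$ only forces its lower end to be a proper ancestor of $u$, while $e'\notin B(v')$ together with the fact that its upper end lies in $T(u)\subseteq T(v')$ forces its lower end to lie in $T[p(u),v']$, i.e.\ \emph{not} at a proper ancestor of $v'$ (this is exactly the observation the paper uses to identify $e'$ with $(\mathit{highD}(u),\mathit{high}(u))$). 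So the scenario you are trying to exclude is perfectly consistent with everything you have invoked, and no contradiction is obtained; the inclusion $B(v)\subseteq B(v')$ is left unproved.

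The fix is to work in the other direction, which is the easy one and is what the paper does: for any $(x,y)\in B(v')$ we have $y$ a proper ancestor of $v'$ and hence of $v$, and $x$ a descendant of $u$ (since $B(v')\subseteq B(u)$) and hence of $v$, so $B(v')\subseteq B(v)$; together with $|B(v)|=|B(v')|$ this gives $B(v)=B(v')$, contradicting $3$-edge-connectivity. Equivalently, you could keep your bookkeeping style but apply it to $e$ rather than $e'$: if $e\neq e'$ then $e\in B(v')$, so its lower end is a proper ancestor of $v'$ and hence of $v$, and its upper end lies in $T(u)\subseteq T(v)$, whence $e\in B(v)$ — contradicting $B(u)=B(v)\sqcup\{e\}$. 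Either repair closes the gap; as written, the proof does not go through.
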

\begin{proof}
Suppose that there exists a $u\in U(v)\cap U(v')$. Then $v$ and $v'$ are related as ancestor and descendant, since they have a common descendant. Thus we may assume, without loss of generality, that $v'$ is an ancestor of $v$. Let $(x,y)$ be a back-edge in $B(v')$. Then, $y$ is a proper ancestor of $v'$, and therefore a proper ancestor of $v$. Furthermore, $u\in U(v')$ implies that $B(v')\subseteq B(u)$, and therefore $(x,y)\in B(u)$. Thus, $x$ is a descendant of $u$, and therefore a descendant of $v$. This shows that $(x,y)\in B(v)$, and thus we have $B(v')\subseteq B(v)$. Now, $u\in U(v)\cap U(v')$ means that there exist two back-edges $e,e'$ such that $B(u)=B(v)\sqcup\{e\}$ and $B(u)=B(v)\sqcup\{e\}$, and thus we have $B(v)\sqcup\{e\}=B(v')\sqcup\{e'\}$. Therefore, $|B(v)|=|B(v')|$. In conjunction with $B(v')\subseteq B(v)$, this implies that $B(v)=B(v')$ (and $e=e'$), contradicting the fact that the graph is $3$-edge-connected.
\end{proof}

Thus, the total number of $3$-cuts of the form $\{(u,p(u)),(v,p(v)),e\}$, where $u$ is a descendant of $v$ and $e$ is a back-edge such that $B(u)=B(v)\sqcup\{e\}$, is $O(n)$. We will now show how to compute, for every vertex $u$, the vertex $v$ such that $u\in U(v)$ (if such a vertex $v$ exists), together with the back-edge $e$ such that $\{(u,p(u)),(v,p(v)),e\}$ is a $3$-cut, in total linear time.

Let $u,v,e$ be such that $u\in U(v)$ and $B(u)=B(v)\sqcup\{e\}$, and let $e=(x,y)$. Then, $x$ is a descendant of $u$, and therefore a descendant of $v$. But since $e\notin B(v)$, $y$ is not an ancestor of $v$, and therefore $y\in T[p(u),v]$. Thus, $y=\mathit{high}(u)$ (and $x=\mathit{highD}(u)$), since every other back-edge $(x',y')\in B(u)$ is also in $B(v)$ and thus has $y'<v\leq y$. This shows how we can determine the back-edge $e$ from a pair of vertices $u,v$ that satisfy $u\in U(v)$. Furthermore, $B(u)=B(v)\sqcup\{e\}$ implies that $M(u)$ is an ancestor of $M(v)$. Thus, either $M(u)=M(v)$, or $M(u)$ is a proper ancestor of $M(v)$. In the second case, we have that either $\tilde{M}(u)=M(v)$ or $M_{low1}(u)=M(v)$ (since the $\mathit{low}$ point of $u$ is given by a back-edge in $B(v)$). (For an illustration, see figure \ref{figure:U(v)}.) Now the following lemma shows how we can determine $v$ from $u$.

\begin{figure}
\begin{center}
\centerline{\includegraphics[trim={0cm 22cm 0cm 0cm}, scale=1,clip, width=\textwidth]{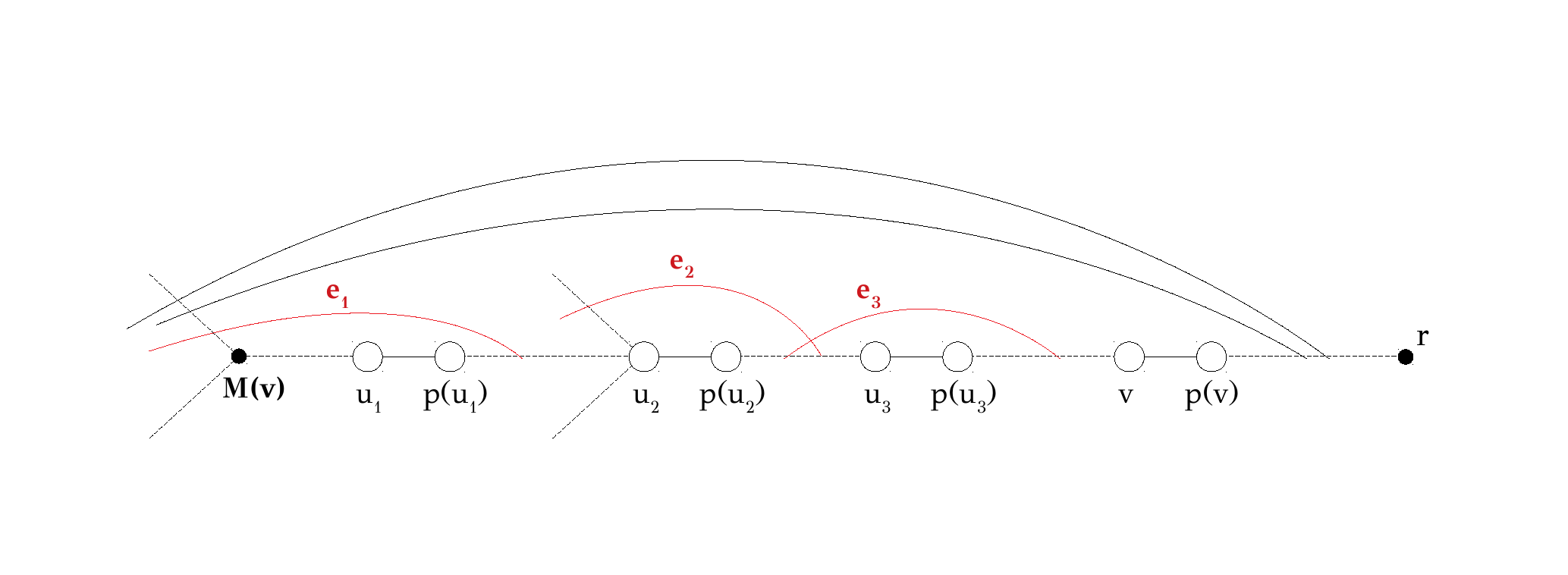}}
\caption{In this example we have $U(v)=\{u_1,u_2,u_3\}$, and every back-edge $e_i$ satisfies $B(u_i)=B(v)\sqcup\{e_i\}$. It should be clear that every $M(u_i)$ is an ancestor of $M(v)$, and $M(u_1)=M(v)$, $M_{low1}(u_2)=M(v)$ and $\tilde{M}(u_3)=M(v)$. It is perhaps worth noting that, for every vertex $v$, only one $u\in U(v)$ may have $M(u)=M(v)$ (that is, the one satisfying $\mathit{nextM}(u)=v$), but we may have many vertices $u\in V(v)$ with $\tilde{M}(u)=M(v)$ or $M_{low1}(u)=M(v)$.}
\label{figure:U(v)}
\end{center}
\end{figure}

\begin{lemma}
\label{lemma:find_v_from_u}
Let $u$ be a descendant of $v$ such that $M(u)=M(v)$ or $\tilde{M}(u)=M(v)$ or $M_{low1}(u)=M(v)$, and let $m$ $=$ $M(u)$ or $\tilde{M}(u)$ or $M_{low1}(u)$, depending on whether $M(u)=M(v)$ or $\tilde{M}(u)=M(v)$ or $M_{low1}(u)=M(v)$. Then $u\in U(v)$ if and only if $v$ is the greatest element in $M^{-1}(m)$ which is lower than $u$ and such that $\mathit{b\_count}(u)=\mathit{b\_count}(v)+1$.
\end{lemma}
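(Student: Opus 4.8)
The plan is to follow the template of the proof of Lemma~\ref{lemma:find_u_from_v}, using a single observation that makes the three cases ($M(u)=M(v)$, $\tilde{M}(u)=M(v)$, $M_{low1}(u)=M(v)$) uniform: in all of them the vertex $m=M(v)$ is a common descendant of $u$ and of $v$, since $M(u)$, $\tilde{M}(u)$ and $M_{low1}(u)$ are each a descendant of $u$, and $m=M(v)$ is a descendant of $v$ by the definition of $M$. Together with the standing hypothesis that $u$ is a descendant of $v$, this is essentially all the structure the argument needs.

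For the forward implication I would assume $u\in U(v)$, i.e.\ $B(u)=B(v)\sqcup\{e\}$ for a back-edge $e$. Then $\mathit{b\_count}(u)=\mathit{b\_count}(v)+1$ is immediate, $v\in M^{-1}(m)$ because $m=M(v)$, and $v<u$ because $B(v)\subsetneq B(u)$ forces $u\ne v$. The remaining work is to show $v$ is the \emph{greatest} such vertex, i.e.\ to rule out a vertex $v'$ with $v<v'<u$, $v'\in M^{-1}(m)$ and $\mathit{b\_count}(u)=\mathit{b\_count}(v')+1$. Since $v'$ and $u$ are both ancestors of $m$ with $v'<u$, $v'$ is a proper ancestor of $u$; similarly $v$ is a proper ancestor of $v'$. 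From $M(v)=M(v')=m$ one gets $B(v)\subseteq B(v')$ (every back-edge of $v$ has its higher endpoint a descendant of $m$, hence of $v'$, and its lower endpoint a proper ancestor of $v$, hence of $v'$), with strict inclusion by Fact~\ref{fact:3edgeconn}; and every $(x,y)\in B(v')$ has $x$ a descendant of $m$ hence of $u$, and $y$ a proper ancestor of $v'$ hence of $u$, so $B(v')\subseteq B(u)$. Then $\mathit{b\_count}(v)<\mathit{b\_count}(v')\le\mathit{b\_count}(u)=\mathit{b\_count}(v)+1$ forces $B(v')=B(u)$ with $v'\ne u$, contradicting Fact~\ref{fact:3edgeconn}.

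For the reverse implication I would use only that $v$ is \emph{an} element of $M^{-1}(m)$ lower than $u$ with $\mathit{b\_count}(u)=\mathit{b\_count}(v)+1$ (the maximality is not needed here). I would show $B(v)\subseteq B(u)$: any $(x,y)\in B(v)$ has $x$ a descendant of $M(v)=m$, hence of $u$, and $y$ a proper ancestor of $v$, hence a proper ancestor of $u$ since $v$ is an ancestor of $u$; so $(x,y)\in B(u)$. Combined with $\mathit{b\_count}(u)=\mathit{b\_count}(v)+1$, this gives $B(u)=B(v)\sqcup\{e\}$ for a back-edge $e$, i.e.\ $u\in U(v)$. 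I expect no real obstacle: the only place that needs care is the double inclusion $B(v)\subsetneq B(v')\subseteq B(u)$ in the forward direction, which just requires correctly ordering $v$, $v'$, $u$ and $m$ as ancestors/descendants; everything else is direct unwinding of the definitions of $B(\cdot)$ and $\mathit{b\_count}(\cdot)$.
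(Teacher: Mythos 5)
Your proposal is correct and follows essentially the same route as the paper's proof: the reverse direction is the same direct verification that $B(v)\subseteq B(u)$ plus the cardinality count, and the forward direction rules out an intermediate $v'\in M^{-1}(m)$ via the same chain $B(v)\subset B(v')\subseteq B(u)$ forcing $B(v')=B(u)$, contradicting $3$-edge-connectivity. Your unifying observation that $m$ is a common descendant of $u$ and $v$ in all three cases is exactly what the paper uses, just stated once up front instead of case by case.
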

\begin{proof}
($\Rightarrow$) $u\in U(v)$ means that there exists a back-edge $e$ such that $B(u)=B(v)\sqcup\{e\}$. Thus we get immediately that $\mathit{b\_count}(u)=\mathit{b\_count}(v)+1$. Now suppose, for the sake of contradiction, that there exists a $v'\in M^{-1}(m)$ which is greater than $v$ and lower than $u$. Let $(x,y)\in B(v')$. Then $y$ is a proper ancestor of $v'$, and therefore a proper ancestor of $u$. Furthermore, $x$ is a descendant of $M(v')$ ($=M(v)$), and so every one of the relations $M(u)=M(v)$, $\tilde{M}(u)=M(v)$ or $M_{low1}(u)=M(v)$ implies that $x$ is a descendant of $u$. This shows that $(x,y)\in B(u)$, and thus we have $B(v')\subseteq B(u)$. Now, since $M(v)=M(v')$ and $v'$ is a proper ancestor of $v$, we have $B(v)\subset B(v')$. Since $\mathit{b\_count}(u)=\mathit{b\_count}(v)+1$,  $B(v)\subset B(v')\subseteq B(u)$ implies that $B(u)=B(v')$, contradicting the fact that the graph is $3$-edge-connected.\\
($\Leftarrow$) Let $(x,y)\in B(v)$. Then $y$ is a proper ancestor of $v$, and therefore a proper ancestor of $u$. Furthermore, $x$ is a descendant of $M(v)$, and every one of the relations $M(u)=M(v)$, $\tilde{M}(u)=M(v)$ or $M_{low1}(u)=M(v)$ implies that $x$ is a descendant of $M(u)$. This shows that $(x,y)\in B(u)$. Thus we have $B(v)\subseteq B(u)$, and so $\mathit{b\_count}(u)=\mathit{b\_count}(v)+1$ implies that there exists a back-edge $e$ such $B(u)=B(v)\sqcup\{e\}$.
\end{proof}

Thus, for every vertex $u$, we have to check whether the greatest element $v$ in $M^{-1}(m)$ which is lower than $u$ satisfies $\mathit{b\_count}(u)=\mathit{b\_count}(v)+1$, for all $m\in\{M(u),\tilde{M}(u),M_{low1}(u)\}$. To do this efficiently, we process the vertices in a bottom-up fashion, and we keep in a variable $\mathit{currentVertex}[m]$ the lowest element of $M^{-1}(m)$ currently under consideration, so that we do not have to traverse the list $M^{-1}(m)$ from the beginning each time we process a vertex. Algorithm \ref{algorithm:B(u)=B(v)e} is an implementation of this procedure.

\begin{algorithm}
\caption{\textsf{Find all $3$-cuts $\{(u,p(u)),(v,p(v)),e)\}$, where $u$ is a descendant of $v$ and $B(u)=B(v)\sqcup\{e\}$, for a back-edge $e$.}}
\label{algorithm:B(u)=B(v)e}
\LinesNumbered
\DontPrintSemicolon
initialize an array $\mathit{currentVertex}$ with $n$ entries\;
\tcp{\textit{$m=M(v)$; just check whether the condition of Lemma \ref{lemma:find_v_from_u} is satisfied for $\mathit{nextM}(u)$}}
  \If{$\mathit{b\_count}(u)=\mathit{b\_count}(\mathit{nextM}(u))+1$}{
    mark the triplet $\{(u,p(u)),(\mathit{nextM}(u),p(\mathit{nextM}(u))),(\mathit{highD}(u),\mathit{high}(u))\}$
  }
\tcp{\textit{$m=\tilde{M}(u)$}}
\lForEach{vertex $x$}{$\mathit{currentVertex}[x] \leftarrow x$}
\For{$u\leftarrow n$ to $u=1$}{
  $m \leftarrow \tilde{M}(u)$\;
  \lIf{$m=\emptyset$}{\textbf{continue}}
  \tcp{\textit{find the greatest $v\in M^{-1}(m)$ which is lower than $u$}}
  $v \leftarrow \mathit{currentVertex}[m]$\;
  \lWhile{$v\neq\emptyset$ \textbf{and} $v\geq u$}{$v \leftarrow \mathit{nextM}(v)$}
  $\mathit{currentVertex}[m] \leftarrow v$\;
  \tcp{\textit{check the condition in Lemma \ref{lemma:find_v_from_u}}}
  \If{$\mathit{b\_count}(u)=\mathit{b\_count}(v)+1$}{
    mark the triplet $\{(u,p(u)),(v,p(v)),(\mathit{highD}(u),\mathit{high}(u))\}$
  }
}
\tcp{\textit{$m=M_{low1}(u)$}}
\lForEach{vertex $x$}{$\mathit{currentVertex}[x] \leftarrow x$}
\For{$u\leftarrow n$ to $u=1$}{
  $m \leftarrow M_{low1}(u)$\;
  \lIf{$m=\emptyset$}{\textbf{continue}}
  \tcp{\textit{find the greatest $v\in M^{-1}(m)$ which is lower than $u$}}
  $v \leftarrow \mathit{currentVertex}[m]$\;
  \lWhile{$v\neq\emptyset$ \textbf{and} $v\geq u$}{$v \leftarrow \mathit{nextM}(v)$}
  $\mathit{currentVertex}[m] \leftarrow v$\;
  \tcp{\textit{check the condition in Lemma \ref{lemma:find_v_from_u}}}
  \If{$\mathit{b\_count}(u)=\mathit{b\_count}(v)+1$}{
    mark the triplet $\{(u,p(u)),(v,p(v)),(\mathit{highD}(u),\mathit{high}(u))\}$
  }
}
\end{algorithm}

\subsection{Three tree-edges}
\label{section:three_tree_edges}

\begin{lemma}
Let $\{(u,p(u)),(v,p(v)),(w,p(w))\}$ be a $3$-cut, and assume, without loss of generality, that $w<\mathit{min}\{v,u\}$. Then $w$ is an ancestor of both $u$ and $v$.
\end{lemma}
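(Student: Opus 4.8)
The plan is to argue by contradiction, using that in a $3$-edge-connected graph every $3$-cut is minimal. Suppose the conclusion fails. Since the hypothesis $w<\min\{u,v\}$ and the statement "$\{(u,p(u)),(v,p(v)),(w,p(w))\}$ is a $3$-cut" are both symmetric in $u$ and $v$, we may assume that $w$ is not an ancestor of $u$. As $w<u$ and an ancestor never exceeds its descendants in preorder, $w$ is not a descendant of $u$ either; hence $u$ and $w$ are unrelated in $T$ (and $T(u)\cap T(w)=\emptyset$). Because $G$ is $3$-edge-connected it has no cut of size $\le 2$, so $B(x)\ne\emptyset$ for every $x\ne r$; moreover, since $\{(u,p(u)),(v,p(v)),(w,p(w))\}$ is a cut, the graph $G':=G\setminus\{(u,p(u)),(v,p(v)),(w,p(w))\}$ is disconnected, and since the cut is minimal each of its three edges joins two distinct connected components of $G'$. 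Let $Y$ be the component of $G'$ containing $r$.

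The engine of the proof is the following routing claim: \emph{if $z\in\{u,v,w\}$ is unrelated in $T$ to each of the other two vertices of $\{u,v,w\}$, then $z\in Y$.} To see this, fix any back-edge $(x,y)\in B(z)$, so $x$ is a descendant of $z$ and $y$ a proper ancestor of $z$, and consider the walk $T[z,x]\cdot(x,y)\cdot T[y,r]$ in $G$. It avoids all three removed edges: every tree-edge on $T[z,x]$ has the form $(a,p(a))$ with $a$ a proper descendant of $z$, and neither of the other two vertices of $\{u,v,w\}$ is a descendant of $z$; the middle edge $(x,y)$ is a back-edge; and every tree-edge on $T[y,r]$ has the form $(b,p(b))$ with $b$ an ancestor of $y$ other than $r$, but $z$ is not an ancestor of $y$ (it is a proper descendant of $y$), and if either of the other two vertices of $\{u,v,w\}$ were an ancestor of $y$ it would also be an ancestor of $z$, contradicting that $z$ is unrelated to it. Hence the walk lies in $G'$ and joins $z$ to $r$, so $z\in Y$.

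It remains to produce some $z\in\{u,v,w\}$ unrelated to the other two, and to derive the contradiction. If $w$ is also unrelated to $v$ — equivalently, since $w<v$, if $w$ is not an ancestor of $v$ — take $z=w$. Otherwise $w$ is a proper ancestor of $v$; then $u$ is unrelated to $v$ (if $u$ were an ancestor of $v$, then $u$ and $w$, both ancestors of $v$, would be comparable; if $v$ were an ancestor of $u$, then $u\in T(v)\subseteq T(w)$, again making $u,w$ comparable), so $u$ is unrelated to both $v$ and $w$; take $z=u$. In either case the routing claim gives $z\in Y$, hence $p(z)\notin Y$ because $(z,p(z))$ is one of the three removed edges. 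Now the tree-path $T[p(z),r]$ runs from a vertex outside $Y$ to $r\in Y$, so it cannot lie entirely in $G'$ and must traverse one of the removed edges $(a,p(a))$; here $a$ is an ancestor of $p(z)$, hence of $z$, and $a\in\{u,v,w\}$. Since $z$ is a child of $p(z)$, it does not lie on $T[p(z),r]$, so $a\ne z$; thus $a$ is one of the two vertices of $\{u,v,w\}\setminus\{z\}$, contradicting that $z$ is unrelated to both. This contradiction proves the lemma.

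The one genuinely delicate point, and the reason for the split above, is that one cannot always "route through $w$": if $w$ is an ancestor of $v$, then every back-edge of $B(w)$ may originate inside $T(v)$, forcing the path $T[w,x]$ through the removed edge $(v,p(v))$. The substantive observation is that in precisely that situation $u$ is necessarily unrelated to both $v$ and $w$ and can take the role of $z$; the rest is routine bookkeeping about which tree-edges lie on which tree-paths, together with the elementary facts that $G$ has no $1$- or $2$-cut.
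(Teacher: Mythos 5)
Your proof is correct and takes essentially the same approach as the paper's: both identify a vertex of $\{u,v,w\}$ that is unrelated in $T$ to the other two (first trying $w$, and, when $w$ turns out to be an ancestor of exactly one of $u,v$, switching to $u$) and then use a back-edge from its $B(\cdot)$ set to route around the three removed tree-edges, contradicting the cut property. The only difference is cosmetic: you route the isolated vertex to $r$ and invoke explicitly the fact that a minimal cut's edges each join the two components, whereas the paper routes the isolated vertex directly to its parent.
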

\begin{proof}
Suppose that $w$ is neither an ancestor of $u$ nor an ancestor of $v$. Let $(x,y)\in B(w)$. Then $x$ is a descendant of $w$, and therefore it is not a descendant of either $u$ or $v$. In other words, $u,v\notin T[x,w]$. Furthermore, $y$ is a proper ancestor of $w$. Since neither $u$ nor $v$ is an ancestor of $w$ (since $w<\mathit{min}\{v,u\}$), we have that $u,v\notin T[w,r]$, and therefore $u,v\notin T[w,y]$. Thus, by removing the tree-edges $(u,p(u))$, $(v,p(v))$ and $(w,p(w))$, $w$ remains connected with $p(w)$ through the path $T[w,x],(x,y),T[y,p(w)]$, contradicting the fact that $\{(u,p(u)),(v,p(v)),(w,p(w))\}$ is a $3$-cut. This shows that $w$ is an ancestor of either $u$ or $v$ (or both). Suppose, for the sake of contradiction, that $w$ is not an ancestor of $u$. Then $w$ is an ancestor of $v$. This implies that $u$ is not a descendant of $v$ (for otherwise it would be a descendant of $w$). If $u$ is an ancestor of $v$, it must necessarily be an ancestor of $w$ (because $v\in T(w)$ and $u\notin T(w)$), but $w<u$ forbids this case. Thus, $u$ is not a ancestor of $v$. So far, then, we have that $u$ is not related as ancestor and descendant with either $w$ or $v$. Thus we may follow the same reasoning as above, to conclude that, by removing the tree-edges $(u,p(u))$, $(v,p(v))$ and $(w,p(w))$, $u$ remains connected with $p(u)$, again contradicting the fact that $\{(u,p(u)),(v,p(v)),(w,p(w))\}$ is a $3$-cut. This shows that $w$ is an ancestor of $u$. Using the same argument we can also prove that $w$ is an ancestor of $v$.
\end{proof}

At this point we distinguish two cases, depending on whether $u$ and $v$ are related as ancestor and descendant.

\subsubsection{$u$ and $v$ are not related as ancestor and descendant}
\label{section:wuv}

In what follows we will provide some characterizations of the $3$-cuts of the form $\{(u,p(u)),(v,p(v)),(w,p(w))\}$, where $w$ is an ancestor of $u$ and $v$, and $u,v$ are not related as ancestor and descendant. It will be useful to keep in mind the situation depicted in Figure \ref{figure:Muv}.

\begin{proposition}
\label{lemma:wuv_1}
Let $u$ and $v$ be two vertices which are not related as ancestor and descendant, and let $w$ be an ancestor of both $u$ and $v$. Then, $\{(u,p(u)),(v,p(v)),(w,p(w))\}$ is a $3$-cut if and only if $B(w)=B(u)\sqcup B(v)$.
\end{proposition}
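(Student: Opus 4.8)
The plan is to prove the two directions of the equivalence separately, in both cases arguing via the connectivity of the pieces into which $T$ is broken when we delete the three tree-edges. Recall that deleting $(u,p(u))$, $(v,p(v))$, $(w,p(w))$ with $u,v$ incomparable descendants of $w$ splits $T$ into four pieces: $T(u)$, $T(v)$, the ``annulus'' $A := T(w)\setminus(T(u)\cup T(v))$, and the ``outside'' $O := V\setminus T(w)$ (which contains $r$). The only non-tree edges connecting these pieces are back-edges. The key bookkeeping fact is that, because $u$ and $v$ are not ancestor-related, $B(u)$ and $B(v)$ are disjoint (if $(x,y)$ were in both, then $x$ would be a common descendant of $u$ and $v$, forcing them to be comparable). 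So $B(u)\sqcup B(v)$ is automatically a disjoint union whenever it appears, and $|B(u)\sqcup B(v)| = b\_count(u)+b\_count(v)$.

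For the direction ($\Leftarrow$), assume $B(w)=B(u)\sqcup B(v)$. First I would observe that every back-edge leaving $A$ (i.e., starting in $A$ and ending at a proper ancestor of $w$) would belong to $B(w)$ but to neither $B(u)$ nor $B(v)$, since its lower endpoint is a proper ancestor of $w$, hence of $u$ and $v$, while its upper endpoint lies in $A$ and so is not a descendant of $u$ or $v$; the hypothesis $B(w)=B(u)\sqcup B(v)$ therefore says there is no such back-edge. Likewise I need that no back-edge goes directly between $T(u)$ and $O$: such an edge would be in $B(u)\subseteq B(w)$ but also reach outside $T(w)$, again impossible under the hypothesis; similarly for $T(v)$ and $O$. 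Next I check that no back-edge goes from $T(u)$ to $A$ or from $T(v)$ to $A$ — but such an edge, having its upper endpoint in $A\subseteq T(w)$ and not a proper ancestor of $w$, would lie in $B(u)$ (resp.\ $B(v)$) but not in $B(w)$, contradicting $B(u)\cup B(v)\subseteq B(w)$. Hence after the deletion, $T(u)$, $T(v)$, and $A$ have no back-edges to the rest of the graph except possibly $A$ to $T(u)\cup T(v)$, and tracing this carefully shows $O$ is separated from $T(u)\cup T(v)\cup A$ (or, if $A$ is empty and all of $B(w)$'s edges are accounted for, the graph splits into at least two components); in any case removal of the three edges disconnects $G$, so the triple is a cut, and since $G$ is $3$-edge-connected it is a $3$-cut.

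For the direction ($\Rightarrow$), assume $\{(u,p(u)),(v,p(v)),(w,p(w))\}$ is a $3$-cut. I first claim $B(u)\cup B(v)\subseteq B(w)$. Take $(x,y)\in B(u)$; then $x\in T(u)\subseteq T(w)$. If $y$ were not a proper ancestor of $w$, then $y\in T[p(u),w]$ or $y\in T(v)$ or $y \in A$; but in each case the path $T[u,x],(x,y),T[y,\cdot]$ would reconnect $T(u)$ to one of the other three pieces after deleting the three tree-edges, and by pushing this connection around (using that $w$ reaches its proper ancestors only through $B(w)$, etc.) one derives that $G$ minus the three edges is still connected — contradiction. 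So $y$ is a proper ancestor of $w$, i.e.\ $(x,y)\in B(w)$; symmetrically for $B(v)$. Combined with disjointness of $B(u),B(v)$, this gives $B(u)\sqcup B(v)\subseteq B(w)$. For the reverse inclusion, suppose some $(x,y)\in B(w)\setminus(B(u)\cup B(v))$. Then $y$ is a proper ancestor of $w$ and $x\in T(w)$ but $x\notin T(u)\cup T(v)$, so $x\in A$. Then after deleting the three tree-edges, $A$ (hence $w$, hence the component containing $w$) stays connected to $O$ via $(x,y)$; one then argues that in fact $G$ minus the three edges is connected — because every other edge out of $T(u)$ lands in $B(w)$ reaching a proper ancestor of $w$ and thus into $O$, and similarly for $T(v)$, so all four pieces coalesce — contradicting that the triple is a cut. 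Hence $B(w)=B(u)\sqcup B(v)$.

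The main obstacle is the careful case analysis in the ``still-connected'' arguments: whenever I want to contradict the assumption that the triple is a $3$-cut, I must exhibit an explicit path in $G$ minus the three tree-edges connecting two of the four pieces, and then propagate this to full connectivity of $G$ minus the three edges, which requires knowing exactly which pieces are forced to be joined (e.g.\ that $O$ is always reachable from $T(w)\setminus\{$the three deleted pieces$\}$ through $B(w)$, using $2$-edge-connectivity so that $B(w)\neq\emptyset$). Organizing these sub-cases cleanly — ideally by first establishing the partition into the four pieces and a small lemma about which back-edges can join which pieces — is where the real work lies; the algebra with $b\_count$ is then immediate from disjointness of $B(u)$ and $B(v)$. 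It may be cleanest to prove a single auxiliary claim: ``$G$ minus $\{(u,p(u)),(v,p(v)),(w,p(w))\}$ is disconnected if and only if $B(w)=B(u)\sqcup B(v)$,'' by analyzing exactly the back-edges among the four pieces, and then invoke Menger/$3$-edge-connectivity to upgrade ``cut'' to ``$3$-cut.''
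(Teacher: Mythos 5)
Your ($\Rightarrow$) direction follows the paper's route: for each alleged violating back-edge you exhibit a tree-path-plus-back-edge that reconnects the two endpoints of one of the deleted tree edges. Note, though, that you repeatedly promise to upgrade such a reconnection to ``full connectivity of $G$ minus the three edges,'' which is both more than you need and not actually carried out; it suffices that the endpoints of one deleted tree edge stay connected, since then the remaining two edges would form a $2$-cut, contradicting $3$-edge-connectivity.

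The ($\Leftarrow$) direction contains a genuine error. You assert that under the hypothesis $B(w)=B(u)\sqcup B(v)$ ``no back-edge goes directly between $T(u)$ and $O$,'' and you conclude that $O$ is separated from $T(u)\cup T(v)\cup A$. This is false: every back-edge of $B(u)$ starts in $T(u)$ and, since $B(u)\subseteq B(w)$, ends at a proper ancestor of $w$, i.e.\ in $O$; and $B(u)\neq\emptyset$ because the graph is $2$-edge-connected. So after the deletion $T(u)$ (and likewise $T(v)$) remains joined to $O$, and the bipartition you name is not a cut at all. The piece that actually gets isolated is the annulus $A=T(w)\setminus(T(u)\cup T(v))$, which is nonempty because it contains $w$: a back-edge starting in $A$ ends at an ancestor of its starting vertex, so it cannot end in $T(u)$ or $T(v)$ (that would force its starting vertex into $T(u)$ or $T(v)$), and it cannot end in $O$ by your own correct observation that such an edge would lie in $B(w)$ but in neither $B(u)$ nor $B(v)$. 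Hence all back-edges out of $A$ stay inside $A$, and the split is $A$ versus $T(u)\cup T(v)\cup O$, exactly as in the paper. With that correction the direction goes through.
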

\begin{proof}
($\Rightarrow$) Let $(x,y)\in B(w)$, and let's assume that $(x,y)\notin B(u)$. Since $y$ is a proper ancestor of $w$, and therefore a proper ancestor of $u$, from $(x,y)\notin B(u)$ we infer that $x$ is not a descendant of $u$. Suppose for the sake of contradiction that $x$ is not a descendant of $v$, either. This means that neither $u$ nor $v$ is in $T[x,w]$, and so, by removing the edges $(u,p(u))$, $(v,p(v))$ and $(w,p(w))$, $w$ remains connected with $p(w)$ through the path $T[w,x],(x,y),T[y,p(w)]$. This contradicts that fact that $\{(u,p(u)),(v,p(v)),(w,p(w))\}$ is a $3$-cut. Thus we have established that $x$ is a descendant of $v$. Since $y$ is also a proper ancestor of $v$, we have $(x,y)\in B(v)$. Thus we have shown that $B(w)\subseteq B(u)\cup B(v)$. Conversely, let $(x,y)\in B(u)\cup B(v)$, and assume, without loss of generality, that $(x,y)\in B(u)$. Then, $x$ is a descendant of $u$, and therefore a descendant of $w$. Now suppose, for the sake of contradiction, that $y$ is not a proper ancestor of $w$. Then we have $w\notin T[p(u),y)$, and since $w$ is not a descendant of $u$, we also have $w\notin T[x,u]$. Furthermore, since $u$ and $v$ are not related as ancestor and descendant, $v$ is not contained neither in $T[p(u),y)$ nor in $T[x,u]$. Thus, by removing the edges $(u,p(u))$, $(v,p(v))$ and $(w,p(w))$, $u$ remains connected with $p(u)$ through the path $T[u,x],(x,y),T[y,p(u)]$. This contradicts that fact that $\{(u,p(u)),(v,p(v)),(w,p(w))\}$ is a $3$-cut. Thus we have shown that $y$ is a proper ancestor of $w$, and so we have that $(x,y)\in B(w)$. Thus we have established that $B(u)\cup B(v)\subseteq B(w)$, and so we have $B(w)=B(u)\cup B(v)$. Since $u$ and $v$ are not related as ancestor and descendant, we have $B(u)\cap B(v)=\emptyset$. We conclude that $B(w)=B(u)\sqcup B(v)$.\\
($\Leftarrow$) Consider the sets of vertices $T(u)$, $T(v)$, $A=T(w)\setminus(T(u)\cup T(v))$ and $B=T(r)\setminus T(w)$. Since $u$ and $v$ are not related as ancestor and descendant, and $w$ is an ancestor of both $u$ and $v$, these sets are mutually disjoint. Now, since $B(u)\subset B(w)$, all back-edges that start from $T(u)$ end either in $T(u)$ or in $B$. Similarly, since $B(v)\subset B(w)$, all back-edges that start from $T(v)$ end either in $T(v)$ or in $B$. Furthermore, a back-edge that starts from $A$ cannot reach $B$ and must necessarily end in $A$, since it starts from a descendant of $w$, but not from a descendant of either $u$ or $v$ (while we have $B(w)=B(u)\sqcup B(v)$). Thus, by removing from the graph the tree-edges $(u,p(u))$, $(v,p(v))$ and $(w,p(w))$, the graph becomes separated into two parts: $T(u)\cup T(v)\cup B$ and $A$.
\end{proof}

\begin{figure}
\begin{center}
\centerline{\includegraphics[trim={0cm 18cm 0cm 0cm}, scale=1,clip, width=\textwidth]{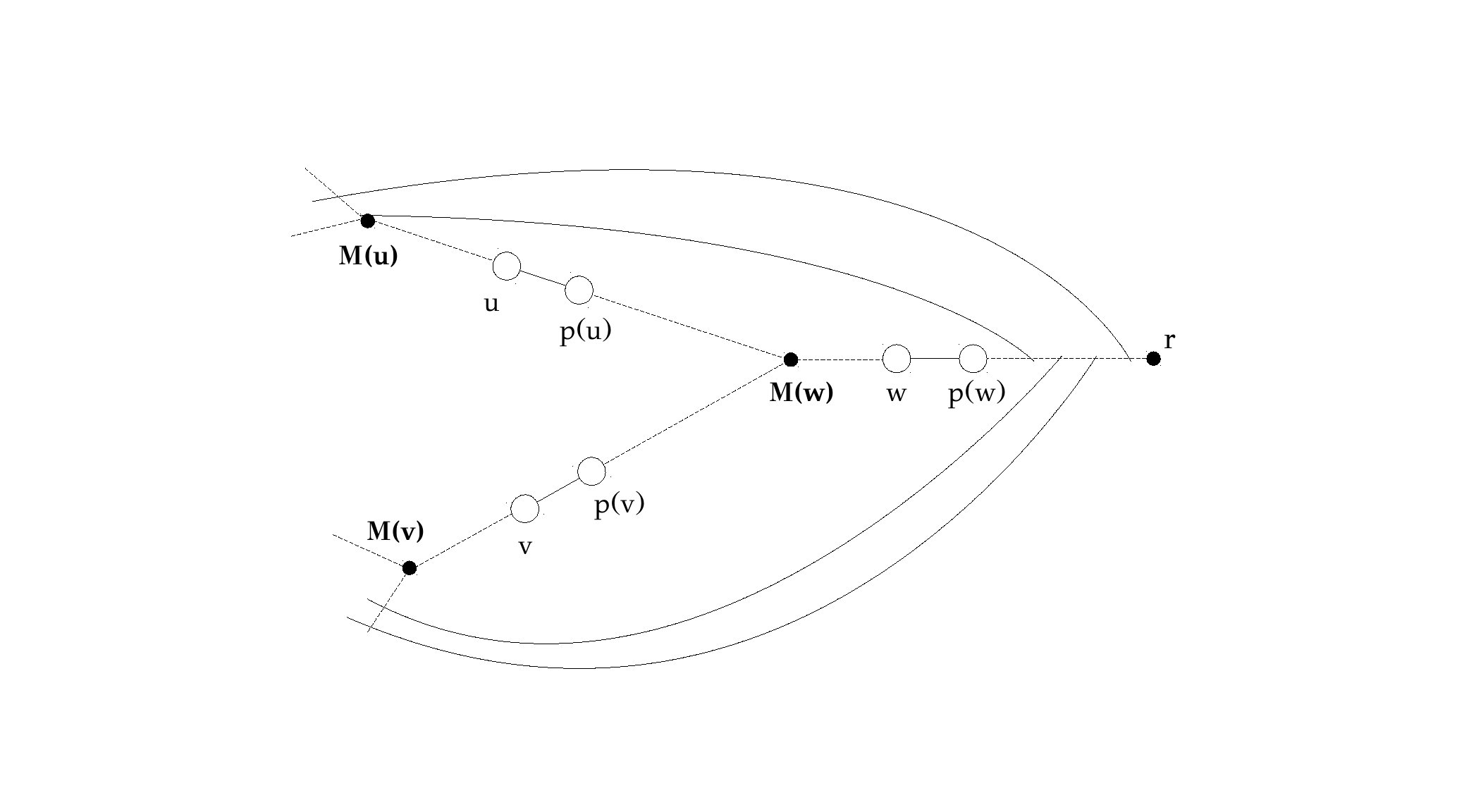}}
\caption{In this example we have $B(w)=B(u)\sqcup B(v)$. Observe that $M_{low1}(w)=M(u)$ and $M_{low2}(w)=M(v)$. Furthermore, $\mathit{high}(u)<w$ and $\mathit{high}(v)<w$. Also, if there is another vertex $u'$ with $M(u')=M(u)$, it must either be a descendant of $u$ or an ancestor of $w$. Thus, $u$ is the lowest vertex in $M^{-1}(M_{low1}(w))$ which is greater than $w$. Similarly, $v$ is the lowest vertex in $M^{-1}(M_{low2}(w))$ which is greater than $w$. By Lemmata \ref{lemma:wuv} and \ref{lemma:wuv_3}, these properties (together with $\mathit{b\_count}(w)=\mathit{b\_count}(u)+\mathit{b\_count}(v)$) are sufficient to establish $B(w)=B(u)\sqcup B(v)$. Notice also that, if we remove the tree-edges $(u,p(u))$, $(v,p(v))$ and $(w,p(w))$, the graph becomes disconnected into two components: $T(u)\cup T(v)\cup (T(r)\setminus T(w))$ and $T(w)\setminus (T(u)\cup T(v))$. (See also the ``$\Leftarrow$" part of the proof of proposition \ref{lemma:wuv_1}.)}
\label{figure:Muv}
\end{center}
\end{figure}

\begin{lemma}
\label{lemma:wuv}
Let $u$ and $v$ be two vertices which are not related as ancestor and descendant, and let $w$ be an ancestor of both $u$ and $v$. Then $B(w)=B(u)\sqcup B(v)$ if and only if: $M_{low1}(w)=M(u)$ and $M_{low2}(w)=M(v)$ (or $M_{low1}(w)=M(v)$ and $M_{low2}(w)=M(u)$), and $\mathit{high}(u)<w$, $\mathit{high}(v)<w$, and $\mathit{b\_count}(w)=\mathit{b\_count}(u)+\mathit{b\_count}(v)$.
\end{lemma}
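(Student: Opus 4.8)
The plan is to establish the two implications separately; the forward direction $(\Rightarrow)$ carries essentially all the work.

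For $(\Rightarrow)$, suppose $B(w) = B(u) \sqcup B(v)$. The identity $\mathit{b\_count}(w) = \mathit{b\_count}(u) + \mathit{b\_count}(v)$ is immediate. Since $B(u) \subseteq B(w)$ and $B(v) \subseteq B(w)$, every back-edge of $B(u)$ (resp.\ $B(v)$) has its higher endpoint a proper ancestor of $w$; because $w$ is an ancestor of $u$ (resp.\ $v$), this forces $\mathit{high}(u) < w$ and $\mathit{high}(v) < w$. The real content is the identification of $M_{low1}(w)$ and $M_{low2}(w)$. First I would show $M(w) = \mathrm{nca}(u,v)$: the lower endpoints of the back-edges in $B(w)$ are precisely the lower endpoints of the back-edges in $B(u)$ --- all descendants of $M(u) \in T(u)$ --- together with those of $B(v)$ --- all descendants of $M(v) \in T(v)$; as $u$ and $v$ are unrelated, the nearest common ancestor of this union is $\mathrm{nca}(M(u),M(v)) = \mathrm{nca}(u,v)$. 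Let $c$ (resp.\ $c'$) be the child of $M(w)$ having $u$ (resp.\ $v$) as a descendant; then $c \ne c'$, since $\mathrm{nca}(u,v) = M(w)$.

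Next I would prove that $\{c,c'\}$ is exactly the pair consisting of the $\mathit{low1}$ and $\mathit{low2}$ children of $M(w)$. On one hand $\mathit{low}(c) < w$ and $\mathit{low}(c') < w$: $B(u)$ is nonempty and contained in $B(w)$, so it contains a back-edge whose higher endpoint is a proper ancestor of $w$, hence a proper ancestor of $c$ (as $c$ is a proper descendant of $w$), so this back-edge lies in $B(c)$; symmetrically for $c'$. On the other hand, for any other child $c''$ of $M(w)$ we have $\mathit{low}(c'') \ge w$: the subtree $T(c'')$ is disjoint from $T(u) \cup T(v)$, so a back-edge from $T(c'')$ to a proper ancestor of $w$ would belong to $B(w) = B(u) \sqcup B(v)$ and therefore start inside $T(u) \cup T(v)$ --- a contradiction. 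Hence $c$ and $c'$ are the two children of $M(w)$ of least $\mathit{low}$ value. Finally, whichever of $c,c'$ is the $\mathit{low1}$ child, say $c$, the set $\{x : \exists\, (x,y) \in B(w),\ x \in T(c)\}$ coincides with $\{x : \exists\,(x,y) \in B(u)\}$ (a back-edge of $B(w)$ starting in $T(c)$ cannot lie in $B(v)$ because $T(c)$ is disjoint from $T(v)$, and conversely $B(u) \subseteq B(w)$ with $T(u) \subseteq T(c)$), so $M_{low1}(w) = M(u)$, and symmetrically $M_{low2}(w) = M(v)$; if instead $c'$ is the $\mathit{low1}$ child, the roles of $M(u)$ and $M(v)$ are interchanged, which is the second alternative in the statement.

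For $(\Leftarrow)$, I would use only the hypotheses $\mathit{high}(u) < w$, $\mathit{high}(v) < w$, $\mathit{b\_count}(w) = \mathit{b\_count}(u) + \mathit{b\_count}(v)$, together with the fact that $u$ and $v$ are unrelated (so $B(u) \cap B(v) = \emptyset$, since a common back-edge would have its lower endpoint a descendant of both). For $(x,y) \in B(u)$: $x$ is a descendant of $u$, hence of $w$, while $y$ is an ancestor of $u$ with $y \le \mathit{high}(u) < w$, hence a proper ancestor of $w$; thus $(x,y) \in B(w)$. So $B(u) \subseteq B(w)$ and symmetrically $B(v) \subseteq B(w)$, whence $B(u) \sqcup B(v) \subseteq B(w)$; comparing cardinalities via the $\mathit{b\_count}$ equation yields $B(w) = B(u) \sqcup B(v)$. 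The main obstacle is the middle step of $(\Rightarrow)$ --- proving that the children $c,c'$ of $M(w)$ lying above $u$ and $v$ are precisely the $\mathit{low1}$ and $\mathit{low2}$ children of $M(w)$ --- which rests on correctly locating $M(w)$ as $\mathrm{nca}(u,v)$ and on the structural consequence of $B(w) = B(u) \sqcup B(v)$ that no vertex of $T(w) \setminus (T(u) \cup T(v))$ is joined by a back-edge to a proper ancestor of $w$.
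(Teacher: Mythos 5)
Your proof is correct and follows essentially the same route as the paper's: the backward direction is identical (containment via $\mathit{high}(u),\mathit{high}(v)<w$ plus the cardinality count), and the forward direction hinges on the same key step of showing that $u$ and $v$ hang below the two distinct children of $M(w)$ with $\mathit{low}$ value below $w$ while all other children have $\mathit{low}\geq w$. Your identification $M(w)=\mathrm{nca}(M(u),M(v))=\mathrm{nca}(u,v)$ is a slightly more direct way to obtain the paper's intermediate claim that $M(u)$ and $M(v)$ lie under different children of $M(w)$, but the substance of the argument is the same.
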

\begin{proof}
($\Rightarrow$) $\mathit{b\_count}(w)=\mathit{b\_count}(u)+\mathit{b\_count}(v)$ is an immediate consequence of $B(w)=B(u)\sqcup B(v)$. Furthermore, since every $(x,y)\in B(u)$ is also in $B(w)$, it has $y<w$, and so $\mathit{high}(u)<w$. With the same reasoning, we also get $\mathit{high}(v)<w$. Now, since $B(w)=B(u)\sqcup B(v)$, we have that $M(w)$ is an ancestor of both $M(u)$ and $M(v)$. Since $u$ and $v$ are not related as ancestor and descendant, $M(u)$ and $M(v)$ are not related as ancestor or descendant, either. This implies that they are both proper descendants of $M(w)$. Now, suppose, for the sake of contradiction, that $M(u)$ and $M(v)$ are descendants of the same child $c$ of $M(w)$. Then there must exist a back-edge $(x,y)\in B(w)$ such that $x=M(w)$ or $x$ is a descendant of a child of $M(w)$ different from $c$. (Otherwise, $M(w)$ would be a descendant of $c$, which is absurd.) But this contradicts the fact that $B(w)=B(u)\sqcup B(v)$, since $(x,y)$ does not belong neither in $B(u)$ nor in $B(v)$. Thus, $M(u)$ and $M(v)$ are descendants of different children of $M(w)$. Furthermore, since every back-edge $(x,y)\in B(w)$ has $x$ in $T(u)$ or $T(v)$, there are no other children of $M(w)$ from whose subtrees begin back-edges that end in a proper ancestor of $w$. Thus, one of $M(u)$ and $M(v)$ is a descendant of the $\mathit{low1}$ child of $M(w)$, and the other is a descendant of the $\mathit{low2}$ child of $M(w)$. We may assume, without loss of generality, that $M(u)$ is a descendant of the $\mathit{low1}$ child of $M(w)$, and $M(v)$ is a descendant of the $\mathit{low2}$ child of $M(w)$. Since $B(u)\subset B(w)$, we have that $M(u)$ is a descendant of $M_{low1}(w)$. Furthermore, since $B(w)=B(u)\sqcup B(v)$ and $M(v)$ is not a descendant of the $\mathit{low1}$ child of $M(w)$, there are no back-edges $(x,y)$ with $x$ a descendant of the $\mathit{low1}$ child of $M(w)$ and $y$ a proper ancestor of $w$ apart from those contained in $B(u)$. Thus, $M(u)$ is an ancestor of $M_{low1}(w)$, and $M_{low1}(w)=M(u)$ is established. With the same reasoning, we also get $M_{low2}(w)=M(v)$. \\
($\Leftarrow$) Let $(x,y)\in B(u)$. Then $x$ is a descendant of $u$, and therefore a descendant of $w$. Furthermore, since $\mathit{high}(u)<w$, we have $y<w$, and therefore $y$ is a proper ancestor of $w$. This shows that $(x,y)\in B(w)$, and thus $B(u)\subseteq B(w)$. With the same reasoning, we also get $B(v)\subseteq B(w)$. Thus we have $B(u)\cup B(v)\subseteq B(w)$. Since $u$ and $v$ are not related as ancestor and descendant, we have $B(u)\cap B(v)=\emptyset$. From $B(u)\cup B(v)\subseteq B(w)$, $B(u)\cap B(v)=\emptyset$, and $\mathit{b\_count}(w)=\mathit{b\_count}(u)+\mathit{b\_count}(v)$, we conclude that $B(w)=B(u)\sqcup B(v)$.
\end{proof}

The following lemma shows, that, for every vertex $w$, there is at most one pair $u,v$ of descendants of $w$ which are not related as ancestor and descendant and are such that $\{(u,p(u)),(v,p(v)),(w,p(w))\}$ is a $3$-cut. Thus, the number of $3$-cuts of this type is $O(n)$. Furthermore, it allows us to compute $u$ and $v$ (if such a pair of $u$ and $v$ exists).

\begin{lemma}
\label{lemma:wuv_3}
Let $\{(u,p(u)),(v,p(v)),(w,p(w))\}$ be a $3$-cut such that $u$ and $v$ are not related as ancestor and descendant and let $w$ is an ancestor of both $u$ and $v$. Assume w.l.o.g. that $M_{low1}(w)=M(u)$ and $M_{low2}(w)=M(v)$, and let $m_1=M_{low1}(w)$ and $m_2=M_{low2}(w)$. Then $u$ is the lowest vertex in $M^{-1}(m_1)$ which is greater than $w$, and $v$ is the lowest vertex in $M^{-1}(m_2)$ which is greater that $w$.
\end{lemma}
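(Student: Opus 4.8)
The plan is to restate the hypotheses in terms of the back-edge sets $B(\cdot)$ and then argue by contradiction: if some vertex of $M^{-1}(m_1)$ lay strictly between $w$ and $u$ in preorder, the strict nesting of back-edge sets along a chain of vertices with a common $M$-value would clash with the inclusion $B(u)\subseteq B(w)$.

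First I would apply Proposition~\ref{lemma:wuv_1}: since $\{(u,p(u)),(v,p(v)),(w,p(w))\}$ is a $3$-cut with $u$ and $v$ unrelated and $w$ a common ancestor, $B(w)=B(u)\sqcup B(v)$; in particular $B(u)\subseteq B(w)$, so every back-edge in $B(u)$ ends at a proper ancestor of $w$. Note also that $w\neq u$ (otherwise $u$ would be an ancestor of $v$, contradicting that $u$ and $v$ are unrelated), so $w$ is a \emph{proper} ancestor of $u$ and hence $u>w$; together with $M(u)=m_1$ this gives $u\in M^{-1}(m_1)$ with $u>w$. Thus it suffices to show that no vertex $u'\in M^{-1}(m_1)$ satisfies $w<u'<u$.

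Suppose such a $u'$ exists. Since $m_1=M(u)$ is a descendant of $u$, which is a descendant of $w$, and $m_1=M(u')$ is a descendant of $u'$, the vertices $w$ and $u'$ share the common descendant $m_1$ and are therefore comparable in $T$; as $w<u'$, $w$ is a proper ancestor of $u'$. Likewise $u'$ and $u$ share the descendant $m_1$, so they are comparable, and $u'<u$ forces $u$ to be a proper descendant of $u'$. Since $M(u')=M(u)$ and the graph is $3$-edge-connected, the lemma stating that a proper descendant $u$ of $u'$ with $M(u)=M(u')$ satisfies $B(u')\subset B(u)$ applies, so there is a back-edge $(x,y)\in B(u)\setminus B(u')$. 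Then $x$ is a descendant of $u$, hence of $u'$; and, since $(x,y)\notin B(u')$, $y$ is not a proper ancestor of $u'$. But $(x,y)\in B(u)\subseteq B(w)$, so $y$ is a proper ancestor of $w$, and since $w$ is an ancestor of $u'$ this makes $y$ a proper ancestor of $u'$ --- a contradiction.

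Therefore $u$ is the lowest vertex of $M^{-1}(m_1)$ that exceeds $w$. The statement for $v$ follows by running exactly the same argument with $u,m_1$ replaced by $v,m_2$ and with $B(v)\subseteq B(w)$ in place of $B(u)\subseteq B(w)$ (both inclusions coming from $B(w)=B(u)\sqcup B(v)$). I expect the only delicate point to be the ancestor/descendant bookkeeping among $w$, $u'$, $u$ and $m_1$; this I would handle uniformly by invoking that any two vertices with a common descendant are comparable in $T$ and that, among comparable vertices, the one with the smaller preorder number is the ancestor.
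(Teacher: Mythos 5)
Your proof is correct and follows essentially the same route as the paper: invoke Proposition~\ref{lemma:wuv_1} to get $B(w)=B(u)\sqcup B(v)$, then derive a contradiction from a hypothetical $u'\in M^{-1}(m_1)$ with $w<u'<u$ via the strict nesting $B(u')\subset B(u)$ and the inclusion $B(u)\subset B(w)$. You merely make explicit some ancestor/descendant bookkeeping that the paper leaves implicit.
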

\begin{proof}
By Proposition \ref{lemma:wuv_1}, we have that $B(w)=B(u)\sqcup B(v)$. Now, suppose that there exists a $u'\in M^{-1}(m_1)$ which is lower than $u$ and greater than $w$. Then, $M(u')=M(u)$ implies that $B(u')\subset B(u)$, and so there is a back-edge $(x,y)\in B(u)\setminus B(u')$. This means that $y$ is not a proper ancestor of $u'$, and therefore not a proper ancestor of $w$, either. But this implies that $(x,y)\notin B(w)$, contradicting the fact that $B(u)\subset B(w)$. A similar argument shows that there does not exist a $v'\in M^{-1}(m_2)$ which is lower than $v$ and greater than $w$.
\end{proof}

Thus we only have to find, for every vertex $w$, the lowest element $u$ of $M^{-1}(M_{low1}(w))$ which is greater than $w$, and the lowest element $v$ of $M^{-1}(M_{low2}(w))$ which is greater than $w$, and check the condition in Lemma \ref{lemma:wuv} - i.e., whether $\mathit{high}(u)<w$, $\mathit{high}(v)<w$, and $\mathit{b\_count}(w)=\mathit{b\_count}(u)+\mathit{b\_count}(v)$. To do this efficiently, we process the vertices in a bottom-up fashion, and we keep in a variable
$\mathit{currentVertex}[x]$ the lowest element of $M^{-1}(x)$ currently under consideration. Thus, we do not need to traverse the list $M^{-1}(x)$ from the beginning each time we process a vertex. Algorithm \ref{algorithm:wuv} is an implementation of this procedure.

\begin{algorithm}
\caption{\textsf{Find all $3$-cuts $\{(u,p(u)),(v,p(v)),(w,p(w))\}$, where $w$ is an ancestor of $u$ and $v$, and $u,v$ are not related as ancestor and descendant}}
\label{algorithm:wuv}
\LinesNumbered
\DontPrintSemicolon
initialize an array $\mathit{currentVertex}$ with $n$ entries\;
\lForEach{vertex $x$}{$\mathit{currentVertex}[x] \leftarrow x$}
\For{$w\leftarrow n$ to $w=1$}{
  $m_1 \leftarrow M_{low1}(w)$, $m_2 \leftarrow M_{low2}(w)$\;

  \lIf{$m_1=\emptyset$ \textbf{or} $m_2=\emptyset$}{\textbf{continue}}

  \tcp{\textit{find the lowest $u$ in $M^{-1}(m_1)$ which is greater than $w$}}
  $u \leftarrow \mathit{currentVertex}[m_1]$\;
  \lWhile{$\mathit{nextM}(u)\neq\emptyset$ \textbf{and} $\mathit{nextM}(u)> w$}{$u \leftarrow \mathit{nextM}(u)$}
  $\mathit{currentVertex}[m_1] \leftarrow u$\;

  \tcp{\textit{find the lowest $v$ in $M^{-1}(m_2)$ which is greater than $w$}}
  $v \leftarrow \mathit{currentVertex}[m_2]$\;
  \lWhile{$\mathit{nextM}(v)\neq\emptyset$ \textbf{and} $\mathit{nextM}(v)> w$}{$v \leftarrow \mathit{nextM}(v)$}
  $\mathit{currentVertex}[m_2] \leftarrow v$\;

  \tcp{\textit{check the condition in Lemma \ref{lemma:wuv}}}
  \If{$\mathit{b\_count}(w)=\mathit{b\_count}(u)+\mathit{b\_count}(v)$ \textbf{and} $\mathit{high}(u)<w$ \textbf{and} $\mathit{high}(v)<w$}{
     mark the triplet $\{(u,p(u)),(v,p(v)),(w,p(w))\}$\;
  }
}
\end{algorithm}

\subsubsection{$u$ and $v$ are related as ancestor and descendant}
\label{section:uvw}

Throughout this section it will be useful to keep in mind the situation depicted in Figure \ref{figure:uvw}.

\begin{proposition}
\label{proposition:uvw}
Let $u,v,w$ be three vertices such that $u$ is a descendant of $v$ and $v$ is a descendant of $w$. Then $\{(u,p(u)),(v,p(v)),(w,p(w))\}$ is a $3$-cut if and only if $B(v)=B(u)\sqcup B(w)$.
\end{proposition}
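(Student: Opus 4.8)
The plan is to examine the partition of the DFS tree obtained by deleting the three tree-edges. Assume (as is forced by the cut having three distinct edges) that $u$ is a proper descendant of $v$ and $v$ a proper descendant of $w$, and set $V_1=T(u)$, $V_2=T(v)\setminus T(u)$, $V_3=T(w)\setminus T(v)$, $V_4=T(r)\setminus T(w)$. These four sets partition $V$, each of them induces a connected subgraph of $G\setminus\{(u,p(u)),(v,p(v)),(w,p(w))\}$, and the only tree-edges running between distinct parts are precisely the three deleted ones. Since a back-edge joins a descendant to a proper ancestor, for every pair $\{i,j\}$ there may be back-edges with one end in $V_i$ and the other in $V_j$; write $E_{ij}$ for that set. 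A careful check of which proper ancestors are reachable from each of $T(u),T(v),T(w)$ yields
\[
B(u)=E_{12}\sqcup E_{13}\sqcup E_{14},\qquad B(v)=E_{13}\sqcup E_{14}\sqcup E_{23}\sqcup E_{24},\qquad B(w)=E_{14}\sqcup E_{24}\sqcup E_{34}.
\]
Every further step reduces to manipulating these three identities together with the facts $|B(x)|>1$ and $B(x)\neq B(y)$ (for $x\neq y$) guaranteed by Fact~\ref{fact:3edgeconn}.

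For the ``$\Leftarrow$'' direction I would assume $B(v)=B(u)\sqcup B(w)$ and read off, from the displayed identities, that $B(u)\subseteq B(v)$ forces $E_{12}=\emptyset$, $B(w)\subseteq B(v)$ forces $E_{34}=\emptyset$, $B(u)\cap B(w)=\emptyset$ forces $E_{14}=\emptyset$, and then $B(u)\cup B(w)=B(v)$ forces $E_{23}=\emptyset$. Thus the only nonempty inter-part back-edge sets are $E_{13}=B(u)$ and $E_{24}=B(w)$, which are nonempty because $G$ is $3$-edge-connected. Consequently $V_1\cup V_3$ (connected through the tree inside $V_1$ and $V_3$ and through $E_{13}$) and $V_2\cup V_4$ (connected through $E_{24}$) are the two connected components of $G\setminus\{(u,p(u)),(v,p(v)),(w,p(w))\}$; since this set has three edges, it is a $3$-cut.

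For the ``$\Rightarrow$'' direction I would use that a $3$-cut makes the multigraph $Q$ on $\{V_1,V_2,V_3,V_4\}$ whose edge sets are the $E_{ij}$ disconnected. First I would show $Q$ has no isolated vertex: $V_1$ and $V_4$ are incident to a back-edge because $|B(u)|>1$ and $|B(w)|>1$; and if $V_2$ (resp. $V_3$) were isolated in $Q$, the only edges of $G$ leaving $V_2$ (resp. $V_3$) would be the two tree-edges $(u,p(u)),(v,p(v))$ (resp. $(v,p(v)),(w,p(w))$), making them a $2$-cut of $G$, contradicting $3$-edge-connectivity. A disconnected four-vertex multigraph with no isolated vertex splits into two pairs; the split $\{V_1,V_2\}\mid\{V_3,V_4\}$ is impossible because it would give $B(v)=\emptyset$, and $\{V_1,V_4\}\mid\{V_2,V_3\}$ is impossible because it would give $B(u)=E_{14}=B(w)$ with $u\neq w$ (Fact~\ref{fact:3edgeconn}). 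Hence the split is $\{V_1,V_3\}\mid\{V_2,V_4\}$, i.e. $E_{12}=E_{14}=E_{23}=E_{34}=\emptyset$, which by the displayed identities means $B(v)=E_{13}\sqcup E_{24}=B(u)\sqcup B(w)$.

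The routine-but-delicate part, and the only real obstacle, is proving the three displayed identities for $B(u),B(v),B(w)$: one has to track, case by case, into which of the four subtrees each endpoint of an arbitrary back-edge falls (for instance, no back-edge can go from $V_2$ or $V_3$ down into $V_1$, which is why $B(v)$ omits $E_{12}$). Once those identities and the ``an empty $E_{2\bullet}$ or $E_{3\bullet}$ gives a $2$-cut'' observation are in place, the remainder is a finite case check. Alternatively, one could give the more hands-on argument in the style of Propositions~\ref{proposition:B(u)B(v)} and~\ref{lemma:wuv_1}, exhibiting explicit paths that keep the relevant vertex attached to its parent whenever the back-edge condition fails, but the bookkeeping above seems cleaner.
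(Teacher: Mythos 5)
Your proposal is correct, and its two halves relate to the paper's proof differently. The ``$\Leftarrow$'' direction is essentially the paper's argument: the paper uses the same partition $A=T(u)$, $B=T(v)\setminus T(u)$, $C=T(w)\setminus T(v)$, $D=T(r)\setminus T(w)$ and traces where back-edges from each part can land, which is exactly your derivation that only $E_{13}$ and $E_{24}$ survive. The ``$\Rightarrow$'' direction, however, is genuinely different. The paper argues edge by edge: for each containment ($B(v)\subseteq B(u)\cup B(w)$, $B(u)\subseteq B(v)$, $B(w)\subseteq B(v)$, $B(u)\cap B(w)=\emptyset$) it assumes a violating back-edge and exhibits an explicit path keeping $u$, $v$, or $w$ attached to its parent, contradicting the cut property; the disjointness step even requires stitching together two back-edges with connecting paths $P,Q$ inside $T(u)$, $T(r)\setminus T(w)$, and $T(v)\setminus T(u)$. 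You instead pass to the quotient multigraph on $\{V_1,V_2,V_3,V_4\}$, rule out isolated vertices via $2$-cuts and Fact~\ref{fact:3edgeconn}, and observe that a disconnected four-vertex multigraph without isolated vertices splits into two pairs, of which only $\{V_1,V_3\}\mid\{V_2,V_4\}$ survives. This buys a shorter, more systematic case analysis that makes the ``interleaved'' structure of the cut transparent and avoids the path bookkeeping; the price is that all the work is front-loaded into the three identities $B(u)=E_{12}\sqcup E_{13}\sqcup E_{14}$, $B(v)=E_{13}\sqcup E_{14}\sqcup E_{23}\sqcup E_{24}$, $B(w)=E_{14}\sqcup E_{24}\sqcup E_{34}$, which you correctly flag as the load-bearing step. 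These identities do hold (each $V_i$ is tree-connected, the only tree-edges between parts are the three deleted ones, and every non-tree edge joins a vertex to an ancestor, so each $E_{ij}$ with $i<j$ is oriented from $V_i$ up into $V_j$), so the gap you leave is genuinely routine; for a submitted proof you should still write out that verification explicitly.
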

\begin{proof}
($\Rightarrow$) Let $(x,y)\in B(v)$, and assume that $(x,y)\notin B(u)$. $(x,y)\in B(v)$ implies that $y$ is a proper ancestor of $v$, and therefore a proper ancestor of $u$. Thus, $(x,y)\notin B(u)$ implies that $x$ is not a descendant of $u$. Furthermore, $(x,y)\in B(v)$ implies that $x$ is a descendant of $v$, and therefore a descendant of $w$. Now suppose, for the sake of contradiction, that $y$ is not a proper ancestor of $w$. Then, $w\notin T[p(v),y)$. Now we see that, by removing the edges $(u,p(u))$, $(v,p(v))$ and $(w,p(w))$ from the graph, $v$ remains connected with $p(v)$ through the path $T[v,x],(x,y),T[y,p(v)]$ (since $u,w \notin T[v,x]\cup T[p(v),y]$). This contradicts the fact that $\{(u,p(u)),(v,p(v)),(w,p(w))\}$ is a $3$-cut. Therefore, $y$ is a proper ancestor of $w$, and thus $(x,y)\in B(w)$. Thus far we have established that $B(v)\subseteq B(u)\cup B(w)$. Now let $(x,y)\in B(u)$. Then $x$ is a descendant of $u$, and therefore a descendant of $v$. Suppose, for the sake of contradiction, that $y$ is not a proper ancestor of $v$. Then, $v\notin T[p(u),y)$. Now we see that, by removing the edges $(u,p(u))$, $(v,p(v))$ and $(w,p(w))$ from the graph, $u$ remains connected with $p(u)$ through the path $T[u,x],(x,y),T[y,p(u)]$. This contradicts the fact that $\{(u,p(u)),(v,p(v)),(w,p(w))\}$ is a $3$-cut. Therefore, $y$ is a proper ancestor of $v$, and thus $(x,y)\in B(v)$. This shows that $B(u)\subseteq B(v)$. Now let $(x,y)\in B(w)$. Then $y$ is a proper ancestor of $w$, and therefore a proper ancestor of $v$. Suppose, for the sake of contradiction, that $x$ is not a descendant of $v$. Then $x$ is not a descendant of $u$, either, and so $u,v\notin T[x,w]$. Thus we see that, by removing the edges $(u,p(u))$, $(v,p(v))$ and $(w,p(w))$ from the graph, $w$ remains connected with $p(w)$ through the path $T[w,x],(x,y),T[y,p(w)]$. This contradicts the fact that $\{(u,p(u)),(v,p(v)),(w,p(w))\}$ is a $3$-cut. Therefore, $x$ is a descendant of $v$, and thus $(x,y)\in B(v)$. This shows that $B(w)\subseteq B(v)$. Thus we have established that $B(u)\cup B(w)\subseteq B(v)$, and so we have $B(v)=B(u)\cup B(w)$.

Now suppose, for the sake of contradiction, that there is a back-edge $(x,y)\in B(u)\cap B(w)$. Since $B(u)\neq B(w)$ (for otherwise $u=w$), there must exist a back-edge $(x',y')$ in $B(u)\setminus B(w)$ or in $B(w)\setminus B(u)$. Take the first case, first. Then, since $B(u)\subseteq B(v)$, $y'$ is a proper ancestor of $v$. But since $(x',y')\notin B(w)$, $y'$ cannot be a proper ancestor of $w$. Let $P$ be a path connecting $x'$ with $x$ in $T(u)$. Then, by removing the tree-edges $(u,p(u))$, $(v,p(v))$ and $(w,p(w))$, $w$ remains connected with $p(w)$ through the path $T[w,y'],(x',y'),P,(x,y),T[y,p(w)]$, which contradicts the assumption that $\{(u,p(u)),(v,p(v)),(w,p(w))\}$ is a $3$-cut. Now take the case $\exists (x',y')\in B(w)\setminus B(u)$. Then, since $B(w)\subseteq B(v)$, $x'$ is a descendant of $v$. But since $(x',y')\notin B(u)$, $x'$ cannot be a descendant of $u$. Let $P$ be a path connecting $y$ with $y'$ in $T(r)\setminus T(w)$, and $Q$ be a path connecting $x'$ with $p(u)$ in $T(v)\setminus T(u)$. Then, by removing the tree-edges $(u,p(u))$, $(v,p(v))$ and $(w,p(w))$, $u$ remains connected with $p(u)$ through the path $T[u,x],(x,y),P,(y',x'),Q$, which contradicts the assumption that $\{(u,p(u)),(v,p(v)),(w,p(w))\}$ is a $3$-cut. This shows that $B(u)\cap B(w)=\emptyset$. We conclude that $B(v)=B(u)\sqcup B(w)$.\\
($\Leftarrow$) Consider the sets of vertices $A=T(u)$, $B=T(v)\setminus T(u)$, $C=T(w)\setminus T(v)$ and $D=T(r)\setminus T(w)$. Since $u$ is a descendant of $v$ and $v$ is a descendant of $w$, these sets are mutually disjoint. Now, since $B(u)\subset B(v)$ and $B(u)\cap B(w)=\emptyset$, every back-edge that starts from $A$ ends either in $A$ or in $T(v,w]$, and thus in $C$. Furthermore, every back-edge that starts from $B$ and does not end in $B$, is a back-edge that starts from $T(v)$, but not from $T(u)$, and ends in a proper ancestor of $v$; thus, since $B(v)=B(u)\sqcup B(w)$, it ends in $T(w,r]$, and thus in $D$. Finally, every back-edge that starts from $C$ must end in $C$, since $B(w)\subset B(v)$. Thus we see, that, by removing from the graph the tree-edges $(u,p(u))$, $(v,p(v))$ and $(w,p(w))$, the graph becomes separated into two parts: $A\cup C$ and $B\cup D$.
\end{proof}

\begin{figure}
\begin{center}
\centerline{\includegraphics[trim={0cm 23cm 0cm 0cm}, scale=1,clip, width=\textwidth]{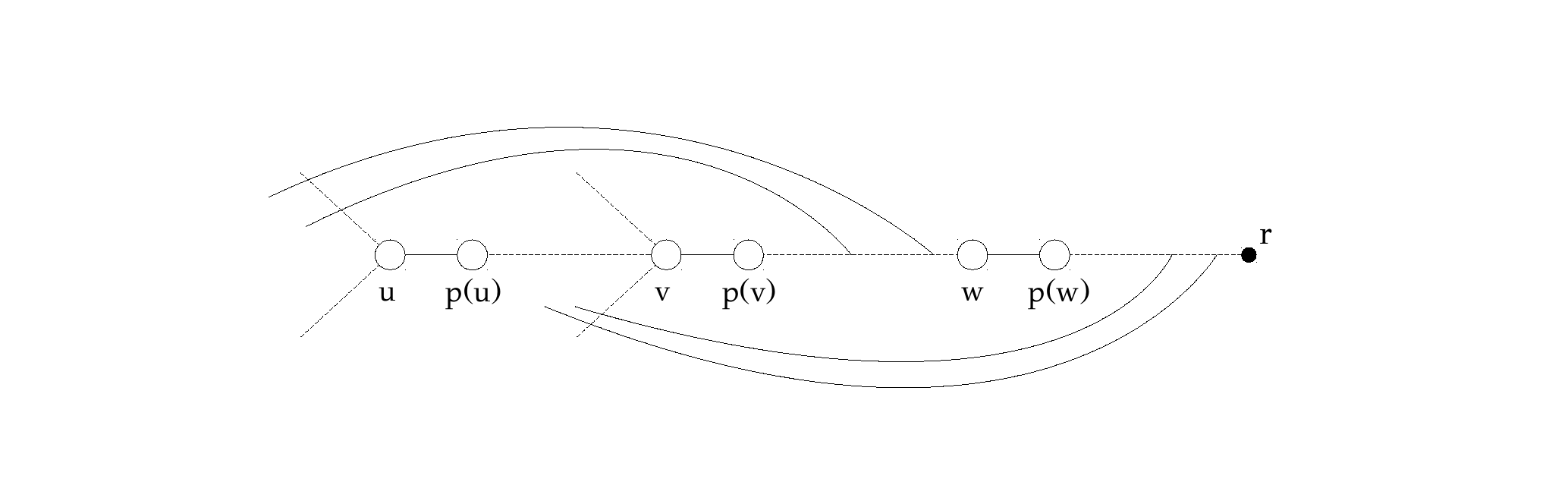}}
\caption{In this example we have $B(v)=B(u)\sqcup B(w)$. By removing the tree-edges $(u,p(u))$, $(v,p(v))$ and $(w,p(w))$, the graph becomes disconnected into two components: $T(u)\cup(T(w)\setminus T(v))$ and $(T(v)\setminus T(u))\cup(T(r)\setminus T(w))$. (See also the ``$\Leftarrow$" part of the proof of proposition \ref{proposition:uvw}.)}
\label{figure:uvw}
\end{center}
\end{figure}

\begin{corollary}
\label{unique_edge}
If $(u,p(u))$, $(v,p(v))$ are two tree-edges, there is at most one $w$ such that $\{(u,p(u)),(v,p(v)),(w,p(w))\}$ is a $3$-cut.
\end{corollary}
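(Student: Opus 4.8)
The plan is to reduce the statement to the injectivity of the map $x\mapsto B(x)$ (Fact~\ref{fact:3edgeconn}), by showing that, once the two tree-edges $(u,p(u))$ and $(v,p(v))$ are fixed, the back-edge set $B(w)$ of any admissible third tree-edge is forced to a single value. The first step is a structural dichotomy for an arbitrary $3$-cut $\{(u,p(u)),(v,p(v)),(w,p(w))\}$ of three tree-edges: one of the three vertices --- call it the \emph{source} $s$ --- satisfies $B(s)=B(x)\sqcup B(y)$, where $\{x,y\}=\{u,v,w\}\setminus\{s\}$. Indeed, by the first lemma of Section~\ref{section:three_tree_edges}, the smallest of $u,v,w$ in preorder, say $s_0$, is an ancestor of the other two. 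If those other two are unrelated as ancestor and descendant, then $s_0$ is the apex of a ``branching'' triple and Proposition~\ref{lemma:wuv_1} gives $B(s_0)=B(x)\sqcup B(y)$, so $s:=s_0$. Otherwise the three vertices lie on a single tree-path with $s_0$ at the top; letting $b$ and $a$ be its middle and bottom vertices respectively, Proposition~\ref{proposition:uvw} gives $B(b)=B(a)\sqcup B(s_0)$, so $s:=b$. These two cases exhaust all admissible positions of $w$.

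Now fix $u$ and $v$ and split on which vertex is the source. If $s=w$ then $B(w)=B(u)\sqcup B(v)$, and in particular $B(u)\cap B(v)=\emptyset$. If $s=u$ then $B(u)=B(v)\sqcup B(w)$, so $B(w)=B(u)\setminus B(v)$ and $B(v)\subsetneq B(u)$ (the containment is proper since $B(w)\neq\emptyset$, because $w\neq r$). Symmetrically, if $s=v$ then $B(w)=B(v)\setminus B(u)$ and $B(u)\subsetneq B(v)$. The three side-conditions $B(u)\cap B(v)=\emptyset$, $B(v)\subsetneq B(u)$, and $B(u)\subsetneq B(v)$ are pairwise incompatible: the first together with either of the other two forces $B(u)=\emptyset$ or $B(v)=\emptyset$, impossible as $u,v\neq r$; the last two together give $B(u)=B(v)$, contradicting Fact~\ref{fact:3edgeconn}. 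Hence, for the fixed pair $u,v$, at most one of the three cases can occur, and in that one case $B(w)$ equals a single prescribed set. Since $x\mapsto B(x)$ is injective by Fact~\ref{fact:3edgeconn}, $w$ is uniquely determined, which is exactly the claim.

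I expect the main obstacle to be the bookkeeping around this dichotomy rather than any single hard step: a naive analysis produces up to three candidate values for $B(w)$, one per geometric position of $w$ relative to $u$ and $v$, and the crux is to notice that each position imposes its own constraint on the pair $(B(u),B(v))$ --- disjointness, or strict containment in one direction or the other --- and that these constraints cannot coexist, so only one position is ever feasible. Verifying that the ``branching'' and ``path'' cases of Propositions~\ref{lemma:wuv_1} and~\ref{proposition:uvw} genuinely cover every admissible $w$ (which is where the first lemma of Section~\ref{section:three_tree_edges} enters) also requires a little care.
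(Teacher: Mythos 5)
Your proof is correct and takes essentially the same approach as the paper: the paper derives the corollary from Propositions \ref{lemma:wuv_1} and \ref{proposition:uvw} without spelling out the details, and your argument (each position of $w$ forces $B(w)$ to be one of $B(u)\cup B(v)$, $B(u)\setminus B(v)$, or $B(v)\setminus B(u)$, the three accompanying side conditions on the pair $(B(u),B(v))$ are mutually exclusive, and injectivity of $B$ from Fact \ref{fact:3edgeconn} then pins down $w$) is precisely the omitted derivation.
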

\begin{proof}
This is a consequence of propositions \ref{lemma:wuv_1} and \ref{proposition:uvw}.
\end{proof}

Here we distinguish two cases, depending on whether $M(v)=M(w)$ or $M(v)\neq M(w)$.

\paragraph{$M(v)\neq M(w)$}

\begin{lemma}
\label{lemma:M(v)neqM(w)}

Let $u$ be a descendant of $v$ and $v$ a descendant of $w$, and $M(v)\neq M(w)$. Then, $\{(u,p(u)),(v,p(v)),(w,p(w))\}$ is a $3$-cut if and only if: $M(w)=M_{low1}(v)$ and $w$ is the greatest vertex with $M(w)=M_{low1}(v)$ which is lower than $v$, $M(u)=M_{low2}(v)$ and $u$ is the lowest vertex with $M(u)=M_{low2}(v)$, $\mathit{high}(u)<v$ and $\mathit{b\_count}(v)=\mathit{b\_count}(u)+\mathit{b\_count}(w)$. (See Figure \ref{figure:M(v)M(w)}.)
\end{lemma}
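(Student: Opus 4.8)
The plan is to invoke Proposition~\ref{proposition:uvw}: under the standing hypotheses, $\{(u,p(u)),(v,p(v)),(w,p(w))\}$ is a $3$-cut if and only if $B(v)=B(u)\sqcup B(w)$, so it suffices to prove that $B(v)=B(u)\sqcup B(w)$ is equivalent to the six listed conditions.

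For the direction ($\Rightarrow$), assume $B(v)=B(u)\sqcup B(w)$. The equality $\mathit{b\_count}(v)=\mathit{b\_count}(u)+\mathit{b\_count}(w)$ is a count, and $\mathit{high}(u)<v$ holds because every back-edge of $B(u)\subseteq B(v)$ ends at a proper ancestor of $v$. The substantial part is identifying the $M$-parameters. From $B(v)=B(u)\sqcup B(w)$ we get $M(v)=\mathrm{nca}(M(u),M(w))$. One first shows that no back-edge of $B(w)$ starts inside $T(u)$ (otherwise, since $w$ is an ancestor of $u$, such a back-edge would also lie in $B(u)$, contradicting disjointness); hence $M(w)\notin T(u)$, which rules out $M(v)=M(u)$ and forces $M(v)$ to be a \emph{proper} ancestor of $u$ and of both $M(u)$ and $M(w)$. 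Let $c$ and $c'$ be the children of $M(v)$ with $M(w)\in T(c)$ and $M(u)\in T(c')$; then $c\neq c'$, and $c,c'$ are the only children of $M(v)$ from whose subtrees a back-edge of $B(v)$ emanates, equivalently the only children of $M(v)$ with $\mathit{low}<v$. Since the $\mathit{low1}$ and $\mathit{low2}$ children of $M(v)$ also have $\mathit{low}<v$ (because $M_{low1}(v)$ is defined), it follows that $\{c,c'\}$ is exactly the pair consisting of the $\mathit{low1}$ and $\mathit{low2}$ children. To decide which is which I would use the component decomposition from the ``$\Leftarrow$'' part of Proposition~\ref{proposition:uvw}: one side of $G$ with the three tree-edges removed is $T(u)\cup(T(w)\setminus T(v))$, so every back-edge of $B(u)$ must end in $T(v,w]$, whence $\mathit{low}(u)\geq w$; on the other hand $\mathit{low}(w)<w$. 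Since $\mathit{low}(c)=\mathit{low}(w)$ and $\mathit{low}(c')=\mathit{low}(u)$, the child $c$ is the $\mathit{low1}$ child and $c'$ the $\mathit{low2}$ child, so $M_{low1}(v)=M(w)$ and $M_{low2}(v)=M(u)$.

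The extremality conditions follow by a uniform argument. If some $z$ with $w<z<v$ had $M(z)=M(w)$, then $z$ would be a proper descendant of $w$, so $B(w)\subsetneq B(z)$; a back-edge of $B(z)\setminus B(w)$ has higher endpoint a proper ancestor of $z$ but not a proper ancestor of $w$, and on checking that it lies in $B(v)=B(u)\sqcup B(w)$, hence in $B(u)$, one gets a $B(z)$-source lying both in $T(c)$ (since $M(z)=M(w)\in T(c)$) and in $T(u)\subseteq T(c')$ --- impossible; minimality of $u$ is handled the same way. For the converse, assuming the six conditions, one checks the three set relations: $B(u)\subseteq B(v)$ from $u$ being a descendant of $v$ and $\mathit{high}(u)<v$; $B(w)\subseteq B(v)$ from $M(w)=M_{low1}(v)$ (which is a descendant of $v$) and $w$ being a proper ancestor of $v$; and $B(u)\cap B(w)=\emptyset$ because a common back-edge would have its source in the subtree of the $\mathit{low1}$ child and in $T(u)$ at once --- and $u$ is a descendant of the $\mathit{low2}$ child, since otherwise $u$ would be an ancestor of $M(v)$ and $B(u)$ would acquire a source in the subtree of the $\mathit{low1}$ child, contradicting $M(u)=M_{low2}(v)$. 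Finally $\mathit{b\_count}(v)=\mathit{b\_count}(u)+\mathit{b\_count}(w)$ promotes these to $B(v)=B(u)\sqcup B(w)$, and Proposition~\ref{proposition:uvw} finishes.

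I expect the main obstacle to be the middle step: showing rigorously that the children $c$ and $c'$ of $M(v)$ carrying $M(w)$ and $M(u)$ are precisely the $\mathit{low1}$ and $\mathit{low2}$ children of $M(v)$, and in that order. This is the point at which the proof must use the geometry of the cut (through the component decomposition of Proposition~\ref{proposition:uvw}) rather than pure manipulation of back-edge sets, and it requires carefully tracking the $\mathit{low}$ points of $c$ and $c'$ against those of $w$ and $u$. The extremality arguments are the next most delicate point, although they all reduce to the single observation that any vertex sharing an $M$-value with $w$ (or with $u$) has all of its back-edge sources confined to a single child subtree of $M(v)$.
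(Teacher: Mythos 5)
Your overall route is the paper's: reduce via Proposition~\ref{proposition:uvw} to showing that the six listed conditions are equivalent to $B(v)=B(u)\sqcup B(w)$, place $M(u)$ and $M(w)$ in distinct child subtrees $T(c')$ and $T(c)$ of $M(v)$, identify $\{c,c'\}$ with the $\mathit{low1}$/$\mathit{low2}$ children, and then prove the extremality clauses. Your way of deciding which child is which — computing $\mathit{low}(c)=\mathit{low}(w)<w\le\mathit{low}(u)=\mathit{low}(c')$ — is a valid (and arguably cleaner) variant of the paper's argument, which instead observes that $v$ must get its $\mathit{low}$ point from $B(w)$; the parenthetical ``because $M_{low1}(v)$ is defined'' is circular in this direction but also unnecessary, since $c$ and $c'$ are exactly the children of $M(v)$ with $\mathit{low}<v$ and hence are the two children of smallest $\mathit{low}$. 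Your ``$\Leftarrow$'' direction matches the paper's.

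The genuine gap is the sentence ``minimality of $u$ is handled the same way.'' The two extremality clauses are not symmetric. For $w$ the lemma only requires maximality among vertices \emph{below $v$}, so every candidate $z$ satisfies $w<z<v$, and your argument (a back-edge of $B(z)\setminus B(w)$ lies in $B(v)$, hence in $B(u)$, hence is sourced in $T(M(u))\subseteq T(c')$ while also being sourced in $T(M(z))=T(M(w))\subseteq T(c)$) applies verbatim. For $u$ the lemma requires \emph{global} minimality in $M^{-1}(M_{low2}(v))$, i.e.\ $u=\mathit{lastM}(u)$, so a candidate $u'<u$ with $M(u')=M(u)$ may sit anywhere on $T[u,r]$; moreover $B(u')\subsetneq B(u)$, so a witness back-edge in $B(u)\setminus B(u')$ already belongs to $B(u)$ and yields no immediate source-location contradiction. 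You need a case split on the position of $u'$: if $u'$ is a descendant of $v$, the witness's endpoint $y$ satisfies $u'\le y$ and $y\le\mathit{high}(u)<v\le u'$, a contradiction; if $u'$ is a proper ancestor of $w$, then $B(u')\subseteq B(w)$ while $\emptyset\neq B(u')\subseteq B(u)$ and $B(u)\cap B(w)=\emptyset$; and if $u'$ lies strictly between $w$ and $v$, then $u'$ is an ancestor of $M(w)$, so $B(w)\subseteq B(u')$ forces a source of $B(u')$ into $T(c)$ even though all its sources lie in $T(M(u'))=T(M(u))\subseteq T(c')$ — this last case is the one your closing observation covers (the paper closes it with Lemma~\ref{lemma:Muv} instead). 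Without this case analysis the proof of the $u$-clause is incomplete.
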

\begin{proof}
($\Rightarrow$) By proposition \ref{proposition:uvw}, we have $B(v)=B(u)\sqcup B(w)$. This immediately establishes both $\mathit{high}(u)<v$ and $\mathit{b\_count}(v)=\mathit{b\_count}(u)+\mathit{b\_count}(w)$. Now, since $B(v)=B(u)\sqcup B(w)$, both $M(u)$ and $M(w)$ are descendants of $M(v)$. We will show that $M(u)$ and $M(w)$ are not related as ancestor and descendant. First, suppose that $M(u)$ is an ancestor of $M(w)$. Now let $(x,y)\in B(w)$. Then $x$ is a descendant of $M(w)$, and therefore a descendant of $M(u)$. Furthermore, $y$ is a proper ancestor of $w$, and therefore a proper ancestor of $u$. This shows that $(x,y)\in B(u)$, contradicting the fact that $B(u)\cap B(w)=\emptyset$. Now suppose that $M(w)$ is an ancestor of $M(u)$. Let $(x,y)\in B(v)$. Since $B(v)=B(u)\sqcup B(w)$, $x$ is a descendant of either $M(u)$ or $M(w)$. In either case, $x$ is a descendant of $w$. Due to the generality of $(x,y)$, this shows that $M(v)$ is a descendant of $M(w)$. Since $M(w)$ is also a descendant of $M(v)$, we get $M(w)=M(v)$, contradicting $M(w)\neq M(v)$. Thus we have established that $M(u)$ and $M(w)$ are not related as ancestor and descendant. Since $M(u)$ and $M(v)$ are descendants of $M(v)$, they must be proper descendants of $M(v)$. Now we will show that $M(u)$ and $M(w)$ are descendants of different children of $M(v)$. Suppose, for the sake of contradiction, that $M(u)$ and $M(w)$ are descendants of the same child $c$ of $M(v)$. Then, there must exist a back-edge $(x,y)\in B(v)$ such that $x=M(v)$ or $x$ is a descendant of a child of $M(v)$ different from $c$. (Otherwise, we would have that $M(v)$ is a descendant of $c$, which is absurd.) But this means that $(x,y)$ is neither in $B(u)$ nor in $B(w)$, contradicting the fact that $B(v)=B(u)\sqcup B(w)$. Thus, one of $M(u)$ and $M(w)$ is a descendant of the $\mathit{low1}$ child of $M(v)$, and the other is a descendant of the $\mathit{low2}$ child of $M(v)$. Observe that there does not exist a back-edge $(x,y)\in B(u)$ such that $y=\mathit{low}(v)$, for this would imply that $(x,y)\in B(w)$ (since $u$ is a descendant of $w$), and $B(u)$ does not meet $B(w)$. Thus, since $B(v)=B(u)\sqcup B(w)$, $v$ gets its $\mathit{low}$ point from $B(w)$. This shows that $M(w)$ is a descendant of the $\mathit{low1}$ child of $M(v)$ and $M(u)$ is a descendant of the $\mathit{low2}$ child of $M(v)$. Since $B(w)\subset B(v)$, we have that $M(w)$ is a descendant of $M_{low1}(v)$. Furthermore, since $B(v)=B(u)\sqcup B(w)$ and $M(u)$ is not a descendant of the $\mathit{low1}$ child of $M(v)$, there are no back-edges $(x,y)$ with $x$ a descendant of the $\mathit{low1}$ child of $M(v)$ and $y$ a proper ancestor of $v$ apart from those contained in $B(w)$. Thus, $M(w)$ is an ancestor of $M_{low1}(v)$, and $M_{low1}(v)=M(w)$ is established. With the same reasoning, we also get $M_{low2}(v)=M(u)$.

Now suppose, for the sake of contradiction, that there exists a vertex $w'$ with $M(w')=M(w)$ and $v>w'>w$. This implies that $B(w)\subset B(w')$, and thus there is a back-edge $(x,y)\in B(w')\setminus B(w)$. Then $x$ is a descendant of $M(w')$, and therefore a descendant of $M_{low1}(v)$. Furthermore, $y$ is a proper ancestor of $w'$, and therefore a proper ancestor of $v$. This shows that $(x,y)\in B(v)$, and therefore, since $B(v)=B(u)\sqcup B(w)$ and $(x,y)\notin B(w)$, we have $(x,y)\in B(u)$. But $x$ is not a descendant of $M(u)$, since it is a descendant of $M(w)$ which is not related as ancestor or descendant with $M(u)$. That's a contradiction. Thus we have established that $w$ is the greatest vertex with $M(w)=M_{low1}(v)$ which is lower than $v$. Finally, suppose for the sake of contradiction that there exists a vertex $u'$ with $M(u')=M(u)$ and $u'<u$. This implies that $B(u')\subset B(u)$, and therefore there exists a back-edge $(x,y)\in B(u)\setminus B(u')$. Then, $y$ is a proper ancestor of $u$ and a descendant of $u'$. Since $\mathit{high}(u)<v$, we have $y<v$, and therefore $u'$ is an ancestor of $v$. Now suppose that $u'$ is an ancestor of $w$. Let $(x',y')\in B(u')$. Then $x'$ is a descendant of $M(u')$, and therefore a descendant of $M(u)$, and therefore a descendant of $u$, and therefore a descendant of $w$. Furthermore, $y'$ is a proper ancestor of $u'$, and therefore a proper ancestor of $w$. This shows that $(x',y')\in B(w)$. But this cannot be the case, since $(x',y')\in B(u') \subset B(u)$ and $B(u)\cap B(w)=\emptyset$. Thus, $u'$ is a descendant of $w$. Since $u'$ is an ancestor of $v$, it is also an ancestor of $M_{low1}(v)=M(w)$. Thus, Lemma \ref{lemma:Muv} implies that $M(u')$ is an ancestor of $M(w)$. But, since $M(u')=M(u)$, this contradicts the fact that $M(u)$ and $M(w)$ are not related as ancestor and descendant. Thus we have established that $u$ is the lowest vertex with $M(u)=M_{low2}(v)$.\\
($\Leftarrow$) By proposition \ref{proposition:uvw}, it is sufficient to prove that $B(v)=B(u)\sqcup B(w)$. First, let $(x,y)\in B(u)$. Then $x$ is a descendant of $u$, and therefore a descendant of $v$. Furthermore, $y\leq\mathit{high}(u)<v$ implies that $y$ is a proper ancestor of $v$. This shows that $B(u)\subseteq B(v)$. Now let $(x,y)\in B(w)$. Then $y$ is a proper ancestor of $w$, and therefore a proper ancestor of $v$. Since $M(w)=M_{low1}(v)$, we have that $x$ is a descendant of $v$. This shows that $B(w)\subseteq B(v)$. Thus we have $B(u)\cup B(w)\subseteq B(v)$. Since $M(u)$ and $M(w)$ are not related as ancestor and descendant (for they are descendants of different children of $M(v)$), we have that $B(u)\cap B(w)=\emptyset$. In conjunction with $\mathit{b\_count}(v)=\mathit{b\_count}(u)+\mathit{b\_count}(w)$, from $B(u)\cup B(w)\subseteq B(v)$ and $B(u)\cap B(w)=\emptyset$ we conclude that $B(u)\sqcup B(w)=B(v)$.
\end{proof}

\begin{figure}
\begin{center}
\centerline{\includegraphics[trim={0cm 23cm 0cm 0cm}, scale=1,clip, width=\textwidth]{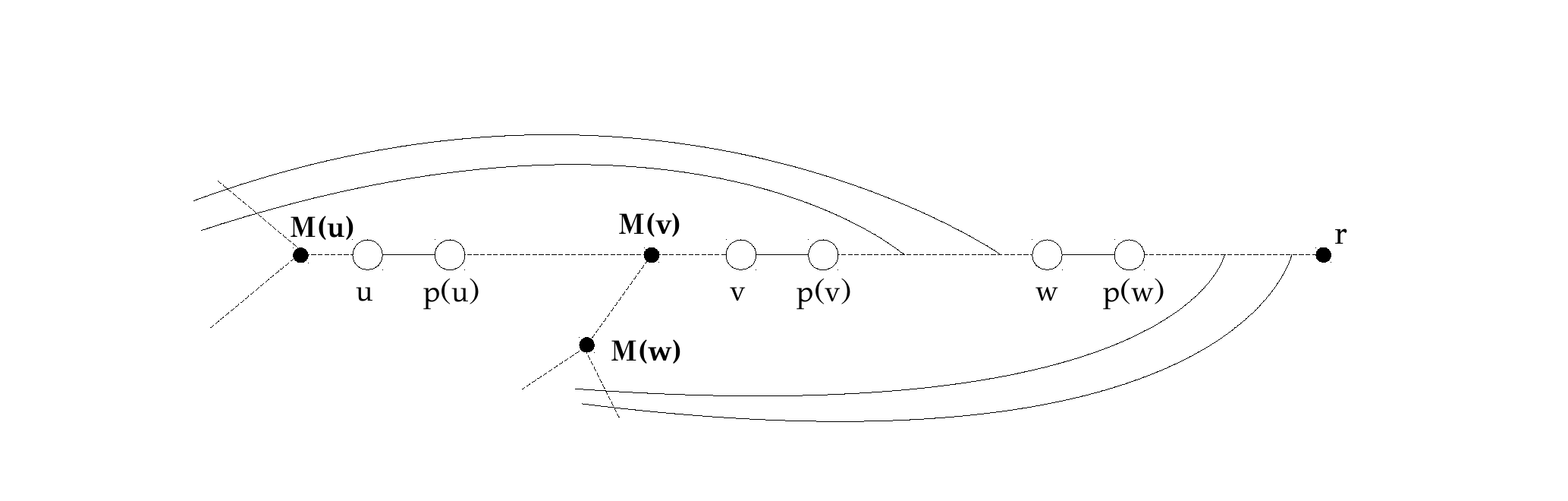}}
\caption{In this example we have $B(v)=B(u)\sqcup B(w)$. Observe that $M_{low1}(v)=M(w)$ and $M_{low2}(v)=M(u)$. $u$ is the last vertex in $M^{-1}(M(u))$, and $w$ is the greatest vertex in $M^{-1}(M(w))$ which is lower than $v$.}
\label{figure:M(v)M(w)}
\end{center}
\end{figure}

This lemma shows that, for every vertex $v$, there is at most one pair of vertices $u,w$, where $u$ is a descendant of $v$, $w$ is an ancestor of $v$, $M(v)\neq M(w)$, and $\{(u,p(u)),(v,p(v)),(w,p(w))\}$ is a $3$-cut. In particular, we have that $w$ is the greatest vertex with $M(w)=M_{low1}(v)$ which is lower than $v$, $u$ is the last vertex in $M^{-1}(M_{low2}(v))$, $\mathit{high}(u)<v$ and $\mathit{b\_count}(v)=\mathit{b\_count}(u)+\mathit{b\_count}(w)$. Thus, Algorithm \ref{algorithm:M(v)neqM(w)} shows how we can compute all $3$-cuts of this type. We only have to make sure that we can compute $w$ without having to traverse the list $M^{-1}(M_{low1}(v))$ from the beginning, each time we process a vertex $v$. To achieve this, we process the vertices in a bottom-up fashion, and we keep in an array $\mathit{currentM}[x]$ the current element of $M^{-1}(x)$ under consideration, so that we do not need to traverse the list $M^{-1}(x)$ from the beginning each time we process a vertex.

\begin{algorithm}[!h]
\caption{\textsf{Find all $3$-cuts $\{(u,p(u)),(v,p(v)),(w,p(w))\}$, where $u$ is a descendant of $v$, $v$ is a descendant of $w$, and $M(v)\neq M(w)$.}}
\label{algorithm:M(v)neqM(w)}
\LinesNumbered
\DontPrintSemicolon
\lForEach{vertex $v$}{$\mathit{currentVertex}[v] \leftarrow v$}
\For{$v\leftarrow n$ to $v=1$}{
  $m_1 \leftarrow M_{low1}(v)$, $m_2 \leftarrow M_{low2}(v)$\;

  \lIf{$m_1=\emptyset$ \textbf{or} $m_2=\emptyset$}{\textbf{continue}}

  \tcp{\textit{find the greatest $w$ in $M^{-1}(m_1)$ which is lower than $v$}}
  $w \leftarrow \mathit{currentVertex}(m_1)$\;
  \lWhile{$w\neq\emptyset$ \textbf{and} $w\geq v$}{$w \leftarrow \mathit{nextM}(w)$}
  $\mathit{currentVertex}[m_1] \leftarrow w$\;

  \tcp{\textit{$u$ is the last element of $M^{-1}(m_2)$}}
  $u \leftarrow \mathit{lastM}(m_2)$

  \tcp{\textit{check the condition in Lemma \ref{lemma:M(v)neqM(w)}}}
  \If{$w\neq\emptyset$ and $\mathit{high}(u)<v$ and $\mathit{b\_count}(v)=\mathit{b\_count}(u)+\mathit{b\_count}(w)$}{
     mark the triplet $\{(u,p(u)),(v,p(v)),(w,p(w))\}$\;
  }
}
\end{algorithm}


\paragraph{$M(v)=M(w)$}

Let $w$ be a proper ancestor of $v$ such that $M(v)=M(w)$. By corollary \ref{unique_edge}, there is at most one descendant $u$ of $v$ such that $\{(u,p(u)),(v,p(v)),(w,p(w))\}$ is a $3$-cut. In order to find this $u$ (if it exists), we distinguish two cases, depending on whether $w=\mathit{nextM}(v)$ or $w\neq\mathit{nextM}(v)$. In any case, we will need the following lemma, which gives a necessary condition for the existence of $u$.

\begin{lemma}
\label{lemma:u_same_high_with_v}
Let $u,v,w$ be three vertices such that $u$ is a descendant of $v$, $v$ is a descendant of $w$, and $M(v)=M(w)$. Then, $B(v)=B(u)\sqcup B(w)$ only if $\mathit{high}(u)=\mathit{high}(v)$ and $\mathit{nextM}(u)=\emptyset$.
\end{lemma}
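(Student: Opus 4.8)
The plan is to assume $B(v)=B(u)\sqcup B(w)$ and establish the two conclusions separately; the equality of $\mathit{high}$ values is routine, while $\mathit{nextM}(u)=\emptyset$ carries the real content.

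\emph{The equality $\mathit{high}(u)=\mathit{high}(v)$.} Since $B(v)=B(u)\cup B(w)$ with $B(u)$ and $B(w)$ both nonempty ($3$-edge-connectivity), $\mathit{high}(v)=\max\{\mathit{high}(u),\mathit{high}(w)\}$. Every back-edge of $B(w)$ ends at a proper ancestor of $w$, so $\mathit{high}(w)<w$. On the other hand, any $(x,y)\in B(u)\subseteq B(v)$ has $y$ a proper ancestor of $v$, and $y$ cannot be a proper ancestor of $w$: otherwise $x\in T(u)\subseteq T(w)$ would put $(x,y)$ in $B(w)$, contradicting $B(u)\cap B(w)=\emptyset$. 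Hence $y\ge w$ for every such $(x,y)$, so $\mathit{high}(u)\ge w>\mathit{high}(w)$, giving $\mathit{high}(v)=\mathit{high}(u)$.

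\emph{The condition $\mathit{nextM}(u)=\emptyset$.} Since $M^{-1}(M(u))$ is a chain of ancestors whose preorder-minimum is an ancestor of all its members, $\mathit{nextM}(u)=\emptyset$ is equivalent to the statement that no proper ancestor of $u$ has $M$-value equal to $M(u)$. I would prove this by contradiction: suppose $u'$ is a proper ancestor of $u$ with $M(u')=M(u)$; by the lemma (evoked implicitly throughout the paper) asserting that $M(u)=M(u')$ with $u$ a proper descendant of $u'$ forces $B(u')\subset B(u)$ in a $3$-edge-connected graph, we get $B(u')\subset B(u)\subseteq B(v)$. The first task is to locate $u'$. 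Note $u\ne v$ (otherwise $B(v)=B(u)$ forces $B(w)=\emptyset$), so $u$ is a proper descendant of $v$. If $u'\in\{v,w\}$, then $M(u)=M(v)$ (directly if $u'=v$; using the hypothesis $M(v)=M(w)$ if $u'=w$), and then $B(v)\subseteq B(u)\subseteq B(v)$ gives $B(u)=B(v)$, contradicting Fact~\ref{fact:3edgeconn}. If $u'$ were a proper ancestor of $w$, take $(x,y)\in B(u')$: here $x$ is a descendant of $M(u')=M(u)$, hence of $u$, hence of $w$, while $y$ is a proper ancestor of $u'$, hence of $w$; so $(x,y)\in B(w)$, contradicting $B(u')\subseteq B(u)$ and $B(u)\cap B(w)=\emptyset$. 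Hence $u'$ is a proper descendant of $w$ and a proper ancestor of $u$ with $u'\ne v$, so exactly one of the following holds: (A) $u'$ is a proper ancestor of $v$; (B) $u'$ is a proper descendant of $v$.

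Resolving cases (A) and (B) is the main obstacle, and in each I would exhibit a back-edge that cannot avoid lying in $B(u')$. Case (B) is quick: any $(x_0,y_0)\in B(u)\setminus B(u')$ has $y_0$ a proper ancestor of $v$, hence of $u'$, while $x_0\in T(u)\subseteq T(u')$; thus $(x_0,y_0)\in B(u')$, a contradiction. Case (A) is the genuinely new point: pick any $(x,y)\in B(w)$ (nonempty by $3$-edge-connectivity). Then $x\in T(v)\subseteq T(u')$, and $x\notin T(u)$ (else $(x,y)\in B(u)\cap B(w)$), and $y$ is a proper ancestor of $w$, hence of $u'$; so $(x,y)\in B(u')$, whence $x$ is a descendant of $M(u')=M(u)$, which is a descendant of $u$ — forcing $x\in T(u)$, contrary to $x\in T(v)\setminus T(u)$. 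So no such $u'$ exists, and $\mathit{nextM}(u)=\emptyset$. The one step beyond ancestor/descendant bookkeeping is this last observation: a back-edge leaving $T(w)$ is trapped inside $T(u')$, and so forces $M(u')$ to be an ancestor of a vertex lying outside $T(u)$, which is incompatible with $M(u')=M(u)$ since $M(u)$ is a descendant of $u$.
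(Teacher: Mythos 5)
Your proof is correct and follows essentially the same route as the paper's: $\mathit{high}(u)=\mathit{high}(v)$ comes from the same observation (a back-edge of $B(u)$ ending above $w$ would lie in $B(u)\cap B(w)$), and $\mathit{nextM}(u)=\emptyset$ is obtained by the same trichotomy on the position of a hypothetical proper ancestor $u'$ of $u$ with $M(u')=M(u)$, with your cases for $u'$ above $w$ and $u'$ below $v$ matching the paper's arguments. The only local difference is the case where $u'$ lies strictly between $w$ and $v$: the paper applies Lemma~\ref{lemma:Muv} to conclude $M(u)=M(v)$ and combines this with $\mathit{high}(u)=\mathit{high}(v)$ to force $B(u)=B(v)$, whereas you chase a back-edge of $B(w)$ into $B(u')$ and trap its upper endpoint in $T(M(u'))=T(M(u))\subseteq T(u)$, contradicting $B(u)\cap B(w)=\emptyset$ directly --- an equally valid and slightly more self-contained variant.
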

\begin{proof}
Let $(x,y)\in B(u)$ be such that $y=\mathit{high}(u)$. Then, since $B(v)=B(u)\sqcup B(w)$, we have $(x,y)\in B(v)$, and so $y\leq\mathit{high}(v)$. Suppose for the sake of contradiction that $y\neq\mathit{high}(v)$. Then, since $B(v)=B(u)\sqcup B(w)$,  there exists a $(x',y')\in B(w)$ such that $y'=\mathit{high}(v)$. Furthermore, since $y\neq\mathit{high}(v)$ and $(x,y)\in B(v)$, we have $y'>y$, which means that $y$ is a proper ancestor of $w$. But then, since $x$ is a descendant of $u$, it is also a descendant of $w$, and thus $(x,y)\in B(w)$, contradicting the fact that $B(u)\cap B(w)=\emptyset$. Thus we have shown that $\mathit{high}(u)=\mathit{high}(v)$.

Now suppose, for the sake of contradiction, that there exists a $u'$ which is a proper ancestor of $u$ with $M(u')=M(u)$. Then we have $B(u')\subset B(u)$. Now suppose, for the sake of contradiction, that $u'$ is an ancestor of $v$. Suppose that $u'$ is an ancestor of $w$. Let $(x,y)\in B(u')$. Then $x$ is a descendant of $M(u')$, and therefore a descendant of $M(u)$, and therefore a descendant of $u$, and therefore a descendant of $w$. Furthermore, $y$ is a proper ancestor of $u'$, and therefore a proper ancestor of $w$. This means that $(x,y)\in B(w)$, and thus we have $B(u')\subseteq B(w)$. But this contradicts $B(u)\cap B(w)=\emptyset$, since $B(u')\subset B(u)$. Thus, we have that $u'$ is a descendant of $w$. Then, since $u'$ is an ancestor of $v$, it is also an ancestor of $M(v)=M(w)$, and thus, by Lemma \ref{lemma:Muv}, $M(u')=M(u)$ is an ancestor of $M(v)$. Since $B(v)=B(u)\sqcup B(w)$, we have that $M(v)$ is an ancestor of $M(u)$, and thus $M(u)=M(v)$. In conjunction with $\mathit{high}(u)=\mathit{high}(v)$, this implies that $B(v)=B(u)$, contradicting the fact that the graph is $3$-edge-connected. Thus, we have that $u'$ is not an ancestor of $v$. Since $v$ and $u'$ have $u$ as a common descendant, we infer that $u'$ is a descendant of $v$. Now, since $B(u')\subset B(u)$, we have that there exists a back-edge $(x,y)\in B(u)\setminus B(u')$. Then, $y$ is descendant of $u'$, and therefore a descendant of $v$. But this means that $(x,y)\notin B(v)$, contradicting the fact that $B(u)\subset B(v)$. We conclude that there is not $u'\in M^{-1}(M(u))$ which is a proper ancestor of $u$.
\end{proof}



\subparagraph*{Case $w=\mathit{nextM}(v)$.}

Now we will show how to find, for every vertex $v$, the unique $u$ (if it exists) which is a descendant of $v$ and such that $\{(u,p(u)),(v,p(v)),(w,p(w))\}$ is a $3$-cut, where $w=\mathit{nextM}(v)$. Obviously, the number of $3$-cuts of this type is $O(n)$. According to Lemma \ref{lemma:u_same_high_with_v}, $\mathit{high}(u)=\mathit{high}(v)$, and therefore it is sufficient to seek this $u$ in $\mathit{high}^{-1}(\mathit{high}(v))$.

\begin{proposition}
\label{prop_w=nextM(v)}
Let $h=\mathit{high}(v)$ and $w=\mathit{nextM}(v)$, and suppose that the list $\mathit{high}^{-1}(h)$ is sorted in decreasing order. Then, $u$ is a descendant of $v$ such that $\{(u,p(u)),(v,p(v)),(w,p(w))\}$ is a $3$-cut if and only if $u$ is a predecessor of $v$ in $\mathit{high}^{-1}(h)$, $\mathit{nextM}(u)=\emptyset$, $\mathit{low}(u)\geq w$, $\mathit{b\_count}(u)=\mathit{b\_count}(v)-\mathit{b\_count}(w)$, and all elements of $\mathit{high}^{-1}(h)$ between $u$ and $v$ are ancestors of $u$.
\end{proposition}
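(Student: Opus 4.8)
The plan is to run the argument through Proposition~\ref{proposition:uvw}: since $u$ is a descendant of $v$ and $v$ a descendant of $w$, the triplet $\{(u,p(u)),(v,p(v)),(w,p(w))\}$ is a $3$-cut if and only if $B(v)=B(u)\sqcup B(w)$, so everything reduces to characterizing this partition. Two preliminary observations will be used throughout. First, because $v$ is a proper descendant of $w$ with $M(v)=M(w)$, the lemma on proper descendants (the one preceding the ``extensions'' subsection) gives $B(w)\subseteq B(v)$, so $\mathit{b\_count}(v)=\mathit{b\_count}(u)+\mathit{b\_count}(w)$ is equivalent to $|B(u)|=|B(v)\setminus B(w)|$. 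Second, $h=\mathit{high}(v)$ is a proper ancestor of $v$, hence $h<v$, and by Lemma~\ref{lemma:u_same_high_with_v} any $u$ yielding the $3$-cut satisfies $\mathit{high}(u)=h$ and $\mathit{nextM}(u)=\emptyset$; this is exactly why it is legitimate to confine the search to the list $\mathit{high}^{-1}(h)$, and (since $u$ is a descendant of $v$, so $u>v$) it forces $u$ to be a predecessor of $v$ in that list.

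For the forward direction, assume $u$ is a descendant of $v$ with $B(v)=B(u)\sqcup B(w)$. As just noted, $u$ is a predecessor of $v$ in $\mathit{high}^{-1}(h)$, $\mathit{nextM}(u)=\emptyset$, and $\mathit{b\_count}(u)=\mathit{b\_count}(v)-\mathit{b\_count}(w)$. For $\mathit{low}(u)\ge w$: $u$ is a descendant of $w$, so any back-edge of $B(u)$ whose higher endpoint were a proper ancestor of $w$ would also belong to $B(w)$, contradicting $B(u)\cap B(w)=\emptyset$; hence every back-edge of $B(u)$ ends at a vertex $\ge w$. For the last condition, let $z\in\mathit{high}^{-1}(h)$ with $v<z<u$. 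Since $u$ is a descendant of $v$ and $v<z<u<v+\mathit{ND}(v)$, $z$ is a proper descendant of $v$. If $z$ were not an ancestor of $u$, then $T(z)$ would be disjoint from $T(u)$, yet the witnessing back-edge $(\mathit{highD}(z),h)$ has $\mathit{highD}(z)\in T(v)$ and $h$ a proper ancestor of $v$, so it lies in $B(v)$; as $h\ge\mathit{low}(u)\ge w$ it is not in $B(w)$, hence it lies in $B(u)$ — impossible, since $\mathit{highD}(z)\notin T(u)$. So $z$ is an ancestor of $u$.

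For the backward direction, assume the five conditions; the substantive step is to show $u$ is a descendant of $v$. Suppose not; since $u>v$, the vertices $u$ and $v$ are unrelated, so $T(v)$ and $T(u)$ are disjoint and all of $T(v)$ lies below preorder $u$. Let $z$ be the descendant of $v$ of largest preorder with $\mathit{high}(z)=h$ (it exists, as $v$ itself qualifies). If $z\neq v$, then $z$ is a proper descendant of $v$, so $v<z<v+\mathit{ND}(v)\le u$, while $z\in T(v)$ is not an ancestor of $u$ — contradicting the last condition. This handles the generic situation. The remaining, degenerate case — $v$ is the only vertex of $T(v)$ with $\mathit{high}=h$ — is the main obstacle: here one must instead exploit $\mathit{low}(u)\ge w$, $\mathit{nextM}(u)=\emptyset$, and the cardinality identity to show that $B(u)$ either coincides with the back-edge set of a vertex in $M^{-1}(M(u))$ distinct from $u$ or fails to be disjoint from $B(w)$, contradicting either $3$-edge-connectivity (via Fact~\ref{fact:3edgeconn}) or one of the hypotheses. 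Once $u$ is known to be a descendant of $v$, the rest is routine: for every $(x,y)\in B(u)$ we have $x\in T(u)\subseteq T(v)$ and $w\le\mathit{low}(u)\le y\le\mathit{high}(u)=h<v$, so $y$ is a proper ancestor of $v$ but not of $w$; hence $B(u)\subseteq B(v)\setminus B(w)$, and combining this with $|B(u)|=|B(v)\setminus B(w)|$ and $B(w)\subseteq B(v)$ gives $B(v)=B(u)\sqcup B(w)$, so Proposition~\ref{proposition:uvw} yields the $3$-cut.

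In short, the forward direction is essentially bookkeeping on $B(u)$, $B(v)$, $B(w)$ once Lemma~\ref{lemma:u_same_high_with_v} is invoked; the work is concentrated in the backward direction, and within it in proving membership $u\in T(v)$ — condition (e) is what makes this possible in general, and framing the whole characterization around $\mathit{high}^{-1}(h)$ is precisely what the subsequent linear-time scan over that sorted list will rely on.
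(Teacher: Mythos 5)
Your forward direction, and the closing computation of your backward direction (deducing $B(v)=B(u)\sqcup B(w)$ from $u\in T(v)$ together with the five conditions and then invoking Proposition~\ref{proposition:uvw}), are correct and essentially identical to the paper's argument. The problem is the step you yourself flag as ``the main obstacle'': deriving $u\in T(v)$ from the five conditions when condition (e) gives no leverage. That step is not merely unfinished --- it is impossible, because the five conditions genuinely do not imply that $u$ is a descendant of $v$. Consider the simple $3$-edge-connected graph on $\{1,\dots,7\}$ (numbered in DFS preorder) with tree edges $(2,1),(3,2),(4,3),(5,4),(6,5),(7,4)$ and back-edges $(5,1),(6,1),(5,3),(6,2),(7,2),(7,3)$. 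By Fact~\ref{fact:3edgeconn} it is $3$-edge-connected; $M(5)=M(2)=5$ while $M(3)=M(4)=4$, so for $v=5$ we get $w=\mathit{nextM}(5)=2$ and $h=\mathit{high}(5)=3$; and $u=7$ satisfies every right-hand condition: $\mathit{high}^{-1}(3)=\{7,5,4\}$, so $7$ is the immediate predecessor of $5$ and condition (e) is vacuous; $\mathit{nextM}(7)=\emptyset$; $\mathit{low}(7)=2\geq w$; and $\mathit{b\_count}(7)=2=4-2=\mathit{b\_count}(5)-\mathit{b\_count}(2)$. Yet $7\notin T(5)$ and $\{(7,4),(5,4),(2,1)\}$ is not a $3$-cut --- its removal leaves the graph connected. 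So the strategy you sketch for the degenerate case is chasing a false implication.

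The paper never attempts this derivation: its backward direction opens with ``$x$ is a descendant of $u$, and therefore a descendant of $v$,'' i.e.\ it silently carries ``$u$ is a descendant of $v$'' as a hypothesis on both sides of the equivalence, and that is how the statement must be read. Nothing is lost in the application, because Algorithm~\ref{algorithm:w=nextM(v)} only ever tests candidates $u$ for which the entire run of $\mathit{high}^{-1}(h)$ from $u$ down to \emph{and including} $v$ is an unbroken ancestor chain (the array is flushed the moment an element fails to be an ancestor of its predecessor, and the lookup at $v$ occurs only after $v$ passes that same test), which already forces $v$ to be an ancestor of $u$. So the repair is not to prove $u\in T(v)$ but to assume it (equivalently, to strengthen condition (e) so that $v$ itself must be an ancestor of $u$ --- exactly the condition violated in the graph above). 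With that reading, your remaining argument is complete and coincides with the paper's; the correct division of labor is that Lemma~\ref{lemma:u_same_high_with_v} and condition (e) narrow the search, while the ancestry of $u$ is supplied by the scanning discipline, not by the proposition.
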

\begin{proof}
($\Rightarrow$) By proposition \ref{proposition:uvw}, we have $B(v)=B(u)\sqcup B(w)$. This shows that $\mathit{b\_count}(u)=\mathit{b\_count}(v)-\mathit{b\_count}(w)$ and $\mathit{low}(u)\geq w$ (for if we had $\mathit{low}(u)<w$, then $B(u)$ would intersect $B(w)$). Lemma \ref{lemma:u_same_high_with_v} shows that $\mathit{high}(u)=\mathit{high}(v)$ and $\mathit{nextM}(u)=\emptyset$. Since $u$ is a descendant of $v$, it is greater than $v$, and thus it is a predecessor of $v$ in $\mathit{high}^{-1}(x)$. Now suppose that there exists a $u'\in\mathit{high}^{-1}(x)$ which is lower than $u$ and greater than $v$, but it is not an ancestor of $u$. Since $u$ is a descendant of $v$, $v<u'<u$ implies that $u'$ is also a descendant of $v$. Let $(x,h)$ be a back-edge with $x$ a descendant of $u'$. Then $x$ is a also a descendant of $v$, and thus $(x,h)\in B(v)$. But since $u'$ is not a descendant of $u$, $x$ cannot be a descendant of $u$ either, and so $(x,h)\in B(v)$ and $B(v)=B(u)\sqcup B(w)$ both imply that $(x,h)\in B(w)$. However, $h=\mathit{high}(u)\geq\mathit{low}(u)\geq w$. A contradiction.\\
($\Leftarrow$) By proposition \ref{proposition:uvw}, it is sufficient to show that $B(v)=B(u)\sqcup B(w)$. Let $(x,y)\in B(u)$. Then $x$ is a descendant of $u$, and therefore a descendant of $v$. Furthermore, since $\mathit{high}(u)=\mathit{high}(v)$, we have that $y$ is a proper ancestor of $v$. This shows that $(x,y)\in B(v)$, and thus we have $B(u)\subseteq B(v)$. Now, since $M(v)=M(w)$ and $w=\mathit{nextM}(v)<v$, we have that $B(w)\subset B(v)$. Thus we have established that $B(u)\cup B(w)\subseteq B(v)$. Now observe that $B(u)\cap B(w)=\emptyset$: for if $(x,y)\in B(u)$, then $y\geq\mathit{low}(u)$, and we have assumed that $\mathit{low}(u)\geq w$; thus, $(x,y)\notin B(w)$. Now, since $\mathit{b\_count}(u)=\mathit{b\_count}(v)-\mathit{b\_count}(w)$ and $B(u)\cup B(w)\subseteq B(v)$ and $B(u)\cap B(w)=\emptyset$, we conclude that $B(v)=B(u)\sqcup B(w)$.
\end{proof}

Now let $h$ be a vertex. Based on proposition \ref{prop_w=nextM(v)}, we will show how to find, for every $v$ in the decreasingly sorted list $\mathit{high}^{-1}(h)$, the unique vertex $u\in\mathit{high}^{-1}(h)$ (if it exists) such that $\{(u,p(u)),(v,p(v)),(w,p(w))\}$ is a $3$-cut, where $w=\mathit{nextM}(v)$. To do this, we need an array $A$ of size $m$ (the number of edges of the graph), and a stack $S$. We begin by traversing the list $\mathit{high}^{-1}(h)$ from its first element, and every $u$ we meet that satisfies $\mathit{nextM}(u)=\emptyset$ and is an ancestor of its predecessor (or the first element of the list) we push it in $S$ and also store it in $A[\mathit{b\_count}(u)]$. If $u$ is not an ancestor of its predecessor, we set $A[z]=\emptyset$, for every $z\in S$, while we pop out all elements from $S$; then we push $u$ in $S$ and also store it in $A[\mathit{b\_count}(u)]$. Now, if we meet a vertex $v$ that satisfies $\mathit{nextM}(v)\neq\emptyset$ and is ancestor of its predecessor, we check whether the entry $u=A[\mathit{b\_count}(v)-\mathit{b\_count}(\mathit{nextM}(v))]$ is not $\emptyset$, and if $\mathit{low}(u)\geq\mathit{nextM}(v)$ we mark the triplet $\{(u,p(u)),(v,p(v)),(\mathit{nextM}(v),p(\mathit{nextM}(v)))\}$ (observe that $u$ satisfies all conditions of proposition \ref{prop_w=nextM(v)}). If $v$ is not an ancestor of the top element of $S$, we set $A[u]=\emptyset$, for every $u\in S$, while we pop out all elements from $S$. In any case, we keep traversing the list, following the same procedure, until we reach its end. This process is implemented in Algorithm \ref{algorithm:w=nextM(v)}.

\begin{algorithm}[!h]
\caption{\textsf{Find all $3$-cuts $\{(u,p(u)),(v,p(v)),(w,p(w))\}$, where $u$ is a descendant of $v$ and $w=\mathit{nextM}(v)$.}}
\label{algorithm:w=nextM(v)}
\LinesNumbered
\DontPrintSemicolon
initialize an array $A$ with $m$ entries (where $m$ is the number of edges of the graph)\;
initialize a stack $S$\;
sort the elements of every list $\mathit{high}^{-1}(h)$, for every vertex $h$, in decreasing order\;
\ForEach{vertex $h$}{
  $u \leftarrow $ first element of $\mathit{high}^{-1}(h)$\;
  \While{$u\neq\emptyset$}{
    $z \leftarrow $ next element of $\mathit{high}^{-1}(h)$\;
    \lIf{$z=\emptyset$}{\textbf{break}}
    \If{$z$ is not an ancestor of $u$}{
      \While{$S$ is not empty}{
        $u' \leftarrow S$.pop()\;
        $A[\mathit{b\_count}(u')] \leftarrow \emptyset$\;
      }
    }
    \If{$\mathit{nextM}(z)=\emptyset$}{
      $S$.push($z$)\;
      $A[\mathit{b\_count}(z)] \leftarrow z$\;
    }
    \ElseIf{$\mathit{nextM}(z)\neq\emptyset$}{
      $v \leftarrow z$, $w \leftarrow \mathit{nextM}(v)$\;
      \If{$A[\mathit{b\_count}(v)-\mathit{b\_count}(w)] \neq \emptyset$}{
        $u \leftarrow A[\mathit{b\_count}(v)-\mathit{b\_count}(w)]$\;
        \If{$\mathit{low}(u)\geq w$}{
          mark the triplet $\{(u,p(u)),(v,p(v)),(w,p(w))\}$\;
        }
      }
    }
    $u \leftarrow z$\;
  }
}
\end{algorithm}



\subparagraph*{Case $w\neq\mathit{nextM}(v)$.}

Now we will show how to find, for every vertex $v$, the set of all $u$ which are descendants of $v$ with the property that there exists a $w$ with $M(w)=M(v)$ and $w<\mathit{nextM}(v)$, such that $\{(u,p(u)),(v,p(v)),(w,p(w))\}$ is a $3$-cut. Let $U(v)$ denote this set. (An illustration is given in Figure \ref{figure:M(v)}.) According to Lemma \ref{lemma:u_same_high_with_v}, for every $u\in U(v)$ we have $\mathit{high}(u)=\mathit{high}(v)$, and therefore it is sufficient to seek those $u$ in $\mathit{high}^{-1}(\mathit{high}(v))$.

To do this, we use a stack $\mathit{stackU}[v]$, for every vertex $v$, in which we store vertices $u$ from $\mathit{high}^{-1}(\mathit{high}(v))$. By the time we have filled all stacks $\mathit{stackU}[v]$, the following three properties will be satisfied: $(1)$ for every vertex $v$, $U(v)\subseteq\mathit{stackU}[v]$, $(2)$ if $v\neq v'$, then $\mathit{stackU}[v]\cap\mathit{stackU}[v']=\emptyset$, and $(3)$ every $u$ in $\mathit{stackU}[v]$ is a descendant of its successors in $\mathit{stackU}[v]$. 
The contents of $\mathit{stackU}[v]$ will be all those $u$ satisfying the necessary condition described in the following lemma.

\begin{lemma}
\label{necessity_for_U(v)}
Let $h=\mathit{high}(v)$, and assume that the list $\mathit{high}^{-1}(h)$ is sorted in decreasing order. Then, $u\in U(v)$ only if $u$ is a predecessor of $v$ in $\mathit{high}^{-1}(h)$ such that $\mathit{nextM}(u)=\emptyset$, $\mathit{low}(u)<\mathit{nextM}(v)$, $\mathit{low}(u)\geq\mathit{lastM}(v)$, and all elements of $\mathit{high}^{-1}(h)$ between $u$ and $v$ are ancestors of $u$.
\end{lemma}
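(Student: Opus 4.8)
The plan is to fix $u\in U(v)$, take the witnessing vertex $w$ (so that $M(w)=M(v)$, $w<\mathit{nextM}(v)$, and $\{(u,p(u)),(v,p(v)),(w,p(w))\}$ is a $3$-cut with $u$ a proper descendant of $v$ and $v$ a proper descendant of $w$), and extract each required property from the structural description of the cut. By Proposition \ref{proposition:uvw} we have $B(v)=B(u)\sqcup B(w)$, and by Lemma \ref{lemma:u_same_high_with_v} we get $\mathit{high}(u)=\mathit{high}(v)=h$ and $\mathit{nextM}(u)=\emptyset$. In particular $u\in\mathit{high}^{-1}(h)$, and since $u$ is a proper descendant of $v$ it satisfies $u>v$, hence $u$ precedes $v$ in the decreasingly sorted list $\mathit{high}^{-1}(h)$.

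Next I would pin down $\mathit{low}(u)$. For the lower bound: any back-edge $(x,y)\in B(u)$ has $x$ a descendant of $u$, hence of $w$; if $y$ were a proper ancestor of $w$ we would get $(x,y)\in B(w)$, contradicting $B(u)\cap B(w)=\emptyset$. So every such $y$ is a descendant of $w$, i.e.\ $y\ge w$, whence $\mathit{low}(u)\ge w$; and since $w\in M^{-1}(M(v))$ while $\mathit{lastM}(v)$ is the topmost (smallest) element of that list, $w\ge\mathit{lastM}(v)$, giving $\mathit{low}(u)\ge\mathit{lastM}(v)$. For the strict upper bound $\mathit{low}(u)<\mathit{nextM}(v)$: writing $v'=\mathit{nextM}(v)$, the vertices $w$, $v'$, $v$ all lie on one root-path with $w$ a proper ancestor of $v'$ and $v'$ a proper ancestor of $v$, all sharing the same $M$-value, so the lemma following Fact \ref{fact:3edgeconn} gives $B(w)\subset B(v')\subset B(v)$. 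Pick $(x,y)\in B(v')\setminus B(w)$; then $(x,y)\in B(v)\setminus B(w)=B(u)$ and $y$ is a proper ancestor of $v'$, so $\mathit{low}(u)\le y<v'=\mathit{nextM}(v)$.

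Finally, for the condition that every element of $\mathit{high}^{-1}(h)$ lying between $u$ and $v$ is an ancestor of $u$: suppose $z\in\mathit{high}^{-1}(h)$ with $v<z<u$ is not an ancestor of $u$. From $v<z<u$ and $u\in T(v)$ we get $z\in T(v)$; and since $z<u$, $u$ is not an ancestor of $z$ either, so $T(z)$ and $T(u)$ are disjoint. Because $\mathit{high}(z)=h$ there is a back-edge $(x,h)$ with $x$ a descendant of $z$, hence $x\in T(v)\setminus T(u)$ and $h$ a proper ancestor of $v$; thus $(x,h)\in B(v)\setminus B(u)=B(w)$, forcing $h$ to be a proper ancestor of $w$. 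But $h=\mathit{high}(u)\ge\mathit{low}(u)\ge w$, a contradiction. (In the borderline case $h=w$ one gets $(x,h)\notin B(w)$ as well, which still contradicts $B(v)=B(u)\sqcup B(w)$.)

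I expect the main obstacle to be bookkeeping the preorder-versus-ancestor relationships cleanly --- in particular justifying $z\in T(v)$ and the disjointness of $T(z)$ and $T(u)$ purely from the list ordering, and handling the borderline case $h=w$ in the last step --- rather than any deep new idea; all the real content is already packaged in Proposition \ref{proposition:uvw}, Lemma \ref{lemma:u_same_high_with_v}, and the elementary monotonicity facts for the sets $B(\cdot)$.
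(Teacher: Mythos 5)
Your proof is correct and follows essentially the same route as the paper's: Proposition \ref{proposition:uvw} gives $B(v)=B(u)\sqcup B(w)$, Lemma \ref{lemma:u_same_high_with_v} gives $\mathit{high}(u)=\mathit{high}(v)$ and $\mathit{nextM}(u)=\emptyset$, the bound $\mathit{low}(u)\geq w\geq\mathit{lastM}(v)$ and the ancestor condition are argued identically, and your direct derivation of $\mathit{low}(u)<\mathit{nextM}(v)$ via a back-edge in $B(\mathit{nextM}(v))\setminus B(w)\subseteq B(u)$ is just the contrapositive form of the paper's contradiction argument using the same back-edge. The bookkeeping you flagged (preorder intervals giving $z\in T(v)$ and $T(z)\cap T(u)=\emptyset$, and the case $h=w$) is handled correctly and matches what the paper does.
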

\begin{proof}
$u\in U(v)$ means that $u$ is a descendant of $v$ and there is an ancestor $w$ of $v$ such that $M(v)=M(w)$, $w\neq\mathit{nextM}(v)$, and $\{(u,p(u)),(v,p(v)),(w,p(w))\}$ is a $3$-cut. By proposition \ref{proposition:uvw}, we have $B(v)=B(u)\sqcup B(w)$. From this we infer that $\mathit{low}(u)\geq w$ (for otherwise, since $u$ is a descendant of $w$, we would have that $B(u)$ meets $B(w)$). This shows that $\mathit{low}(u)\geq\mathit{lastM}(v)$. Lemma \ref{lemma:u_same_high_with_v} implies that $\mathit{high}(u)=\mathit{high}(v)$ and $\mathit{nextM}(u)=\emptyset$. Furthermore, since $u$ is a descendant of $v$, it is greater than $v$, and thus it is a predecessor of $v$ in $\mathit{high}^{-1}(h)$. Now suppose, for the sake of contradiction, that $\mathit{low}(u)\geq\mathit{nextM}(v)$. Since there is a $w<\mathit{nextM}(v)$ such that $M(w)=M(v)$, there must exist a back-edge $(x,y)\in B(v)$ with $y\in T(\mathit{nextM}(v),w]$. Since $\mathit{low}(u)\geq\mathit{nextM}(v)$, it cannot be the case that $(x,y)\in B(u)$, and therefore $B(v)=B(u)\sqcup B(w)$ implies that $(x,y)\in B(w)$, which is absurd, since $y\geq w$. Thus, $\mathit{low}(u)<\mathit{nextM}(v)$. 
Finally, suppose, for the sake of contradiction, that there exists a $u'\in\mathit{high}^{-1}(h)$ which is lower than $u$ and greater than $v$, but it is not an ancestor of $u$. Since $u$ is a descendant of $v$, $v<u'<u$ implies that $u'$ is also a descendant of $v$. Let $(x,h)$ be a back-edge with $x$ a descendant of $u'$. Then $x$ is a also a descendant of $v$, and thus $(x,h)\in B(v)$. But since $u'$ and $u$ are not related as ancestor or descendant, $x$ cannot be a descendant of $u$. Thus, $(x,h)\notin B(u)$. Since $(x,h)\in B(v)$ and $B(v)=B(u)\sqcup B(w)$, this implies that $(x,h)\in B(w)$. However, $h=\mathit{high}(u)\geq\mathit{low}(u)\geq w$. A contradiction.
\end{proof}

\begin{figure}
\begin{center}
\centerline{\includegraphics[trim={0cm 18cm 0cm 0cm}, scale=1,clip, width=\textwidth]{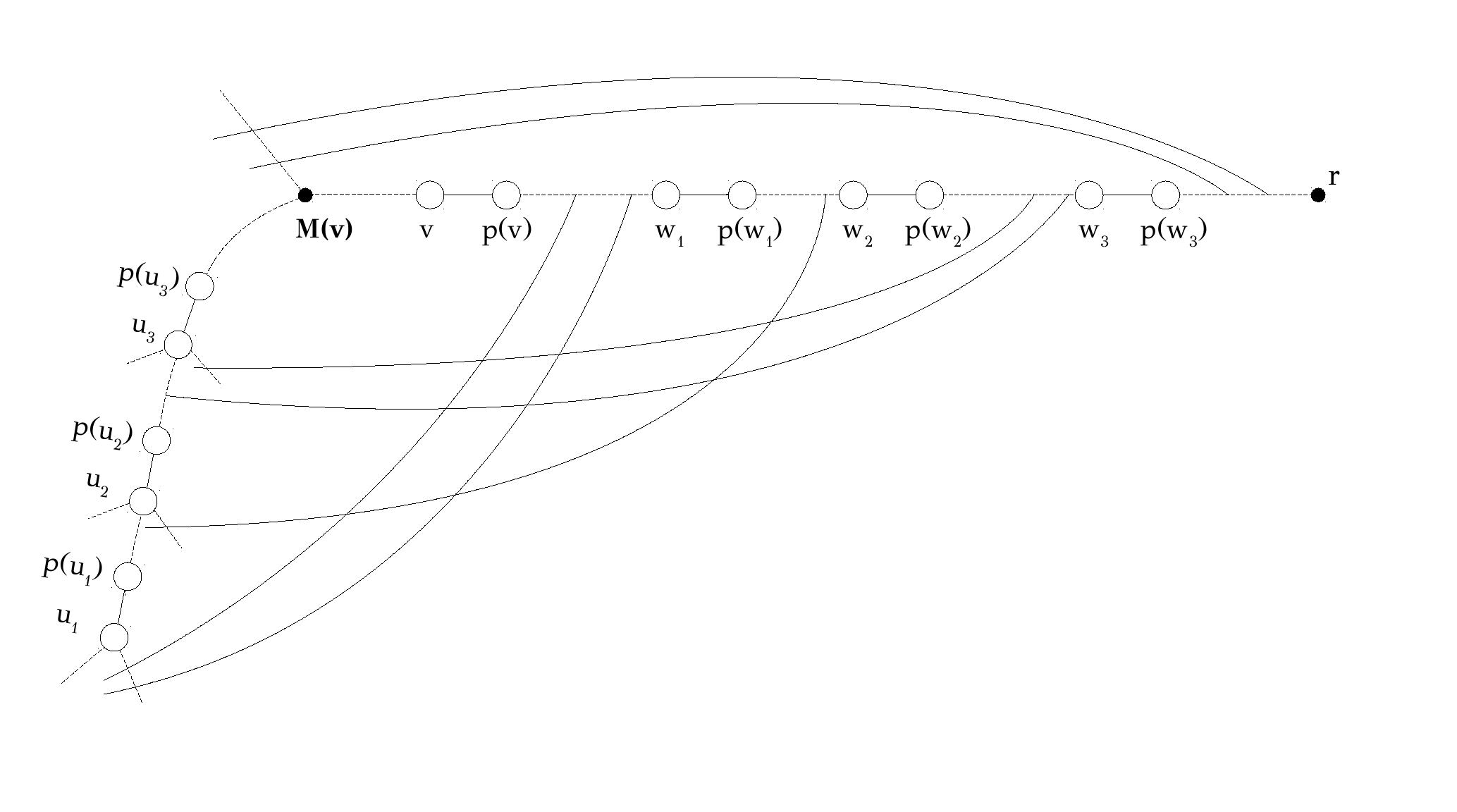}}
\caption{In this example we have $M(v)=M(w_1)=M(w_2)=M(w_3)$, $U(v)=\{u_1,u_2,u_3\}$, and the triplets $\{(u_i,p(u_i)),(v,p(v)),(w_i,p(w_i))\}$, for $i\in\{1,2,3\}$, are $3$-cuts. Observe that all $\{u_1,u_2,u_3\}$ are related as ancestor and descendant. This property is proved in Lemma \ref{necessity_for_U(v)}. Furthermore, all $u\in U(v)$ have $\mathit{high}(u)=\mathit{high}(v)$.}
\label{figure:M(v)}
\end{center}
\end{figure}

Thus, $\mathit{stackU}[v]$ contains all $u$ that are predecessors of $v$ in $\mathit{high}^{-1}(\mathit{high}(v))$ and satisfy $\mathit{nextM}(u)=\emptyset$, $\mathit{low}(u)<\mathit{nextM}(v)$, $\mathit{low}(u)\geq\mathit{lastM}(v)$ and all elements of $\mathit{high}^{-1}(\mathit{high}(v))$ between $u$ and $v$ are ancestors of $u$. By Lemma \ref{necessity_for_U(v)}, property $(1)$ of the stacks $\mathit{stackU}[v]$ is satisfied. The following lemma shows that property $(2)$ is also satisfied.

\begin{lemma}
\label{lemma:property2}
Let $v,v'$ be two vertices such that $v'$ is a proper ancestor of $v$ with $\mathit{high}(v')=\mathit{high}(v)$, and let $u\in\mathit{stackU}[v]$. Then $u\notin\mathit{stackU}[v']$.
\end{lemma}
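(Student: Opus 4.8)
The plan is to argue by contradiction: suppose some $u$ lies in both $\mathit{stackU}[v]$ and $\mathit{stackU}[v']$ and derive a contradiction. Write $h:=\mathit{high}(v)=\mathit{high}(v')$. Then $u\in\mathit{high}^{-1}(h)$, $u$ is a predecessor of both $v$ and $v'$ in the decreasingly sorted list $\mathit{high}^{-1}(h)$, and, by the description of the contents of the stacks (i.e.\ the necessary conditions of Lemma~\ref{necessity_for_U(v)}), we have $\mathit{nextM}(u)=\emptyset$, $\mathit{lastM}(v)\le\mathit{low}(u)<\mathit{nextM}(v)$ and $\mathit{lastM}(v')\le\mathit{low}(u)<\mathit{nextM}(v')$.

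First I would pin down the tree relations. Since $v'$ is a proper ancestor of $v$, $v'<v$; since $u$ is a predecessor of $v$ in the list, $v<u$; hence $v$ is a list-element strictly between $v'$ and $u$, and so, by the clause "all elements of $\mathit{high}^{-1}(h)$ between $u$ and $v'$ are ancestors of $u$" in the definition of $\mathit{stackU}[v']$, $v$ is an ancestor of $u$. Thus $v'$, $v$, $u$ all lie on one root-to-$u$ path. Next, since $h=\mathit{high}(v)$ is a proper ancestor of $v'$, every back-edge of $B(v)$ ends strictly above $v'$ and starts inside $T(v)\subseteq T(v')$, so $B(v)\subseteq B(v')$; as $G$ is $3$-edge-connected and $v\ne v'$, Fact~\ref{fact:3edgeconn} gives $B(v)\subsetneq B(v')$. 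Comparing the heads of the back-edges, $M(v')$ is an ancestor of $M(v)$; it cannot equal $M(v)$, for then the lemma stating that $M(u)=M(v)$ for a proper descendant $u$ of $v$ implies $B(v)\subseteq B(u)$ would give $B(v')\subseteq B(v)$, contradicting $B(v)\subsetneq B(v')$. Moreover the back-edges of $B(v')\setminus B(v)$ have heads that are proper ancestors of $v$ (they reach strictly above $v'>v$), hence their tails cannot be descendants of $v$; so $M(v')$ is the nearest common ancestor of $M(v)$ with a vertex outside $T(v)$, hence a proper ancestor of $v$. Write $z:=M(v')$, so $v'\le z<v$, with $z$ on the path, $z$ a proper ancestor of $M(v)$, and $M(z)=z$.

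Now I would extract the contradiction from the interval conditions. From $\mathit{low}(u)<\mathit{nextM}(v')<v'$ and $\mathit{lastM}(v)\le\mathit{low}(u)$ we get $\mathit{lastM}(v)<v'$; since $\mathit{lastM}(v)$ and $v'$ are comparable (both ancestors of $v$) and $M(\mathit{lastM}(v))=M(v)\ne z=M(v')$, $\mathit{lastM}(v)$ is a proper ancestor of $v'$. I would then analyse the two chains $M^{-1}(M(v))$ and $M^{-1}(z)$ along the path root $\to M(v)$. Using that $B(\mathit{lastM}(v))\subsetneq B(v)\subsetneq B(v')$ (the first inclusion by the $M(u)=M(v)$-lemma applied to $\mathit{lastM}(v)$ and $v$), and that every back-edge of $B(v')\setminus B(v)$ has its head at or below $\mathit{lastM}(v)$ (otherwise such a back-edge, whose tail is outside $T(M(v))$, would lie in $B(\mathit{lastM}(v))$, contradicting $M(\mathit{lastM}(v))=M(v)$), I would show that the allowed ranges $[\mathit{lastM}(v),\mathit{nextM}(v))$ and $[\mathit{lastM}(v'),\mathit{nextM}(v'))$ for $\mathit{low}(u)$ are disjoint --- concretely that $\mathit{nextM}(v')\le\mathit{lastM}(v)$ or $\mathit{nextM}(v)\le\mathit{lastM}(v')$ --- contradicting $\mathit{lastM}(v)\le\mathit{low}(u)<\mathit{nextM}(v')$ and $\mathit{lastM}(v')\le\mathit{low}(u)<\mathit{nextM}(v)$.

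I expect the last step to be the main obstacle. The cheap half (ruling out $\mathit{lastM}(v)\ge v'$) is immediate, but ruling out the remaining configuration --- where $\mathit{lastM}(v)$ is a proper ancestor of $v'$ and the two chains $M^{-1}(M(v))$ and $M^{-1}(z)$ interleave along the path --- requires the careful localisation of the heads of $B(v')\setminus B(v)$ together with an argument that $M^{-1}(z)$ cannot reach strictly below the topmost element $\mathit{lastM}(v)$ of $M^{-1}(M(v))$ without violating $M(\mathit{lastM}(v))=M(v)$. The tools will be Lemma~\ref{lemma:Muv} and the monotonicity of $B(\cdot)$ along ancestor chains, applied to the triples $(\mathit{lastM}(v),z,M(v))$ and $(\mathit{lastM}(v),\mathit{nextM}(v'),M(v))$.
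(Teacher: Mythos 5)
Your reduction of the lemma to the disjointness of the two admissible ranges for $\mathit{low}(u)$ — $[\mathit{lastM}(v),\mathit{nextM}(v))$ from $u\in\mathit{stackU}[v]$ and $[\mathit{lastM}(v'),\mathit{nextM}(v'))$ from $u\in\mathit{stackU}[v']$ — is exactly the right target, and your preliminary observations ($B(v)\subsetneq B(v')$, $M(v')$ a proper ancestor of both $M(v)$ and $v$, $\mathit{lastM}(v)$ a proper ancestor of $v'$) are correct. But the disjointness \emph{is} the entire content of the lemma: the paper isolates it as Lemma~\ref{lemma:next_and_last} (namely $\mathit{nextM}(v)<\mathit{lastM}(v')$), after which Lemma~\ref{lemma:property2} is a two-line corollary. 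Your proposal stops precisely where the work begins, and the closing sketch you offer does not go through as stated. The claim that ``$M^{-1}(z)$ cannot reach strictly below the topmost element $\mathit{lastM}(v)$ of $M^{-1}(M(v))$'' is false on its natural reading, since $v'$ itself lies in $M^{-1}(z)$ and is a proper descendant of $\mathit{lastM}(v)$ (as you yourself established); the assertion $M(z)=z$ for $z=M(v')$ is also unjustified and not true in general. Moreover, the localisation of the heads of $B(v')\setminus B(v)$ relative to $\mathit{lastM}(v)$, on which your sketch leans, is not what drives the proof.

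The missing argument is the following. Since $M(\mathit{nextM}(v))=M(v)$ and $\mathit{nextM}(v)<v$, we have $B(\mathit{nextM}(v))\subset B(v)$, so some $(x,y)\in B(v)$ has $y$ a descendant of $\mathit{nextM}(v)$; as $y\le\mathit{high}(v)$, this forces $\mathit{nextM}(v)$ to be an ancestor of $\mathit{high}(v)=\mathit{high}(v')$ and hence a proper ancestor of $v'$. Now suppose $\mathit{lastM}(v')$ were an ancestor of $\mathit{nextM}(v)$. Every back-edge of $B(\mathit{lastM}(v'))$ starts in $T(M(v'))\subseteq T(v')\subseteq T(\mathit{nextM}(v))$ and ends at a proper ancestor of $\mathit{lastM}(v')$, hence of $\mathit{nextM}(v)$; thus $B(\mathit{lastM}(v'))\subseteq B(\mathit{nextM}(v))$, so $M(v')$ is a descendant of $M(v)$. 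Combined with $M(v)$ being a descendant of $M(v')$ (from $B(v)\subseteq B(v')$), this gives $M(v)=M(v')$, whence $B(v')\subseteq B(v)\subseteq B(v')$, i.e.\ $B(v)=B(v')$, contradicting Fact~\ref{fact:3edgeconn}. Therefore $\mathit{nextM}(v)<\mathit{lastM}(v')$, and $\mathit{low}(u)<\mathit{nextM}(v)<\mathit{lastM}(v')\le\mathit{low}(u)$ is the desired contradiction. Note that only one of your two disjunctions actually holds, and the proof needs only the conditions $\mathit{low}(u)<\mathit{nextM}(v)$ and $\mathit{low}(u)\ge\mathit{lastM}(v')$ from Lemma~\ref{necessity_for_U(v)}; until this step is supplied, your proposal has a genuine gap.
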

\begin{proof}
First observe that the stacks $\mathit{stackU}[v]$ and $\mathit{stackU}[v']$ are non-empty only if $\mathit{nextM}(v)\neq\emptyset$ and $\mathit{nextM}(v')\neq\emptyset$.
Now, since $\mathit{high}(v')=\mathit{high}(v)$, by Lemma \ref{lemma:next_and_last}, we have that $\mathit{nextM}(v)<\mathit{lastM}(v')$. Since $u\in\mathit{stackU}[v]$, it has $\mathit{low}(u)<\mathit{nextM}(v)$. But then $\mathit{low}(u)<\mathit{lastM}(v')$, and so $u\notin\mathit{stackU}[v']$.
\end{proof}

This implies that the total number of elements in all stacks $\mathit{stackU}[v]$ (by the time we have filled them) is $O(n)$. Now let $h$ be a vertex, and let us show how to fill the stacks $\mathit{stackU}[v]$, for all $v$ in the decreasingly sorted list $\mathit{high}^{-1}(h)$. To do this, we will need a stack $S$. We begin traversing the list $\mathit{high}^{-1}(h)$ from its first element, and when we process a vertex $u$ such that $\mathit{nextM}(u)=\emptyset$ we push it in $S$ if it is an ancestor of its predecessor (or the first elements of the list). Otherwise, we drop all elements from $S$, push $u$ in $S$, and keep traversing the list. When we meet a vertex $v$ that satisfies $\mathit{nextM}(v)\neq\emptyset$ and is also an ancestor of its predecessor, we check whether the top element $u$ of $S$ satisfies $\mathit{low}(u)<\mathit{lastM}(v)$, in which case we start popping elements out of $S$, until the top element $u$ of $S$ (if $S$ is not left empty) satisfies $\mathit{low}(u)\geq\mathit{lastM}(v)$. Then, as long as the top element $u$ of $S$ satisfies $\mathit{low}(u)<\mathit{nextM}(v)$, we repeatedly pop out the top element from $S$ and push it in $\mathit{stackU}[v]$. If $v$ is not an ancestor of its predecessor, we drop all elements from $S$. In any case, we keep traversing the list, following the same procedure, until we reach its end. This process is implemented in Algorithm \ref{algorithm:fill_stacks}. Property $(3)$ of the stacks $\mathit{stackU}$ is satisfied due to the way we fill them with this algorithm. To prove the correctness of Algorithm \ref{algorithm:fill_stacks} - i.e., that by the time we reach the end of $\mathit{high}^{-1}(h)$, every stack $\mathit{stackU}[v]$, for every $v\in\mathit{high}^{-1}(h)$, contains all elements $u$ satisfying the necessary condition in Lemma \ref{necessity_for_U(v)} -, we need the following two lemmata. 

\begin{lemma}
\label{lemma:high_low}
If $u'$ is an ancestor of $u$ with $\mathit{high}(u)=\mathit{high}(u')$, then $\mathit{low}(u')\leq\mathit{low}(u)$.
\end{lemma}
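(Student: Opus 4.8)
The plan is to produce, inside $B(u')$, a back-edge whose lower endpoint is at most $\mathit{low}(u)$; the obvious candidate is the back-edge that already witnesses $\mathit{low}(u)$ for $u$. First I would fix a back-edge $(x,\mathit{low}(u))\in B(u)$ realizing $\mathit{low}(u)$ (it exists because $\mathit{high}(u)$, and hence $\mathit{low}(u)$, is defined, so $B(u)\neq\emptyset$). By the definition of $B(u)$, the vertex $x$ is a descendant of $u$, and since $u'$ is an ancestor of $u$, $x$ is also a descendant of $u'$. Hence the only thing left to check is that $\mathit{low}(u)$ is a \emph{proper ancestor of $u'$}; once that is established, $(x,\mathit{low}(u))\in B(u')$ directly from the definition of $B(u')$, and therefore $\mathit{low}(u')\leq\mathit{low}(u)$, which is exactly the claim.

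The key step is thus to show that $\mathit{low}(u)$ is a proper ancestor of $u'$, and this is where the hypothesis $\mathit{high}(u)=\mathit{high}(u')$ is used. I would argue: $\mathit{low}(u)\leq\mathit{high}(u)$ (the low point is the minimum, and the high point the maximum, of the endpoints of the nonempty set $B(u)$), and $\mathit{high}(u)=\mathit{high}(u')$, so $\mathit{low}(u)\leq\mathit{high}(u')$. On the other hand $\mathit{high}(u')$ is, by definition, the endpoint of a back-edge in $B(u')$, hence a proper ancestor of $u'$, so $\mathit{high}(u')<u'$ in the preorder numbering. Combining, $\mathit{low}(u)<u'$. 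Finally, $\mathit{low}(u)$ is a proper ancestor of $u$ and $u'$ is an ancestor of $u$, so both $\mathit{low}(u)$ and $u'$ lie on the tree path from the root to $u$; on such a path the preorder order coincides with the ancestor order, so $\mathit{low}(u)<u'$ forces $\mathit{low}(u)$ to be a proper ancestor of $u'$, as wanted. The degenerate case $u'=u$ needs no separate treatment: there $\mathit{low}(u)<u=u'$ trivially and the same argument goes through.

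I expect the only real subtlety — rather than a genuine obstacle — to be this last translation between the preorder numbering and the ancestor relation: one has to observe that $\mathit{low}(u)$ and $u'$ are both ancestors of $u$ before concluding from $\mathit{low}(u)<u'$ that $\mathit{low}(u)$ is an ancestor of $u'$. Everything else is just unwinding the definitions of $B$, $\mathit{low}$ and $\mathit{high}$. As a remark, running the back-edge argument in the opposite direction (every back-edge of $B(u')$ lies in $B(u)$, by Lemma~\ref{lemma:Muv} together with the subset lemma immediately following it, since the quoted facts give $M(u')=M(u)$) would additionally yield $\mathit{low}(u')\geq\mathit{low}(u)$, hence $\mathit{low}(u')=\mathit{low}(u)$; but only the inequality $\mathit{low}(u')\leq\mathit{low}(u)$ is needed later, and it has the cleaner self-contained proof sketched above.
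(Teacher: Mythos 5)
Your proof is correct and follows essentially the same route as the paper: the paper shows $B(u)\subseteq B(u')$ by applying to an arbitrary $(x,y)\in B(u)$ exactly the argument you apply to the single back-edge witnessing $\mathit{low}(u)$ (descendant condition is inherited, and $y\leq\mathit{high}(u)=\mathit{high}(u')<u'$ together with $y$ being an ancestor of $u$ forces $y$ to be a proper ancestor of $u'$), and then reads off $\mathit{low}(u')\leq\mathit{low}(u)$. The preorder-versus-ancestor subtlety you flag is real but handled the same way implicitly in the paper.
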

\begin{proof}
Let $(x,y)\in B(u)$. Then $x$ is a descendant of $u$, and therefore a descendant of $u'$. Furthermore, $y\leq\mathit{high}(u)=\mathit{high}(u')$, and therefore $y$ is a proper ancestor of $u'$. This shows that $(x,y)\in B(u')$, and thus we have $B(u)\subseteq B(u')$. $\mathit{low}(u')\leq\mathit{low}(u)$ is an immediate consequence of this fact.
\end{proof}

\begin{lemma}
\label{lemma:next_and_last}
Let $v,v'$ be two vertices such that $v'$ is a proper ancestor of $v$, $\mathit{nextM}(v)\neq\emptyset$, $\mathit{nextM}(v')\neq\emptyset$, and $\mathit{high}(v')=\mathit{high}(v)$. Then, $\mathit{nextM}(v)<\mathit{lastM}(v')$.
\end{lemma}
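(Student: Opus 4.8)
The plan is to locate three key vertices on the root-to-$v$ path, namely $\mathit{nextM}(v)$, $v'$, and $\mathit{lastM}(v')$, and to derive the desired ordering by showing that the only alternative ordering forces $M(v)=M(v')$, which is impossible. Write $m:=M(v)$ and $m':=M(v')$. \emph{Step 1: locating $m'$.} Since $v'$ is a proper ancestor of $v$ with $\mathit{high}(v')=\mathit{high}(v)$, the argument used to prove Lemma~\ref{lemma:high_low} gives $B(v)\subseteq B(v')$, and Fact~\ref{fact:3edgeconn} (the graph is $3$-edge-connected and $v\neq v'$) upgrades this to $B(v)\subsetneq B(v')$. Picking $(x,y)\in B(v')\setminus B(v)$, the endpoint $y$ is a proper ancestor of $v'$, hence of $v$, so $(x,y)\notin B(v)$ is possible only because $x$ is not a descendant of $v$; thus $x\in T(v')\setminus T(v)$. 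As $m'=M(v')$ is a common ancestor of $x$ (outside $T(v)$) and of $m=M(v)$ (inside $T(v)$), it cannot lie in $T(v)$, so $m'$ is a proper ancestor of $v$; in particular $m'\neq m$. Also $B(v)\subseteq B(v')$ gives that $m$ is a descendant of $m'$.

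\emph{Step 2: $\mathit{nextM}(v)$ is a proper ancestor of $v'$.} Since $\mathit{nextM}(v)\neq\emptyset$, $v$ is a proper descendant of $\mathit{nextM}(v)$ with the same $M$-value, so by the lemma stated immediately after Lemma~\ref{lemma:Muv} (together with $3$-edge-connectivity) we have $B(\mathit{nextM}(v))\subsetneq B(v)$. Any $(x,y)\in B(v)\setminus B(\mathit{nextM}(v))$ has $x$ a descendant of $\mathit{nextM}(v)$ (because $v\in T(\mathit{nextM}(v))$), and $y$ is not a proper ancestor of $\mathit{nextM}(v)$, whence $\mathit{nextM}(v)\leq y$; thus $\mathit{high}(v)\geq\mathit{nextM}(v)$. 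Since $\mathit{high}(v)=\mathit{high}(v')$ is a proper ancestor of $v'$, we get $\mathit{nextM}(v)\leq\mathit{high}(v)<v'$, so $\mathit{nextM}(v)$ is a proper ancestor of $v'$. \emph{Step 3: the dichotomy.} Since $\mathit{nextM}(v')\neq\emptyset$, $\mathit{lastM}(v')$ is a proper ancestor of $v'$; so both $\mathit{nextM}(v)$ and $\mathit{lastM}(v')$ lie on the root-to-$v'$ path, and they are distinct because $M(\mathit{nextM}(v))=m\neq m'=M(\mathit{lastM}(v'))$. Hence either $\mathit{nextM}(v)$ is a proper ancestor of $\mathit{lastM}(v')$ (the claim), or $\mathit{lastM}(v')$ is a proper ancestor of $\mathit{nextM}(v)$; assume the latter for contradiction. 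By Step 2, $v'$ is a descendant of $\mathit{nextM}(v)$, so $m'=M(v')$ is a descendant of $\mathit{nextM}(v)$. Applying Lemma~\ref{lemma:Muv} with its ``$v$'' taken to be $\mathit{lastM}(v')$ and its ``$u$'' taken to be $\mathit{nextM}(v)$ (valid since $\mathit{lastM}(v')$ is an ancestor of $\mathit{nextM}(v)$ and $M(\mathit{lastM}(v'))=m'$ is a descendant of $\mathit{nextM}(v)$), we conclude that $m'$ is a descendant of $M(\mathit{nextM}(v))=m$. Together with ``$m$ descendant of $m'$'' from Step 1 this forces $m=m'$, contradicting Step 1. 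Therefore $\mathit{nextM}(v)<\mathit{lastM}(v')$.

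The step I expect to be the main obstacle is Step 2: one has to notice that $\mathit{nextM}(v)\neq\emptyset$ already forces $\mathit{high}(v)$ to lie at or below $\mathit{nextM}(v)$ in the tree, so that the common high point of $v$ and $v'$ (which necessarily sits strictly above $v'$) pushes $\mathit{nextM}(v)$ above $v'$; without this placement the final invocation of Lemma~\ref{lemma:Muv} in Step 3 would not be available, since we would not know that $m'$ is a descendant of $\mathit{nextM}(v)$. The rest is routine bookkeeping about ancestor/descendant relations among $B$-sets, $M$-values, and the $M^{-1}$ lists.
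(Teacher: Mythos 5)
Your proof is correct and follows essentially the same route as the paper's: establish $B(v)\subseteq B(v')$ from the shared high point, use a back-edge in $B(v)\setminus B(\mathit{nextM}(v))$ to place $\mathit{nextM}(v)$ at or above $\mathit{high}(v)$ and hence strictly above $v'$, and then show that the reverse ordering of $\mathit{nextM}(v)$ and $\mathit{lastM}(v')$ would force $M(v)=M(v')$, contradicting $3$-edge-connectivity. Your only deviations are cosmetic: you certify $M(v)\neq M(v')$ up front via a witness back-edge and close the contradiction with Lemma~\ref{lemma:Muv}, where the paper re-derives the containment $B(\mathit{lastM}(v'))\subseteq B(\mathit{nextM}(v))$ by hand and lands the contradiction on $B(v)=B(v')$.
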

\begin{proof}
Let $(x,y)\in B(v)$. Then $x$ is a descendant of $v$, and therefore a descendant of $v'$. Furthermore, since $y\leq\mathit{high}(v)$ and $\mathit{high}(v)=\mathit{high}(v')$ and $\mathit{high}(v')<v'$, we have that $y$ is a proper ancestor of $v'$. This shows that $(x,y)\in B(v')$, and thus $B(v)\subseteq B(v')$. From this we infer that $M(v)$ is a descendant of $M(v')$. Now, since $M(\mathit{nextM}(v))=M(v)$ and $\mathit{nextM}(v)<v$, we have that $B(\mathit{nextM}(v))\subset B(v)$. This means that there exists a back-edge $(x,y)$ such that $x$ is a descendant of $M(v)$ and $y$ is a proper ancestor of $v$ but not a proper ancestor of $\mathit{nextM}(v)$. Then, since $(x,y)\in B(v)$, we have $y\leq\mathit{high}(v)$, and so $\mathit{high}(v)$ is not a proper ancestor of $\mathit{nextM}(v)$, and thus $\mathit{nextM}(v)$ is an ancestor of $\mathit{high}(v)$. Since $\mathit{high}(v)=\mathit{high}(v')$ and $\mathit{high}(v')$ is a proper ancestor of $v'$, we infer that $\mathit{nextM}(v)$ is a proper ancestor of $v'$. Now suppose, for the sake of contradiction, that $\mathit{lastM}(v')$ is an ancestor of $\mathit{nextM}(v)$. Let $(x,y)\in B(\mathit{lastM}(v'))$. Then, $x$ is a descendant of $M(\mathit{lastM}(v'))$, and thus a descendant of $M(v')$, and thus a descendant of $v'$, and thus a descendant of $\mathit{nextM}(v)$. Furthermore, $y$ is a proper ancestor of $\mathit{lastM}(v')$, and therefore a proper ancestor of $\mathit{nextM}(v)$. This shows that $(x,y)\in B(\mathit{nextM}(v))$, and thus we have $B(\mathit{lastM}(v'))\subseteq B(\mathit{nextM}(v))$. From this we infer that $M(\mathit{lastM}(v'))$ is a descendant of $M(\mathit{nextM}(v))$. But $M(\mathit{lastM}(v'))=M(v')$ and $M(\mathit{nextM}(v))=M(v)$. Thus, $M(v')$ is a descendant of $M(v)$. Since $M(v)$ is a descendant of $M(v')$, we conclude that $M(v')=M(v)$. But this implies, in conjunction with $\mathit{high}(v')=\mathit{high}(v)$, that $B(v)=B(v')$, contradicting the fact that the graph is $3$-edge-connected. This shows that $\mathit{nextM}(v)$ is a proper ancestor of $\mathit{lastM}(v')$.
\end{proof}

Now, to prove the correctness of Algorithm \ref{algorithm:fill_stacks}, we have to show that the elements we push into $\mathit{stackU}[v]$ satisfy the necessary condition in Lemma \ref{necessity_for_U(v)}, and the elements we pop out from $S$ do not satisfy this condition either for $v$ or for any successor of $v$ in the list $\mathit{high}^{-1}(h)$. So, let $v$ be a vertex in $\mathit{high}^{-1}(h)$ such that $\mathit{nextM}(v)\neq\emptyset$, and let $v'$ be a successor of $v$ in $\mathit{high}^{-1}(h)$ such that $\mathit{nextM}(v')\neq\emptyset$. Now, when we meet $v$ as we traverse $\mathit{high}^{-1}(x)$, we pop out the top elements $u$ from $S$ that have $\mathit{low}(u)<\mathit{lastM}(v)$. By the definition of $\mathit{stackU}[v]$, these are not included in $\mathit{stackU}[v]$. Now, by Lemma \ref{lemma:next_and_last}, we have $\mathit{nextM}(v)<\mathit{lastM}(v')$. Since $\mathit{low}(u)<\mathit{lastM}(v)\leq\mathit{nextM}(v)$, we have $\mathit{low}(u)<\mathit{lastM}(v')$, and thus $u$ is not in $\mathit{stackU}[v']$ either, so it does not matter that we pop those $u$ out of $S$. Then, once we reach a $\tilde{u}$ in $S$ that satisfies $\mathit{low}(\tilde{u})\geq\mathit{lastM}(v)$, we pop out the top elements $u$ of $S$ that have $\mathit{low}(u)<\mathit{nextM}(v)$, and push them into $\mathit{stackU}[v]$. This is according to the definition of $\mathit{stackU}[v]$. Since $\mathit{nextM}(v)<\mathit{lastM}(v')$ and $\mathit{low}(u)<\mathit{nextM}(v)$, we have $\mathit{low}(u)<\mathit{lastM}(v')$, and so, again, these $u$ are not included in $\mathit{stackU}[v']$, and thus it does not matter that we pop them out of $S$. Now, when we reach a $u$ in $S$ that has $\mathit{low}(u)\geq\mathit{nextM}(v)$, we can be certain, by Lemma \ref{lemma:high_low}, that no $u'$ in $S$ has $\mathit{low}(u')<\mathit{nextM}(v)$, since all elements of $S$ are descendants of $u$ (by the way we fill the stack $S$), and thus they have $\mathit{low}(u')\geq\mathit{low}(u)\geq\mathit{nextM}(v)$. Then it is proper to move on to the next element of $\mathit{high}^{-1}(h)$.

\begin{algorithm}[!h]
\caption{\textsf{Fill all stacks $\mathit{stackU}[v]$, for all vertices $v$}}
\label{algorithm:fill_stacks}
\LinesNumbered
\DontPrintSemicolon
initialize a stack $S$\;
\lForEach{vertex $v$}{initialize a stack $\mathit{stackU}[v]$}
\ForEach{vertex $h$}{
  $u \leftarrow $ first element of $\mathit{high}^{-1}(h)$\;
  \While{$u\neq\emptyset$}{
    $z \leftarrow $ next element of $\mathit{high}^{-1}(h)$\;
    \lIf{$z=\emptyset$}{\textbf{break}}
    \If{$z$ is not an ancestor of $u$}{
      pop out all elements from $S$\;
    }
    \If{$\mathit{nextM}(z)=\emptyset$}{
      $S$.push($z$)\;
    }
    \ElseIf{$\mathit{nextM}(z)\neq\emptyset$}{
      \lWhile{$\mathit{low}(S\mathit{.top()})<\mathit{lastM}(v)$}{$S$.pop()}
      \While{$\mathit{low}(S\mathit{.top()})<\mathit{nextM}(v)$}{
        $u \leftarrow S$.pop()\;
        $\mathit{stackU}[v]$.push($u$)\;
      }
    }
    $u \leftarrow z$\;
  }
}
\end{algorithm}

\begin{lemma}
\label{lemma:lowUinStack}
Let $v$ be a vertex and $u,u'$ two elements in $\mathit{stackU}[v]$, where $u$ is a predecessor of $u'$ in $\mathit{stackU}[v]$. Then, $\mathit{low}(u')\leq\mathit{low}(u)$.
\end{lemma}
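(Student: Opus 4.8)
The plan is to reduce the statement directly to the monotonicity fact of Lemma \ref{lemma:high_low}, combined with property $(3)$ of the stacks $\mathit{stackU}[v]$. Two observations suffice. First, every vertex stored in $\mathit{stackU}[v]$ belongs to the list $\mathit{high}^{-1}(\mathit{high}(v))$: this is clear both from the necessary condition in Lemma \ref{necessity_for_U(v)} and from the way Algorithm \ref{algorithm:fill_stacks} fills the stacks, since the auxiliary stack $S$, from which elements are transferred into $\mathit{stackU}[v]$, is loaded only with vertices of $\mathit{high}^{-1}(\mathit{high}(v))$ while we process that list. Hence $\mathit{high}(u) = \mathit{high}(v) = \mathit{high}(u')$. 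Second, property $(3)$ states that every element of $\mathit{stackU}[v]$ is a descendant of each of its successors in that stack; since $u$ is a predecessor of $u'$, the vertex $u'$ is a successor of $u$, and therefore $u'$ is an ancestor of $u$.

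Given these two facts, the lemma is immediate: $u'$ is an ancestor of $u$ with $\mathit{high}(u) = \mathit{high}(u')$, so Lemma \ref{lemma:high_low} (which in fact gives $B(u) \subseteq B(u')$) yields $\mathit{low}(u') \le \mathit{low}(u)$, as claimed.

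I do not expect any genuine difficulty here; the only point deserving a line of justification is the first observation, namely that both $u$ and $u'$ lie in $\mathit{high}^{-1}(\mathit{high}(v))$ and hence share the common value $\mathit{high}(v)$. As a sanity check one should also make sure that ``predecessor'' is read so that $u$ lies below $u'$ in $\mathit{stackU}[v]$ and $u'$ is correspondingly a successor of $u$ — this is the orientation in which property $(3)$ applies and in which the inequality $\mathit{low}(u') \le \mathit{low}(u)$ comes out with the stated sign.
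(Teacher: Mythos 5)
Your proposal is correct and follows exactly the paper's argument: both $u$ and $u'$ lie in $\mathit{high}^{-1}(\mathit{high}(v))$ so $\mathit{high}(u)=\mathit{high}(v)=\mathit{high}(u')$, property $(3)$ makes $u$ a descendant of $u'$, and Lemma \ref{lemma:high_low} then gives $\mathit{low}(u')\leq\mathit{low}(u)$. No differences worth noting.
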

\begin{proof}
Since $u,u'\in\mathit{stackU}[v]$, we have $\mathit{high}(u)=\mathit{high}(v)=\mathit{high}(u')$. Since $u$ is a predecessor of $u'$ in $\mathit{stackU}[v]$, by property $(3)$ of $\mathit{stackU}[v]$ we have that $u$ is a descendant of $u'$. Thus, by Lemma \ref{lemma:high_low}, we get $\mathit{low}(u')\leq\mathit{low}(u)$.
\end{proof}

The next lemma is the basis to find all $3$-cuts of the form $\{(u,p(u)),(v,p(v)),(w,p(w))\}$, where $u$ is a descendant of $v$, $M(v)=M(w)$, and $w\neq\mathit{nextM}(v)$.

\begin{lemma}
\label{lemma:greatestW}
Let $u$ be a vertex in $\mathit{stackU}[v]$ and $w$ a proper ancestor of $v$ such that $M(w)=M(v)$. Then, if $\{(u,p(u)),(v,p(v)),(w,p(w))\}$ is a $3$-cut, we have that $\mathit{b\_count}(v)=\mathit{b\_count}(u)+\mathit{b\_count}(w)$ and $w$ is the greatest element of $M^{-1}(M(v))$ such that $w\leq\mathit{low}(u)$. Conversely, if $\mathit{b\_count}(v)=\mathit{b\_count}(u)+\mathit{b\_count}(w)$ and $w\leq\mathit{low}(u)$, then $\{(u,p(u)),(v,p(v)),(w,p(w))\}$ is a $3$-cut.
\end{lemma}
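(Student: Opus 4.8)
The plan is to derive both directions of the equivalence from Proposition~\ref{proposition:uvw}. Since $w$ is a proper ancestor of $v$, and since $u\in\mathit{stackU}[v]$ is a descendant of $v$ lying in $\mathit{high}^{-1}(\mathit{high}(v))$ (so in particular $\mathit{high}(u)=\mathit{high}(v)$), we are exactly in the hypotheses of that proposition, and therefore $\{(u,p(u)),(v,p(v)),(w,p(w))\}$ is a $3$-cut if and only if $B(v)=B(u)\sqcup B(w)$. I will use this equivalence at both ends, together with the elementary properties of the elements of $\mathit{stackU}[v]$ collected after Lemma~\ref{necessity_for_U(v)}.

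For the forward direction I would first invoke Proposition~\ref{proposition:uvw} to obtain $B(v)=B(u)\sqcup B(w)$, which gives $\mathit{b\_count}(v)=\mathit{b\_count}(u)+\mathit{b\_count}(w)$ immediately. Next, $w\le\mathit{low}(u)$: since $u$ is a descendant of $v$, hence of $w$, the back-edge realizing $\mathit{low}(u)$ starts in $T(w)$, so if $\mathit{low}(u)$ were a proper ancestor of $w$ that back-edge would lie in $B(u)\cap B(w)=\emptyset$; as $\mathit{low}(u)$ and $w$ are both ancestors of $u$, hence comparable, we get $w\le\mathit{low}(u)$. The delicate point is the maximality of $w$ in $M^{-1}(M(v))$ among elements $\le\mathit{low}(u)$. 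Here I would first observe that $B(u)\subseteq B(v)$ forces every endpoint of a back-edge of $B(u)$ to be a proper ancestor of $v$, so $\mathit{low}(u)<v$. Then, supposing some $w'\in M^{-1}(M(v))$ satisfies $w<w'\le\mathit{low}(u)$, I would note that $M(w')=M(v)$ is a common descendant of $w'$ and $v$, so (since $w'<v$) $w'$ is a proper ancestor of $v$; and since $w<w'$ with $M(w)=M(w')$, $w$ is a proper ancestor of $w'$, whence $B(w)\subset B(w')$ by Fact~\ref{fact:3edgeconn}. Picking $(x,y)\in B(w')\setminus B(w)$, we have $x$ a descendant of $M(w')=M(v)$ hence of $v$, and $y$ a proper ancestor of $w'$ hence of $v$; so $(x,y)\in B(v)=B(u)\sqcup B(w)$, and $(x,y)\notin B(w)$ forces $(x,y)\in B(u)$, so $y\ge\mathit{low}(u)\ge w'>y$, a contradiction.

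For the converse I would again reduce to Proposition~\ref{proposition:uvw} and simply verify $B(v)=B(u)\sqcup B(w)$ directly: $B(w)\subseteq B(v)$ by the lemma stating that a proper descendant $v$ of $w$ with $M(v)=M(w)$ satisfies $B(w)\subseteq B(v)$; $B(u)\subseteq B(v)$ because a back-edge of $B(u)$ starts below $v$ and ends at most at $\mathit{high}(u)=\mathit{high}(v)<v$; and $B(u)\cap B(w)=\emptyset$ because every back-edge of $B(u)$ ends at some vertex $\ge\mathit{low}(u)\ge w$, hence not a proper ancestor of $w$. A cardinality count using $\mathit{b\_count}(v)=\mathit{b\_count}(u)+\mathit{b\_count}(w)$ then upgrades $B(u)\cup B(w)\subseteq B(v)$ to equality. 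I expect the maximality step in the forward direction to be the main obstacle — it is the only place where one must locate $\mathit{low}(u)$ strictly below $v$ before running the containment argument; everything else is routine bookkeeping with the sets $B(\cdot)$.
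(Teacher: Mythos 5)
Your proposal is correct and follows essentially the same route as the paper: both directions are reduced to Proposition~\ref{proposition:uvw} by establishing $B(v)=B(u)\sqcup B(w)$, the bound $w\leq\mathit{low}(u)$ comes from $B(u)\cap B(w)=\emptyset$, and the maximality of $w$ is proved by the same contradiction via a back-edge in $B(w')\setminus B(w)$ that can lie in neither $B(u)$ nor $B(w)$. No gaps.
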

\begin{proof}
($\Rightarrow$) By proposition \ref{proposition:uvw}, we have $B(v)=B(u)\sqcup B(w)$. This explains both $\mathit{b\_count}(v)=\mathit{b\_count}(u)+\mathit{b\_count}(w)$ and $w\leq\mathit{low}(u)$. (For if we had $\mathit{low}(u)<w$, then, since $u$ is a descendant of $w$, $B(u)$ would meet $B(w)$.) Now suppose, for the sake of contradiction, that there is a vertex $w'$ such that $M(w')=M(v)$ and $w<w'\leq\mathit{low}(u)$. Since $B(v)=B(u)\sqcup B(w)$, we have that $\mathit{low}(u)<v$, and therefore $w'<v$. Since $M(w')=M(v)$, this means that $B(w')\subset B(v)$. Furthermore, since $M(w)=M(w')$ and $w<w'$, we infer that $B(w)\subset B(w')$, and therefore there exists a back-edge $(x,y)\in B(w')\setminus B(w)$. Then, by $B(w')\subset B(v)$, we have that $(x,y)\in B(v)$, and $B(v)=B(u)\sqcup B(w)$ implies that $(x,y)\in B(u)$ or $(x,y)\in B(w)$. Since $(x,y)\notin B(w)$, $(x,y)\in B(u)$ is the only option left. But $y$ is a proper ancestor of $w'$, and therefore a proper ancestor of $\mathit{low}(u)$ (since $w'\leq\mathit{low}(u)$). This implies that $(x,y)\notin B(u)$, which is absurd. We conclude that $w$ is the greatest element of $M^{-1}(M(v))$ such that $w\leq\mathit{low}(u)$.\\
($\Leftarrow$) By proposition \ref{proposition:uvw}, is is sufficient to show that $B(v)=B(u)\sqcup B(w)$. $u\in\mathit{stackU}[v]$ implies that $u$ is a descendant of $v$ such that $\mathit{high}(u)=\mathit{high}(v)$. Now let $(x,y)\in B(u)$. Then $x$ is a descendant of $u$, and therefore a descendant of $v$. Furthermore, $y\leq\mathit{high}(u)=\mathit{high}(v)$, and therefore $y$ is a proper ancestor of $v$. This shows that $(x,y)\in B(v)$, and thus we have $B(u)\subseteq B(v)$. Since $M(w)=M(v)$ and $w<v$, we have $B(w)\subset B(v)$. Thus we have established that $B(u)\cup B(w)\subseteq B(v)$. Notice that no $(x,y)\in B(u)$ is contained in $B(w)$, since $y\geq\mathit{low}(u)\geq w$, and thus $y$ is not a proper ancestor of $w$. Thus we have $B(u)\cap B(w)=\emptyset$. Now $B(v)=B(u)\sqcup B(w)$ follows from $B(u)\cup B(w)\subseteq B(v)$, $B(u)\cap B(w)=\emptyset$ and $\mathit{b\_count}(v)=\mathit{b\_count}(u)+\mathit{b\_count}(w)$.
\end{proof}

Now our goal is to find, for every $u\in\mathit{stackU}[v]$, for every vertex $v$, the vertex $w$ (if it exists) which has $M(w)=M(v)$ and $w<\mathit{nextM}(v)$, and is such that $\{(u,p(u)),(v,p(v)),(w,p(w))\}$ is a $3$-cut. By Lemma \ref{lemma:greatestW}, $w$ has the property that it is the greatest vertex in $M^{-1}(M(v))$ which has $w\leq\mathit{low}(u)$. Let us describe a simple method to find the $w$ with this property, which will give us the intuition to provide a linear-time algorithm for our problem. So let $v$ be a vertex, $m=M(v)$, and $u$ be a vertex in $\mathit{stackU}[v]$. A simple idea is to start from $v$ and keep traversing the list $M^{-1}(m)$, through the pointers $\mathit{nextM}$, until we reach a $w\in M^{-1}(m)$ such that $w\leq\mathit{low}(u)$. The problem here is that we may have to pass from the same elements of $M^{-1}(m)$ an excessive amount of times (depending on the number of elements in $\mathit{stackU}[v]$). We can remedy this by keeping in a variable $\mathit{lowestW}$ the $w$ that we reached the last time we processed a $u\in\mathit{stackU}[v]$. Then, when we process the successor of $u$ in $\mathit{stackU}[v]$, we begin the search in $M^{-1}(m)$ from $\mathit{lowestW}$. This will work, since the every $u\in\mathit{stackU}[v]$ is a descendant of its successor $u'$ in $\mathit{stackU}[v]$ (due to the way we have filled the stacks $\mathit{stackU}$ with Algorithm \ref{algorithm:fill_stacks}), and we have $\mathit{high}(u)=\mathit{high}(u')$, and therefore, by Lemma \ref{lemma:high_low}, $\mathit{low}(u')\leq\mathit{low}(u)$. However, this is, again, not a linear-time procedure, since, for every vertex $v$, when we start processing the first vertex in $\mathit{stackU}[v]$, we begin traversing the list $M^{-1}(M(v))$ from $v$, and therefore, every time we process a vertex $v'$ with $M(v')=M(v)$, we may have to pass again from the same vertices that we passed from during the processing of $v$, exceeding the time bound in total. Now, to achieve linear time, we process the vertices from the lowest to the highest, and, for every $v$ that we process, we keep in a variable $\mathit{lowestW}[v]$ the $w$ that we reached the last time we processed a $u\in\mathit{stackU}[v]$. Then, when we have to process a $u\in\mathit{stackU}[v]$, we traverse the list $M^{-1}(M(v))$ through the pointers $\mathit{lowestW}$, starting from $\mathit{lowestW}[v]$. (Initially, we set every $\mathit{lowestW}[v]$ to $\mathit{nextM}(v)$.) Thus we perform a kind of path-compression method, which is shown Algorithm \ref{algorithm:w_neq_nextM(v)}. The next three lemmata will be used in proving the correctness and linear complexity of Algorithm \ref{algorithm:w_neq_nextM(v)}.

\begin{lemma}
\label{lemma:lowestW}
Let $v$ be a vertex and $u\in\mathit{stackU}[v]$. When we reach line \ref{alg_line_assign} during the processing of $u$, we have that $w$ is a vertex in $M^{-1}(M(v))$ such that $w\leq\mathit{low}(u)$ and $w\leq\mathit{min}\{\mathit{low}(u')\mid\exists v' \mbox{ with } M(v')=M(v) \mbox{, } w<v'<v \mbox{ and } u'\in\mathit{stackU}[v']\}$.
\end{lemma}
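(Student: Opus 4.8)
The plan is to prove the statement by induction on the order in which Algorithm~\ref{algorithm:w_neq_nextM(v)} processes the pairs $(v,u)$ with $u\in\mathit{stackU}[v]$: over the vertices $v$ in increasing order of preorder number, and, for a fixed $v$, over the elements of $\mathit{stackU}[v]$ in the order in which they are popped. By property~$(3)$ of $\mathit{stackU}[v]$ and Lemma~\ref{lemma:lowUinStack}, this latter order is one of non-increasing $\mathit{low}$; in particular, when we come to $u$, the previous element $u^\dagger$ of $\mathit{stackU}[v]$ processed (if any) has $\mathit{low}(u^\dagger)\geq\mathit{low}(u)$, so resuming the search from $\mathit{lowestW}[v]$ (rather than from $\mathit{nextM}(v)$) cannot overshoot. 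It is convenient to phrase the induction hypothesis as a global invariant: at every moment, for every vertex $x$ whose processing has begun, $\mathit{lowestW}[x]$ is an element of $M^{-1}(M(x))$ with $\mathit{lowestW}[x]<x$, and $\mathit{lowestW}[x]$ is at most $\mathit{low}(u'')$ for every $u''\in\mathit{stackU}[x]$ already processed, and at most $\mathit{low}(u')$ for every pair $(v',u')$ with $M(v')=M(x)$, $\mathit{lowestW}[x]<v'<x$, and $u'\in\mathit{stackU}[v']$. Granting this, the lemma follows by reading off $w$ when the \textbf{while} loop terminates: the loop stops precisely when the current vertex drops to or below $\mathit{low}(u)$, and the invariant, applied to every vertex visited along the chain of $\mathit{lowestW}$ pointers starting from $\mathit{lowestW}[v]$, propagates both membership in $M^{-1}(M(v))$ and the domination of the $\mathit{low}$-values of the skipped intermediate stacks.

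The base of the induction is the initialization $\mathit{lowestW}[x]=\mathit{nextM}(x)$: here $\mathit{lowestW}[x]\in M^{-1}(M(x))$ and $\mathit{lowestW}[x]<x$, no $u''\in\mathit{stackU}[x]$ has yet been processed, and there is no $v'$ strictly between $\mathit{nextM}(x)$ and $x$ with $M(v')=M(x)$, so the remaining clauses hold vacuously. For the inductive step, consider the processing of a pair $(v,u)$, and trace the \textbf{while} loop. It starts at $x=\mathit{lowestW}[v]$, which by hypothesis lies in $M^{-1}(M(v))$, is $<v$, and respects the stated bounds for $v$ (the case where $u^\dagger$ exists was treated above; the first time, $x=\mathit{nextM}(v)$). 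Each iteration follows a pointer $x\leftarrow\mathit{lowestW}[x]$, and this occurs only while $x>\mathit{low}(u)$. Here we invoke the invariant for the vertex $x$ itself, which was processed strictly earlier since $x<v$: it gives $\mathit{lowestW}[x]\in M^{-1}(M(x))=M^{-1}(M(v))$, $\mathit{lowestW}[x]<x$, and that every element of $M^{-1}(M(v))$ strictly between $\mathit{lowestW}[x]$ and $x$ exceeds the $\mathit{low}$-values of the stacks recorded against it. Combining this with $x>\mathit{low}(u)$, the monotonicity of $\mathit{low}$ along an ancestor chain with a common $\mathit{high}$ value (Lemma~\ref{lemma:high_low}), the identity $\mathit{lastM}(v')=\mathit{lastM}(v)$ whenever $M(v')=M(v)$, and the bounds $\mathit{lastM}(v')\leq\mathit{low}(u')<\mathit{nextM}(v')$ for $u'\in\mathit{stackU}[v']$ coming from Lemma~\ref{necessity_for_U(v)} (together with Lemma~\ref{lemma:next_and_last}), one verifies that every element of $M^{-1}(M(v))$ strictly between $\mathit{lowestW}[x]$ and $v$ still exceeds $\mathit{low}(u)$ and the $\mathit{low}$-values of all stacks of intermediate vertices relevant to $v$. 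Thus the invariant survives the jump, and on loop exit it yields exactly the claim about $w$; finally line~\ref{alg_line_assign} sets $\mathit{lowestW}[v]\leftarrow w$, re-establishing the invariant for the next element of $\mathit{stackU}[v]$.

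The main obstacle is exactly this path-compression step: one must show that skipping the elements of $M^{-1}(M(v))$ between $\mathit{lowestW}[x]$ and $x$ --- a skip certified, during the earlier processing of $x$, against the $\mathit{low}$-values coming from $\mathit{stackU}[x]$ and from the stacks of vertices lying between $\mathit{lowestW}[x]$ and $x$ --- is still legitimate now, when it must be certified against $\mathit{low}(u)$ and against the stacks of all vertices between the eventual $w$ and $v$. A priori these two families of $\mathit{low}$-values are unrelated, and the reconciliation is the technical heart of the proof: it uses Lemma~\ref{lemma:high_low} to push the comparison onto the ancestor chain of vertices sharing $\mathit{high}(v)$, and Lemmas~\ref{lemma:next_and_last} and~\ref{lemma:property2} to control how the $\mathit{low}$-values of stack elements attached to lower vertices of $M^{-1}(M(v))$ compare with $\mathit{lastM}(v)$, so that no relevant constraint is lost when the pointer is compressed.
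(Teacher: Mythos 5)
Your proposal is correct and follows essentially the same route as the paper: a single induction over the processing order maintaining the invariant that $\mathit{lowestW}[x]$ stays in $M^{-1}(M(x))$ and is dominated by the $\mathit{low}$-values of all already-processed elements of $\mathit{stackU}[x]$ and of the stacks of intermediate vertices, with the path-compression jump justified by combining the invariant at $v$ with the (completed) invariant at the skipped vertex, and the skipped vertex's own stack handled via Lemma~\ref{lemma:high_low}/Lemma~\ref{lemma:lowUinStack}. The appeals to Lemmas~\ref{lemma:next_and_last} and~\ref{lemma:property2} and to the $\mathit{lastM}$ bounds are not actually needed for this lemma, but their presence does not affect the correctness of the argument.
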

\begin{proof}
First observe that, during the processing of a vertex $v$, the variables $w$ and $\mathit{lowestW}[v]$ are members of $M^{-1}(M(v))$, and $w$ is an ancestor of $v$ while $\mathit{lowestW}[v]$ is a proper ancestor of $v$. (It is easy to see this inductively. For if this holds for all vertices $v'<v$, then it is also true for $v$, since the \textbf{while} loop in line \ref{alg_line_while} assigns $w$ to $\mathit{lowestW}[w]$, and $w$ is assumed to be an ancestor of $v$ with $M(w)=M(v)$, and thus $\mathit{lowestW}[w]$ is also an ancestor of $v$ with $M(\mathit{lowestW}[w])=M(v)$, due to the inductive hypothesis.) Then it is obvious that, when we reach line \ref{alg_line_assign} during the processing of $u\in\mathit{stackU}[v]$, we have that $M(w)=M(v)$ and $w\leq\mathit{low}(u)$, since the \textbf{while} loop in line \ref{alg_line_while} terminates precisely when such a $w$ is found. Now we will show that, when we process a $u\in\mathit{stackU}[v]$, every time $w$ is assigned $\mathit{lowestW}[w]$ during the execution of the \textbf{while} loop in line \ref{alg_line_while}, we have $w\leq\mathit{low}(u')$, for every $u'\in\mathit{stackU}[v']$, for every $v'$ with $M(v')=M(v)$ and $w<v'<v$. It is easy to see this inductively. Suppose, then, that this was the case for every vertex that we processed before $v$, for every predecessor of $u$ in $\mathit{stackU}[v]$ that we already processed, and for every step of the \textbf{while} loop in line \ref{alg_line_while} in the processing of $u$ so far. Thus, now $w$ has the property that $w\leq\mathit{low}(u')$, for every $u'\in\mathit{stackU}[v']$, for every $v'$ with $M(v')=M(v)$ and $w<v'<v$. So let us perform $w\leftarrow \mathit{lowestW}[w]$ once more (which means that we still have $w>\mathit{low}(u)$), and let $\tilde{w}$ be the current value of $w$, to distinguish it from the previous one which we will denote simply as $w$. Now, due to the inductive hypothesis, we have that $\tilde{w}\leq\mathit{low}(u')$ for every $u'\in\mathit{stackU}[v']$, for every $v'$ with $M(v')=M(v)$ and $\tilde{w}<v'<w$. We also have (again, due to the inductive hypothesis) that $w\leq\mathit{low}(u')$ for every $u'\in\mathit{stackU}[v']$, for every $v'$ with $M(v')=M(v)$ and $w<v'<v$. Since $\tilde{w}<w$, we thus have $\tilde{w}\leq\mathit{low}(u')$, for every $u'\in\mathit{stackU}[v']$, for every $v'$ with $M(v')=M(v)$ and $\tilde{w}<v'<w$ or $w<v'<v$. Thus we only have to consider the case $v'=w$, and prove that every $u'\in\mathit{stackU}[w]$ satisfies $\tilde{w}\leq\mathit{low}(u')$. Observe that $\mathit{lowestW}[w]$ was updated for the last time in line \ref{alg_line_assign} when we were processing the last element $\tilde{u}$ of $\mathit{stackU}[w]$. Then, since $\tilde{w}=\mathit{lowestW}[w]$, due to the inductive hypothesis we have that $\tilde{w}\leq\mathit{low}(\tilde{u})$. Since every $u'\in\mathit{stackU}[w]$ has $\mathit{high}(u')=\mathit{high}(\tilde{u})$ and $\tilde{u}$ is an ancestor of its predecessors in $\mathit{stackU}[w]$ (due to the way we have filled the stacks $\mathit{stackU}$ with Algorithm \ref{algorithm:fill_stacks}), by Lemma \ref{lemma:high_low} we have that $\mathit{low}(\tilde{u})\leq\mathit{low}(u')$, and therefore $\tilde{w}\leq\mathit{low}(u')$. Thus we have shown that $\tilde{w}\leq\mathit{low}(u')$, for every $u'\in\mathit{stackU}[v']$, for every $v'$ with $M(v')=M(v)$ and $\tilde{w}<v'<v$.
\end{proof}

\begin{lemma}
\label{lemma:lowestW2}
Let $v$ be a vertex and $u\in\mathit{stackU}[v]$. When we reach line \ref{alg_line_assign} during the processing of $u$, we have that $w$ is the greatest vertex in $M^{-1}(M(v))$ such that $w\leq\mathit{low}(u)$ and $w\leq\mathit{min}\{\mathit{low}(u')\mid\exists v' \mbox{ with } M(v')=M(v) \mbox{, } w<v'<v \mbox{ and } u'\in\mathit{stackU}[v']\}$.
\end{lemma}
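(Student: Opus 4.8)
The plan is to use Lemma~\ref{lemma:lowestW} as a black box: it already tells us that the $w$ we hold when we reach line~\ref{alg_line_assign} lies in $M^{-1}(M(v))$ and satisfies $w\leq\mathit{low}(u)$ together with $w\leq\mathit{low}(u')$ for every $v'$ with $M(v')=M(v)$, $w<v'<v$ and every $u'\in\mathit{stackU}[v']$. So the whole content of Lemma~\ref{lemma:lowestW2} is the \emph{maximality} of $w$ among the vertices of $M^{-1}(M(v))$ satisfying these two inequalities, and I would prove it by induction following the execution order of Algorithm~\ref{algorithm:w_neq_nextM(v)}: the outer loop processes vertices $v$ from lowest to highest, and for a fixed $v$ the elements of $\mathit{stackU}[v]$ are processed in an order along which $\mathit{low}$ is non-increasing (property~$(3)$ of the stacks $\mathit{stackU}$ plus Lemma~\ref{lemma:high_low}, i.e.\ Lemma~\ref{lemma:lowUinStack}).

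First I would record the shape of the search: the \textbf{while} loop in line~\ref{alg_line_while} generates a strictly decreasing chain $a_0>a_1>\dots>a_k=w$ inside $M^{-1}(M(v))$, all proper ancestors of $v$, with $a_0=\mathit{lowestW}[v]$ at the moment we start processing $u$, $a_{i+1}=\mathit{lowestW}[a_i]$, and $a_0,\dots,a_{k-1}>\mathit{low}(u)\geq w$ (possibly $k=0$). Now assume, for contradiction, that some $\hat w\in M^{-1}(M(v))$ has $\hat w>w$ and still satisfies both inequalities. Since $u$ is a descendant of $v$ we have $\mathit{low}(u)<v$, hence $\hat w\leq\mathit{low}(u)<v$, so $\hat w$ is a proper ancestor of $v$; and since $\hat w\leq\mathit{low}(u)<a_i$ for all $i<k$ while $\hat w\neq a_k=w$, $\hat w$ is never a member of the chain. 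Therefore $\hat w$ is ``skipped'': either $\hat w>a_0$, or $a_{i+1}<\hat w<a_i$ for some $i<k$.

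The two cases are handled uniformly. Let $a$ be the owner of the offending jump ($a=v$ in the first case, $a=a_i$ in the second); then $M(a)=M(v)$, $\hat w<a\leq v$, and the pointer $\mathit{lowestW}[a]$ read at that jump lies strictly below $\hat w$. If $\mathit{stackU}[a]=\emptyset$ — which, when $a=v$, means $u$ is the first element of $\mathit{stackU}[v]$ to be processed — then $\mathit{lowestW}[a]=\mathit{nextM}(a)$, and $M^{-1}(M(v))$ contains no vertex strictly between $\mathit{nextM}(a)$ and $a$, so $\hat w$ cannot exist: contradiction. Otherwise $\mathit{lowestW}[a]$ was last set in line~\ref{alg_line_assign} while processing the last-so-far element $\tilde u$ of $\mathit{stackU}[a]$; that event precedes the current one in the induction order (because $a<v$, or $a=v$ and $\tilde u$ is the element of $\mathit{stackU}[v]$ processed immediately before $u$), so by the inductive hypothesis $\mathit{lowestW}[a]$ is the \emph{greatest} $W\in M^{-1}(M(v))$ with $W\leq\mathit{low}(\tilde u)$ and $W\leq\mathit{low}(u')$ for all $v'$ with $M(v')=M(v)$, $W<v'<a$, $u'\in\mathit{stackU}[v']$. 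Because $\hat w>\mathit{lowestW}[a]$, $\hat w$ violates one of these. If $\hat w>\mathit{low}(\tilde u)$ then the pair $v'=a$, $u'=\tilde u$ already witnesses that $\hat w$ fails the second inequality of Lemma~\ref{lemma:lowestW2} with respect to $(v,u)$, since $\hat w<a\leq v$ and $M(a)=M(v)$ (and when $a=v$ this sub-case is vacuous, as $\hat w\leq\mathit{low}(u)\leq\mathit{low}(\tilde u)$ by non-increasingness of $\mathit{low}$ along $\mathit{stackU}[v]$). Otherwise $\hat w$ fails the second clause, i.e.\ there is $v'$ with $M(v')=M(v)$, $\hat w<v'<a\leq v$, $u'\in\mathit{stackU}[v']$ and $\mathit{low}(u')<\hat w$ — again contradicting the second inequality for $\hat w$ with respect to $(v,u)$. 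Either way we are done, so $w$ is maximal.

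I expect the main difficulty to be exactly this reduction of a ``skipped'' vertex to a concrete violating witness: the path-compression on the $\mathit{lowestW}$ pointers lets a single iteration of the loop leap over arbitrarily many members of $M^{-1}(M(v))$, and for each skipped $\hat w$ we must produce a pair $(v',u')$ with $v'$ lying strictly between $\hat w$ and $v$. The feature that makes it go through is that Lemma~\ref{lemma:lowestW2}, quantified as it is with the candidate $W$ appearing in the range ``$W<v'<a$'', hands back a witness whose $v'$ automatically exceeds the over-large candidate $\hat w$; together with $a\leq v$ this is precisely the range the statement for $(v,u)$ requires. The remaining ingredients — that $M^{-1}(M(v))$ has no vertex strictly between $\mathit{nextM}(a)$ and $a$, and that $\mathit{low}$ is non-increasing along the processing order of $\mathit{stackU}[v]$ (Lemma~\ref{lemma:lowUinStack}) — are immediate.
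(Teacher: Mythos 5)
Your proof is correct and follows essentially the same route as the paper's: both establish maximality by induction along the execution order, using Lemma~\ref{lemma:lowestW} for feasibility and the inductive meaning of the $\mathit{lowestW}$ pointers (with the $\mathit{nextM}$ base case) to rule out larger candidates. The only difference is organizational — the paper directly identifies the value reached at the last jump with the extremal vertex via a two-sided comparison, whereas you argue by contradiction, locating the jump that skips a hypothetical larger candidate and extracting a violating witness from the inductive hypothesis at that jump's owner — but both reduce to the same inductive invariant.
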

\begin{proof}
We will prove this lemma by induction. Let's assume, then, that, for every vertex $v'\leq v$, and every $u'\in\mathit{stackU}[v']$ that we processed so far, whenever we reached line \ref{alg_line_assign} $w$ was the greatest vertex with $M(w)=M(v')$ such that $w\leq\mathit{low}(u)$ and $w\leq\mathit{min}\{\mathit{low}(u'')\mid\exists v'' \mbox{ with } M(v'')=M(v') \mbox{, } w<v''<v' \mbox{ and } u''\in\mathit{stackU}[v'']\}$. Now let $u$ be the next element of $\mathit{stackU}[v]$ that we process. 
Let $\tilde{w}$ be the greatest vertex with $M(\tilde{w})=M(v)$ such that $\tilde{w}\leq\mathit{low}(u)$ and $\tilde{w}\leq\mathit{min}\{\mathit{low}(u')\mid\exists v' \mbox{ with } M(v')=M(v) \mbox{, } \tilde{w}<v'<v \mbox{ and } u'\in\mathit{stackU}[v']\}$. (The existence of such a $\tilde{w}$ is guaranteed by Lemma \ref{lemma:lowestW}.) 
Let $w$ be the last vertex during the execution of the \textbf{while} loop in line \ref{alg_line_while} that had $w>\mathit{low}(u)$, and let $w'=\mathit{lowestW}[w]$. Then we have that $w'=\mathit{lowestW}[w]\leq\mathit{low}(u)$, and the \textbf{while} loop terminates here. We will show that $w'=\tilde{w}$.
We distinguish two cases, depending on whether $w'=\mathit{nextM}(w)$ or $w'\neq\mathit{nextM}(w)$. In the first case, we have that $w>\mathit{low}(u)$, but $\mathit{nextM}(w)\leq\mathit{low}(u)$. Thus, $w'=\mathit{nextM}(w)$ is the greatest vertex with $M(w')=M(v)$ such that $w'\leq\mathit{low}(u)$, and so we have $w'=\tilde{w}$ (since $w'$ satisfies also $w'\leq\mathit{min}\{\mathit{low}(u')\mid\exists v' \mbox{ with } M(v')=M(v) \mbox{, } w<v'<v \mbox{ and } u'\in\mathit{stackU}[v']\}$, by Lemma \ref{lemma:lowestW}). Now, if $w'\neq\mathit{nextM}(w)$, this means, due to the inductive hypothesis (and since $w'=\mathit{lowestW}[w]$), that $w'$ is the greatest vertex with $M(w')=M(w)$ such that $w'\leq\mathit{low}(\tilde{u})$ and $w'\leq\mathit{min}\{\mathit{low}(u')\mid\exists v' \mbox{ with } M(v')=M(w) \mbox{, } w'<v'<w \mbox{ and } u'\in\mathit{stackU}[v']\}$, where $\tilde{u}$ is the last element in $\mathit{stackU}[w]$. Now, since $\tilde{w}$ satisfies $\tilde{w}\leq\mathit{min}\{\mathit{low}(u')\mid\exists v' \mbox{ with } M(v')=M(v) \mbox{, } \tilde{w}<v'<v \mbox{ and } u'\in\mathit{stackU}[v']\}$ and $\tilde{w}<w<v$, we have $\tilde{w}\leq\mathit{low}(\tilde{u})$ and $\tilde{w}\leq\mathit{min}\{\mathit{low}(u')\mid\exists v' \mbox{ with } M(v')=M(w) \mbox{, } \tilde{w}<v'<w \mbox{ and } u'\in\mathit{stackU}[v']\}$. Thus, $\tilde{w}$ cannot be greater than $w'$, and so we have $w'\geq\tilde{w}$. Since $w'\leq\mathit{low}(u)$, and, as a consequence of Lemma \ref{lemma:lowestW}, $w'\leq\mathit{min}\{\mathit{low}(u')\mid\exists v' \mbox{ with } M(v')=M(v) \mbox{, } w'<v'<v \mbox{ and } u'\in\mathit{stackU}[v']\}$, it must be the case that $w'=\tilde{w}$.
\end{proof}

\begin{lemma}
\label{lemma:lowestW3}
Let $\{(u,p(u)),(v,p(v)),(w,p(w))\}$ be a $3$-cut where $u$ is a descendant of $v$, $v$ is a descendant of $w$ with $M(v)=M(w)$, and $w\neq\mathit{nextM}(v)$. Then, $w$ is the greatest vertex in $M^{-1}(M(v))$ such that $w\leq\mathit{low}(u)$ and $w\leq\mathit{min}\{\mathit{low}(u')\mid\exists v' \mbox{ with } M(v')=M(v) \mbox{, } w<v'<v \mbox{ and } u'\in\mathit{stackU}[v']\}$.
\end{lemma}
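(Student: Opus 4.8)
The plan is to derive the statement from Lemma~\ref{lemma:greatestW} together with a short structural argument that pins down the position of $u'$. First I would observe that $u\in\mathit{stackU}[v]$: since $w\neq\mathit{nextM}(v)$ and $\{(u,p(u)),(v,p(v)),(w,p(w))\}$ is a $3$-cut, and $w$ is a proper ancestor of $v$ with $M(w)=M(v)$ and $w\leq\mathit{nextM}(v)$, hence $w<\mathit{nextM}(v)$, we get $u\in U(v)$, so by Lemma~\ref{necessity_for_U(v)} and property~$(1)$ of the stacks, $u\in\mathit{stackU}[v]$. Thus Lemma~\ref{lemma:greatestW} (the ``$\Rightarrow$'' direction) applies and tells us that $w$ is the \emph{greatest} element of $M^{-1}(M(v))$ with $w\leq\mathit{low}(u)$; in particular the first of the two required inequalities holds, and by Proposition~\ref{proposition:uvw} we also have $B(v)=B(u)\sqcup B(w)$, from which $\mathit{low}(u)\geq w$ (a back-edge of $B(u)=B(v)\setminus B(w)$ with endpoint below $w$ would, since its start lies in $T(u)\subseteq T(w)$, belong to $B(w)$). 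Once I also establish $w\leq\mathit{low}(u')$ for every admissible pair $(v',u')$, the claim is complete: being the greatest element of $M^{-1}(M(v))$ that satisfies a condition is preserved when the condition is strengthened, so $w$ is also the greatest element satisfying both.

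So the heart of the proof is: if $M(v')=M(v)$, $w<v'<v$, and $u'\in\mathit{stackU}[v']$, then $w\leq\mathit{low}(u')$; I would in fact prove the stronger statement that $u'$ is a descendant of $u$. Since $u'\in\mathit{stackU}[v']$, it is a proper descendant of $v'$ with $\mathit{high}(u')=\mathit{high}(v')$, so $B(u')\subseteq B(v')$; and since $M(v')=M(v)$ with $v'$ a proper ancestor of $v$, the lemma immediately following Lemma~\ref{lemma:Muv} gives $B(v')\subseteq B(v)$. Hence $B(u')\subseteq B(v)=B(u)\sqcup B(w)$, and as $B(u')\neq\emptyset$ the vertices $u'$ and $v$ are related as ancestor and descendant. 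They cannot be related with $u'$ a proper ancestor of $v$: then $M(w)=M(v)$, being a descendant of $v$, would be a descendant of $u'$, so $B(w)\subseteq B(u')$ and hence $\mathit{low}(u')\leq\mathit{low}(w)$; but $\mathit{low}(w)<\mathit{lastM}(v)=\mathit{lastM}(v')$ (immediate if $\mathit{lastM}(v)=w$, and otherwise from $B(\mathit{lastM}(v))\subseteq B(w)$), contradicting $\mathit{low}(u')\geq\mathit{lastM}(v')$, which holds because $u'\in\mathit{stackU}[v']$. Therefore $u'$ is a proper descendant of $v$.

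It remains to exclude the two other relative positions of $u$ and $u'$. If $u$ and $u'$ were unrelated: from $B(w)\subseteq B(v')$ and $B(v')\subseteq B(v)=B(u)\sqcup B(w)$ we can write $B(v')=B(w)\sqcup(B(v')\cap B(u))$ with $B(v')\cap B(u)\neq\emptyset$ (otherwise $B(v')=B(w)$, contradicting Fact~\ref{fact:3edgeconn}); since the endpoints of $B(w)$ are all smaller than those of $B(u)$, we get $\mathit{high}(v')\geq w$, realized by a back-edge starting in $T(u)$, whereas $\mathit{high}(u')=\mathit{high}(v')$ is realized by a back-edge starting in $T(u')$, which is disjoint from $T(u)$; that back-edge lies in $B(v)\setminus B(u)=B(w)$, forcing $\mathit{high}(v')<w$ --- a contradiction. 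If instead $u$ were a proper descendant of $u'$ (so $u'$ lies on the tree path strictly between $v$ and $u$), then $B(u)\subseteq B(u')\subseteq B(v)$, whence $\mathit{high}(u)\leq\mathit{high}(u')\leq\mathit{high}(v)$; since $\mathit{high}(u)=\mathit{high}(v)$ by Lemma~\ref{lemma:u_same_high_with_v}, we obtain $\mathit{high}(v')=\mathit{high}(u')=\mathit{high}(v)$, and with $M(v')=M(v)$ the Fact stating that $\mathit{high}(x)=\mathit{high}(y)$ iff $M(x)=M(y)$ and $\mathit{b\_count}(x)=\mathit{b\_count}(y)$ yields $\mathit{b\_count}(v')=\mathit{b\_count}(v)$, hence $B(v')=B(v)$ (as $B(v')\subseteq B(v)$), again contradicting Fact~\ref{fact:3edgeconn}. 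Thus $u'$ is a descendant of $u$, and then every back-edge of $B(u')$ ends at a proper ancestor of $u'$: those ending strictly above $u$ lie in $B(u)$ and so have endpoint $\geq w$, while those ending on the path from $u$ to $u'$ have endpoint $\geq u>w$; hence $\mathit{low}(u')\geq w$, as needed. The main obstacle is precisely this case analysis on the position of $u'$; the rest is routine once $B(u')\subseteq B(v)$ is in hand.
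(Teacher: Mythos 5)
Your proof is correct, and it actually establishes something slightly stronger than the paper's argument does. The paper proceeds by contradiction: assuming some admissible $u'$ has $\mathit{low}(u')<w$, it shows $u'$ must be unrelated to $u$ (the case ``$u'$ a descendant of $u$'' is killed instantly by $\mathit{low}(u')<w\leq\mathit{low}(u)$) and then derives the contradiction from the back-edge realizing $\mathit{high}(u')=\mathit{high}(v')$, which can lie in neither $B(u)$ nor $B(w)$. You instead prove directly that every admissible $u'$ is a descendant of $u$, by a case analysis on its position, and then read off $\mathit{low}(u')\geq w$. The hardest case is common to both proofs: your exclusion of the unrelated configuration, via the decomposition $B(v')=B(w)\sqcup(B(v')\cap B(u))$ and the observation that a back-edge realizing $\mathit{high}(u')=\mathit{high}(v')$ would have to start in $T(u)$, is exactly the paper's final contradiction, and both arguments rest on the same inclusion chain $B(u')\subseteq B(v')\subseteq B(v)=B(u)\sqcup B(w)$ and delegate the first half of the claim to Lemma~\ref{lemma:greatestW} (you are more careful than the paper in first verifying $u\in\mathit{stackU}[v]$, which that lemma formally requires). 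What your route buys is the cleaner structural fact that all relevant $u'$ lie in $T(u)$; what it costs is the two extra cases that the contradiction hypothesis lets the paper dispatch for free. Two cosmetic points: after excluding ``$u'$ a proper ancestor of $v$'' you conclude ``$u'$ is a proper descendant of $v$'', silently passing over $u'=v$ --- your argument ($B(w)\subseteq B(u')$, hence $\mathit{low}(u')\leq\mathit{low}(w)<\mathit{lastM}(v')$) covers that case verbatim, but it should be said; and in the case ``$u$ a proper descendant of $u'$'' the inclusion $B(u)\subseteq B(u')$ is best justified from $B(u)\subseteq B(v)$ (every lower endpoint is a proper ancestor of $v$, hence of $u'$) rather than being asserted before $\mathit{high}(u)=\mathit{high}(v)$ has been invoked.
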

\begin{proof}
Suppose, for the sake of contradiction, that there exists a vertex $v'$ with $M(v')=M(v)$ and $w<v'<v$, such that there exists a $u'\in\mathit{stackU}[v']$ with $\mathit{low}(u')<w$. Since $u'\in\mathit{stackU}[v']$, we have that $u'$ is a proper descendant of $v'$ with $\mathit{high}(u')=\mathit{high}(v')$.
Let $(x,y)\in B(u')$ (of course, $B(u')$ is not empty, since the graph is $3$-edge-connected). Then $x$ is a descendant of $u'$, and therefore a descendant of $v'$. Furthermore, $y\leq\mathit{high}(u')=\mathit{high}(v')$, and therefore $y$ is a proper ancestor of $v'$. This shows that $(x,y)\in B(v')$. Thus we have $B(u')\subset B(v')$. Since $M(v')=M(v)$ and $v'<v$, we have $B(v')\subset B(v)$. Thus, $B(u')\subset B(v)$.
Now we will prove that $u'$ is not related as ancestor or descendant with $u$. First, since $\mathit{low}(u')<w\leq\mathit{low}(u)$, it cannot be the case that $u'$ is a descendant of $u$ (for a back-edge $(x,\mathit{low}(u'))\in B(u')$ would also be a back-edge in $B(u)$, and thus we would have $\mathit{low}(u)\leq\mathit{low}(u')$, which is a absurd). Suppose, then, that $u'$ is an ancestor of $u$. Since $v'$ is a proper ancestor of $v$ with $M(v')=M(v)$, we must have $\mathit{high}(v')<\mathit{high}(v)$; and since $\mathit{high}(u')=\mathit{high}(v')$, we therefore have $\mathit{high}(u')<\mathit{high}(v)$. This means that $u'$ (which is related as ancestor or descendant with $v$, since we supposed it is an ancestor of $u$) is a proper ancestor of $v$, and therefore a proper ancestor of $M(v)$. Since, then, $u'$ is a descendant $v'$ and $M(v')=M(v)$, by Lemma \ref{lemma:Muv} we have that $M(u')$ is an ancestor of $M(v)$. But $B(u')\subset B(v)$ implies that $M(u')$ is a descendant of $M(v)$, and therefore $M(u')=M(v)$. Since $M(v)=M(v')$ and $\mathit{high}(v')=\mathit{high}(u')$, we get that $B(u')=B(v')$, which implies that $v'=u'$ - a contradiction. Thus we have shown that $u'$ is not related as ancestor or descendant with $u$.

Now let $(x,y)$, with $y=\mathit{high}(u')$, be a back-edge in $B(u')$. Then we have $(x,y)\in B(v)$. By proposition \ref{proposition:uvw}, we have $B(v)=B(u)\sqcup B(w)$, and therefore $(x,y)\in B(u)$ or $(x,y)\in B(w)$. Since $u'$ is not related as ancestor of descendant with $u$, it cannot be the case that $x$ (which is a descendant of $u'$) is a descendant of $u$, and therefore $(x,y)\in B(u)$ is rejected.
Now, since $B(u')\subset B(v')$, we have $(x,y)\in B(v')$. Since $M(v')=M(w)$ and $w<v'$, we have that $B(w)\subset B(v')$, and thus there exists a back-edge $(x',y')\in B(v')$ such that $y'\in T(v',w]$. But since $y=\mathit{high}(u')=\mathit{high}(v')$, we must have $y'\leq y$. Thus, $y$ is not a proper ancestor of $w$, and so $(x,y)\notin B(w)$, either. We have arrived at a contradiction, as a consequence of our initial supposition. This shows that there is no vertex $v'$ with $M(v')=M(v)$ and $w<v'<v$, such that there exists a $u'\in\mathit{stackU}[v']$ with $\mathit{low}(u')<w$. Thus, $w\leq\mathit{min}\{\mathit{low}(u')\mid\exists v' \mbox{ with } M(v')=M(v) \mbox{, } w<v'<v \mbox{ and } u'\in\mathit{stackU}[v']\}$.
Now, by Lemma \ref{lemma:greatestW}, $w$ is the greatest vertex in $M^{-1}(M(v))$ with $w\leq\mathit{low}(u)$. Thus, $w$ must be the greatest vertex in $M^{-1}(M(v))$ that satisfies both $w\leq\mathit{low}(u)$ and $w\leq\mathit{min}\{\mathit{low}(u')\mid\exists v' \mbox{ with } M(v')=M(v) \mbox{, } w<v'<v \mbox{ and } u'\in\mathit{stackU}[v']\}$.
\end{proof}

\begin{algorithm}[!h]
\caption{\textsf{Find all $3$-cuts $\{(u,p(u)),(v,p(v)),(w,p(w))\}$, where $u$ is a descendant of $v$, $v$ is a descendant of $w$ with $M(v)=M(w)$, and $w\neq\mathit{nextM}(v)$.}}
\label{algorithm:w_neq_nextM(v)}
\LinesNumbered
\DontPrintSemicolon
initialize an array $\mathit{lowestW}$ with $n$ entries\;
\lForEach{vertex $v$}{$\mathit{lowestW}[v] \leftarrow \mathit{nextM}(v)$}
\For{$v\leftarrow 1$ to $v\leftarrow n$}{
\label{alg_line_for}
  \While{$\mathit{stackU}[v]\mathit{.top()}\neq\emptyset$}{
    \label{alg_line_whileU}
    $u \leftarrow \mathit{stackU}[v]$.pop()\;
    $w \leftarrow \mathit{lowestW}[v]$\;
    \label{alg_initialize}
    \lWhile{$w>\mathit{low}(u)$}{$w \leftarrow \mathit{lowestW}[w]$}
    \label{alg_line_while}
    $\mathit{lowestW}[v] \leftarrow w$\;
    \label{alg_line_assign}
    \If{$\mathit{b\_count}(v)=\mathit{b\_count}(u)+\mathit{b\_count}(w)$}{
      mark the triplet $\{(u,p(u)),(v,p(v)),(w,p(w))\}$\;
      \label{alg_mark}
    }
  }
}
\end{algorithm}

\begin{proposition}
Algorithm \ref{algorithm:w_neq_nextM(v)} identifies all $3$-cuts $\{(u,p(u)),(v,p(v)),(w,p(w))\}$, where $u$ is a descendant of $v$, $v$ is a descendant of $w$ with $M(v)=M(w)$, and $w\neq\mathit{nextM}(v)$. Furthermore, it runs in linear time.
\end{proposition}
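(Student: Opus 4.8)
The plan is to establish correctness and linear running time in turn.

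\medskip\noindent\textbf{Correctness.}
Recall that the analysis preceding Algorithm~\ref{algorithm:fill_stacks} shows that the stacks satisfy properties $(1)$--$(3)$; in particular $U(v)\subseteq\mathit{stackU}[v]$ for every $v$, and every element of $\mathit{stackU}[v]$ is a proper descendant $u$ of $v$ with $\mathit{high}(u)=\mathit{high}(v)$, $\mathit{nextM}(u)=\emptyset$, and $\mathit{lastM}(v)\leq\mathit{low}(u)<\mathit{nextM}(v)$. Fixing such a $v$ and a $u$ popped from $\mathit{stackU}[v]$ in the main loop, the goal is to show that the inner \textbf{while} loop of lines~\ref{alg_initialize}--\ref{alg_line_assign} assigns to $w$ the vertex prescribed by Lemma~\ref{lemma:lowestW3}: the greatest element of $M^{-1}(M(v))$ that is $\leq\mathit{low}(u)$ and also satisfies the additional minimum-condition stated there. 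This follows by combining Lemma~\ref{lemma:lowestW} and Lemma~\ref{lemma:lowestW2} (which identify the vertex reached along the $\mathit{lowestW}$-chain) with Lemma~\ref{lemma:lowestW3} (which guarantees that, for a genuine $3$-cut, the correct $w$ is exactly that vertex). Moreover the chain never underflows and never leaves $M^{-1}(M(v))$, since it starts at $\mathit{lowestW}[v]=\mathit{nextM}(v)$, stops as soon as it meets a vertex $\leq\mathit{low}(u)$, and $\mathit{lastM}(v)\leq\mathit{low}(u)<\mathit{nextM}(v)$; in particular at least one step is taken, so the computed $w$ satisfies $w<\mathit{nextM}(v)<v$, whence $w$ is a proper ancestor of $v$ with $M(w)=M(v)$ and $w\neq\mathit{nextM}(v)$. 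Finally, by the forward direction of Lemma~\ref{lemma:greatestW} a genuine $3$-cut of this type additionally has $\mathit{b\_count}(v)=\mathit{b\_count}(u)+\mathit{b\_count}(w)$, so it is marked; and whenever the test guarding line~\ref{alg_mark} succeeds, the converse direction of Lemma~\ref{lemma:greatestW} shows the marked triplet is indeed such a $3$-cut. Hence the algorithm marks exactly the $3$-cuts of the required form.

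\medskip\noindent\textbf{Running time.}
The number of pops from all stacks $\mathit{stackU}[v]$, hence the number of executions of the body of the loop at line~\ref{alg_line_whileU}, equals the total size of these stacks, which is $O(n)$ by Lemma~\ref{lemma:property2}; each such iteration does $O(1)$ work apart from the inner \textbf{while} loop. So it remains to bound the total number of pointer moves $w\leftarrow\mathit{lowestW}[w]$ executed at line~\ref{alg_line_while}. I would do this by an amortized (path-compression) argument carried out separately for each value $m$ and its list $M^{-1}(m)$, which under the $\mathit{nextM}$/$\mathit{lowestW}$ pointers is a linear chain. The vertices of $M^{-1}(m)$ are processed in increasing preorder (the outer loop at line~\ref{alg_line_for} runs from $1$ to $n$), and within the processing of a single $v$ the popped $u$'s have non-increasing $\mathit{low}(u)$ (Lemma~\ref{lemma:lowUinStack}), so the target reached along the $\mathit{lowestW}$-chain moves monotonically downward through $M^{-1}(m)$ and the value of $\mathit{lowestW}[v]$ only decreases. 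Each assignment at line~\ref{alg_line_assign} performs a path compression: afterwards the pointer $\mathit{lowestW}[v]$ skips over every intermediate vertex of $M^{-1}(m)$ just traversed. The claim to prove is that the total number of pointer moves within $M^{-1}(m)$ is $O(|M^{-1}(m)|)$ plus the number of pops from stacks $\mathit{stackU}[v]$ with $M(v)=m$: informally, once a query issued while processing $v$ has jumped over a vertex $a\in M^{-1}(m)$, the pointer $\mathit{lowestW}[v]$ now lies strictly below $a$, so no subsequent query originating at $v$ reaches $a$ again, and a later-processed $v'$ with $M(v')=m$ whose $\mathit{lowestW}$-chain would enter the range of $a$ is rerouted through the already-compressed $\mathit{lowestW}[v]$. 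Summing over all $m$ then gives $O(n)$ pointer moves in total, and hence linear time.

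\medskip\noindent\textbf{Main obstacle.}
The technical heart is exactly this last amortization. Unlike textbook path compression, line~\ref{alg_line_assign} updates only the pointer $\mathit{lowestW}[v]$ of the vertex issuing the query, not the pointers of the intermediate vertices traversed, so the classical bounds do not apply directly; instead one must exploit the extra structure available here --- the increasing processing order of the vertices, the monotone $\mathit{low}$-values within each stack (Lemmata~\ref{lemma:high_low} and~\ref{lemma:lowUinStack}), and the chain shape of each $M^{-1}(m)$ --- to set up a charging scheme in which each pointer move is paid for either by the popped vertex $u$ that issued the query (of which there are $O(n)$) or by a vertex of $M^{-1}(m)$ that $\mathit{lowestW}[v]$ is about to skip permanently, and to verify that each such vertex is charged only $O(1)$ times. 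Making this charging precise is where all the difficulty lies; the remainder is the routine assembly of the already-established Lemmata~\ref{necessity_for_U(v)},~\ref{lemma:property2},~\ref{lemma:lowUinStack},~\ref{lemma:greatestW},~\ref{lemma:lowestW},~\ref{lemma:lowestW2}, and~\ref{lemma:lowestW3}.
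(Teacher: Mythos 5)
Your correctness argument is sound and follows the paper's own route: combine Lemma \ref{lemma:lowestW2} (the inner \textbf{while} loop lands on the greatest $w\in M^{-1}(M(v))$ with $w\leq\mathit{low}(u)$ and the auxiliary minimum condition) with Lemma \ref{lemma:lowestW3} (a genuine $3$-cut of this type has exactly that $w$) and the two directions of Lemma \ref{lemma:greatestW}; your observation that $w<\mathit{nextM}(v)$ because $\mathit{low}(u)<\mathit{nextM}(v)$ for every $u\in\mathit{stackU}[v]$ is also the paper's. That half is complete.

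The running-time half, however, contains a genuine gap, and you have correctly located it yourself: line \ref{alg_line_assign} compresses only the pointer of the query-issuing vertex $v$, not the pointers of the intermediate vertices traversed, so nothing you have written rules out a later chain entering the just-traversed segment of $M^{-1}(m)$ at an \emph{interior} vertex $a$ (without ever landing on $v$, hence without ever benefiting from the compressed $\mathit{lowestW}[v]$) and re-walking many stale pointers. Your proposed charging scheme asserts that such a chain "is rerouted through the already-compressed $\mathit{lowestW}[v]$", but that is precisely the unproven claim, and it is not true for the reason you would like: the rerouting is not effected by $\mathit{lowestW}[v]$ at all. The paper closes this gap with a two-part invariant, maintained by induction over the processing order of the outer loop: (a) every entry $\mathit{lowestW}[v']$ is dereferenced \emph{at most once} in line \ref{alg_line_while} over the entire execution, and (b) for every $v'$, every vertex $w\in T(v',\mathit{lowestW}[v'])$ satisfies $\mathit{lowestW}[w]\geq\mathit{lowestW}[v']$. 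Part (b) is the missing structural fact: it lets one argue that whenever a later chain sits at a vertex $w$ whose range $T(w,\mathit{lowestW}[w])$ would contain the earlier query vertex $\tilde{v}$, the pointer $\mathit{lowestW}[w]$ cannot land strictly inside $T(\tilde{v},v']$, so the chain either stays among descendants of $\tilde{v}$ or jumps to a proper ancestor of $v'$, and the entry $\mathit{lowestW}[v']$ is never touched again. Without establishing (b) (or an equivalent global statement about the untouched interior pointers), your amortization does not go through, so the linear-time claim remains unproved in your write-up.
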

\begin{proof}
Let $\{(u,p(u)),(v,p(v)),(w,p(w))\}$ be a $3$-cut, where $u$ is a descendant of $v$, $v$ is a descendant of $w$ with $M(v)=M(w)$, and $w\neq\mathit{nextM}(v)$. By Lemma \ref{lemma:lowestW3}, $w$ is the greatest vertex in $M^{-1}(M(v))$ such that $w\leq\mathit{low}(u)$ and $w\leq\mathit{min}\{\mathit{low}(u')\mid\exists v' \mbox{ with } M(v')=M(v) \mbox{, } w<v'<v \mbox{ and } u'\in\mathit{stackU}[v']\}$. By Lemma \ref{lemma:lowestW2}, Algorithm \ref{algorithm:w_neq_nextM(v)} will identify $w$ during the processing of $u$ in line \ref{alg_line_assign}. As a consequence of proposition \ref{proposition:uvw}, we have $\mathit{b\_count}(v)=\mathit{b\_count}(u)+\mathit{b\_count}(w)$, and thus the triplet $\{(u,p(u)),(v,p(v)),(w,p(w))\}$ will be marked in line \ref{alg_mark}. Conversely, let $\{(u,p(u)),(v,p(v)),(w,p(w))\}$ be a triplet that gets marked by Algorithm \ref{algorithm:w_neq_nextM(v)} in line \ref{alg_mark}. Then, we have $u\in\mathit{stackU}[v]$. Furthermore, Lemma \ref{lemma:lowestW2} implies that $w$ has $M(w)=M(v)$ and $w\leq\mathit{low}(u)$. Then, since $u\in\mathit{stackU}[v]$, we have $\mathit{low}(u)<\mathit{nextM}(v)$, and therefore $w$ is a proper ancestor of $v$. Now, since $\mathit{b\_count}(v)=\mathit{b\_count}(u)+\mathit{b\_count}(w)$, Lemma \ref{lemma:greatestW} implies that $\{(u,p(u)),(v,p(v)),(w,p(w))\}$ is a $3$-cut. Thus, the correctness of Algorithm \ref{algorithm:w_neq_nextM(v)} is established.

To prove that Algorithm \ref{algorithm:w_neq_nextM(v)} runs in linear time, we will count the number of times that we access the array $\mathit{lowestW}$ during the \textbf{while} loop in line \ref{alg_line_while}. Specifically, we will show that, by the time the algorithm is terminated, the $v$ entry of $\mathit{lowestW}$, for every vertex $v$, will have been accessed at most once in line \ref{alg_line_while}. We will prove this inductively, using the inductive proposition: $\Pi(v)\equiv$ after processing $v$, we have that $\forall v'<v$ $\mathit{lowestW}[v']$ has been accessed at most once in line \ref{alg_line_while} during the course of the algorithm so far \textbf{and} $\forall v'\leq v$ we have that every $w\in T(v',\mathit{lowestW}[v'])$ has $\mathit{lowestW}[w]\geq\mathit{lowestW}[v']$. Thus, (the first part of) $\Pi(n)$ implies the linearity of Algorithm \ref{algorithm:w_neq_nextM(v)}. Now, suppose that $\Pi(v-1)$ is true for a $v\in\{1,\dotsc,n\}$ (observe that $\Pi(0)$ is trivially true). We will prove that $\Pi(v)$ is also true. Thus we have to show that: after we have processed every $u\in\mathit{stackU}[v]$, we have that $\forall v'<v$ $\mathit{lowestW}[v']$ has been accessed at most once in line \ref{alg_line_while} during the course of the algorithm so far \textbf{and} $\forall v'\leq v$ we have that every $w\in T(v',\mathit{lowestW}[v'])$ has $\mathit{lowestW}[w]\geq\mathit{lowestW}[v']$ $(1)$. Now, suppose that this was a case for a specific $\tilde{u}\in\mathit{stackU}[v]$. We will show that it is still true for the successor $u$ of $\tilde{u}$ in $\mathit{stackU}[v]$. (Of course, due to the inductive hypothesis, $(1)$ is definitely true before we have began processing the elements of $\mathit{stackU}[v]$, and therefore we may also have that $u$ is the first element of $\mathit{stackU}[v]$ in what follows.) Let $\tilde{w}$ be the value of $\mathit{lowestW}[v]$ after the assignment in line \ref{alg_line_assign}, during the processing of $u$. Thus, all vertices that we traversed during the execution of the \textbf{while} loop, during the processing of $u$, are contained in $T[v,\tilde{w}]$. Now let $v'<v$ be a vertex with the property that $\mathit{lowestW}[v']$ has been accessed once in line \ref{alg_line_while} during the course of the algorithm before the processing of $u$, and let $\tilde{v}$ be the vertex during whose processing we had to access $\mathit{lowestW}[v']$ in the \textbf{while} loop. We will show that $\mathit{lowestW}[v']$ will not be accessed in line \ref{alg_line_while} during the processing of $u$. Of course, we may assume that $v'$ is in $T[v,\tilde{w})$, for otherwise it is clear that the $v'$ entry of $\mathit{lowestW}$ will not be accessed during the execution of the \textbf{while} loop (since the traversal in \textbf{while} loop will not reach vertices lower than $\tilde{w}$, and when it reaches $\tilde{w}$ it will terminate). We note that, since the $v'$ entry of $\mathit{lowestW}$ was accessed during the execution of the \textbf{while} loop during the processing of $\tilde{v}$, we have that $\mathit{lowestW}[\tilde{v}]$ is an ancestor of $\mathit{lowestW}[v']$, and therefore a proper ancestor of $v'$. Now, if $\tilde{v}=v$, then $\mathit{lowestW}[v]$ was assigned $\mathit{lowestW}[\tilde{v}]$, in line \ref{alg_line_assign}, during the processing of a predecessor of $u$ in $\mathit{stackU}[v]$. Thus, when we begin processing $u$, $w$ is assigned a proper ancestor of $v'$ in line \ref{alg_initialize}, before entering the \textbf{while} loop, and so the $v'$ entry of $\mathit{lowestW}$ will not be accessed during the execution of the \textbf{while} loop. So let's assume that $\tilde{v}<v$. Initially, the variable $w$ is assigned $\mathit{lowestW}[v]$ in line \ref{alg_initialize}. We claim that $\mathit{lowestW}[v]$ is either a descendant of $\tilde{v}$ or a proper ancestor of $v'$. To see this, suppose, for the sake of contradiction, that $\mathit{lowestW}[v]$ is in $T(\tilde{v},v']$. Then, we have $\tilde{v}\in T(v,\mathit{lowestW}[v])$, and therefore, since $(1)$ is true for $\tilde{u}$ (the predecessor of $u$ in $\mathit{stackU}[v]$), we have that $\mathit{lowestW}[\tilde{v}]\geq\mathit{lowestW}[v]$. Since $\mathit{lowestW}[\tilde{v}]$ is a proper ancestor of $v'$, this implies that $v'>\mathit{lowestW}[v]$, contradicting the supposition $\mathit{lowestW}[v]\leq v'$. Thus, before executing the \textbf{while} loop, we have that $w$ is either a descendant of $\tilde{v}$ or a proper ancestor of $v'$.
Now suppose that the \textbf{while} loop has been executed $0$ or more times, and $w$ is assigned a descendant of $\tilde{v}$ or a proper ancestor of $v'$. We will show that if we execute the \textbf{while} loop once more, $w$ will either be assigned a descendant of $\tilde{v}$ or a proper ancestor of $v'$. Of course, if $w$ is a proper ancestor of $v'$, the same is true for $\mathit{lowestW}[w]$. Moreover, if $w=\tilde{v}$, then, as noted above, we have that $\mathit{lowestW}[w]$ is a proper ancestor of $v'$. So let's assume that $w$ is a proper descendant of $\tilde{v}$, and suppose, for the sake of contradiction, that $\mathit{lowestW}[w]$ is in $T(\tilde{v}, v']$. Then, since $\tilde{v}\in T(w,\mathit{lowest}[w])$, due to the inductive hypothesis
we have that $\mathit{lowestW}[\tilde{v}]\geq\mathit{lowestW}[w]$. Since we also have $v'>\mathit{lowestW}[\tilde{v}]$, this contradicts the supposition $\mathit{lowestW}[w]\geq v'$. Thus, if $w$ is a proper descendant of $\tilde{v}$, $\mathit{lowestW}[w]$ is either a descendant of $\tilde{v}$ or a proper ancestor of $v'$. In any case, then, during the execution of the \textbf{while} loop, $w$ will be assigned either a descendant of $\tilde{v}$ or a proper ancestor of $v'$, and thus the $v'$ entry of $\mathit{lowestW}$ will not be accessed.

It remains to show that, after the processing of $u$, for every $w\in T(v,\tilde{w})$ we have $\mathit{lowestW}[w]\geq\tilde{w}$. Due to the inductive hypothesis, this is definitely true for every $w\in T(v,\mathit{lowestW}[v])$ (where $\mathit{lowestW}[v]$ here has the value after the processing of $\tilde{u}$ and before the processing of $u$), since $\mathit{lowestW}[v]\geq\tilde{w}$, and every such $w$ has $\mathit{lowestW}[w]\geq\mathit{lowestW}[v]$. Now let's assume that $w\in T[\mathit{lowestW}[v],\tilde{w})$, and suppose, for the sake of contradiction, that $\mathit{lowestW}[w]<\tilde{w}$. Then it cannot be that case that $w=\mathit{lowestW}[v]$, since $\tilde{w}\leq\mathit{lowestW}[\mathit{lowestW}[v]]$ (for the existence of a $w\in T[\mathit{lowestW}[v],\tilde{w})$ implies that $\tilde{w}\neq \mathit{lowestW}[v]$). Now, since $\mathit{lowestW}[v]>w>\tilde{w}$, there must exist a $w'$ such that $w'\in T[\mathit{lowestW}[v],w]$, $\mathit{lowestW}[w']<w$ and $\mathit{lowestW}[w']\geq\tilde{w}$. Since $\mathit{lowestW}[w]<\tilde{w}$, we cannot $w'=w$. Then, $w\in T(w',\mathit{lowestW}[w'])$, and thus, due to the inductive hypothesis, we have $\mathit{lowestW}[w]\geq\mathit{lowestW}[w']$. Since $\mathit{lowestW}[w']\geq\tilde{w}$, this implies that $\mathit{lowestW}[w]\geq\tilde{w}$, contradicting the supposition $\mathit{lowestW}[w]<\tilde{w}$. Thus, every $w\in T(v,\tilde{w})$ has $\mathit{lowestW}[w]\geq\tilde{w}$. The proof that $(1)$ is true for $u$ is complete. Due to the generality of $u\in\mathit{stackU}[v]$, this implies that $\Pi(v)$ is true. This shows, by induction, that $\Pi(n)$ is true, and the linearity of Algorithm \ref{algorithm:w_neq_nextM(v)} is thus established.
\end{proof}

\section{Computing the $4$-edge-connected components in linear time}
\label{sec:4ecc}

Now we consider how to compute the $4$-edge-connected components of an undirected graph $G$ in linear time. First, we reduce this problem to the computation of the $4$-edge-connected components of a collection of auxiliary $3$-edge-connected graphs.

\subsection{Reduction to the $3$-edge-connected case}
\label{sec:reduction}

Given a (general) undirected graph $G$, we execute the following steps:
\begin{itemize}
\item Compute the connected components of $G$.
\item For each connected component, we compute the $2$-edge-connected components which are subgraphs of $G$.
\item For each $2$-edge-connected component, we compute its $3$-edge-connected components $C_1,\ldots,C_{\ell}$.
\item For each $3$-edge-connected component $C_i$, we compute a $3$-edge-connected auxiliary graph $H_i$, such that for any two vertices $x$ and $y$, we have $x \stackrel[]{G}{\equiv}_4 y$ if and only if $x$ and $y$ are both in the same auxiliary graph $H_i$ and $x \stackrel[]{H_i}{\equiv}_4 y$.
\item Finally, we compute the $4$-edge-connected components of each $H_i$.
\end{itemize}

Steps 1--3 take overall linear time~\cite{dfs:t,Tsin:3CC}. We describe step 5 in the next section, so it remains to give the details of step 4.
Let $H$ be a $2$-edge-connected component (subgraph) of $G$.
We can construct a compact representation of the $2$-cuts of $H$, which allows us to compute its $3$-edge-connected components $C_1,\dotsc,C_{\ell}$ in linear time \cite{TwinlessSAP,Tsin:3CC}. Now, since the collection $\{C_1,\dotsc,C_{\ell}\}$ constitutes a partition of the vertex set of $H$, we can form the quotient graph $Q$ of $H$ by shrinking each $C_i$ into a single node. Graph $Q$ has the structure of a tree of cycles~\cite{3cuts:Dinitz}; in other words, $Q$ is connected and every edge of $Q$ belongs to a unique cycle. Let $(C_i,C_j)$ and $(C_i,C_k)$ be two edges of $Q$ which belong to the same cycle. Then $(C_i,C_j)$ and $(C_i,C_k)$ correspond to two edges $(x,y)$ and $(x',y')$ of $G$, with $x,x'\in C_i$. If $x\neq x'$, we add a virtual edge $(x,x')$ to $G[C_i]$. (The idea is to attach $(x,x')$ to $G[C_i]$ as a substitute for the cycle of $Q$ which contains $(C_i,C_j)$ and $(C_i,C_k)$.)
Now let $\bar{C_i}$ be the graph $G[C_i]$ plus all those virtual edges. Then $\bar{C_i}$ is $3$-edge-connected and its $4$-edge-connected components are precisely those of $G$ that are contained in $C_i$~\cite{3cuts:Dinitz}. Thus we can compute the $4$-edge-connected components of $G$ by computing the $4$-edge-connected components of the graphs $\bar{C_1},\dotsc,\bar{C_\ell}$ (which can easily be constructed in total linear time). Since every $\bar{C_i}$ is $3$-edge-connected, we can apply Algorithm \ref{algorithm:4-components} of the following section to compute its $4$-edge-connected components in linear time. Finally, we define the multiplicity $m(e)$ of an edge $e\in\bar{C_i}$ as follows: if $e$ is virtual, $m(e)$ is the number of edges of the cycle of $Q$ which corresponds to $e$; otherwise, $m(e)$ is $1$. Then, the number of minimal $3$-cuts of $H$ is given by the sum of all $m(e_1)\cdot m(e_2)\cdot m(e_3)$, for every $3$-cut $\{e_1,e_2,e_3\}$ of $\bar{C_i}$, for every $i\in\{1,\dotsc, l\}$ \cite{3cuts:Dinitz}. Since the $3$-cuts of every $\bar{C_i}$ can be computed in linear time, the minimal $3$-cuts of $H$ can also be computed within the same time bound.


\subsection{Computing the $4$-edge-connected components of a $3$-edge-connected graph}
\label{sec:4ecc-alg}

Now we describe how to compute the $4$-edge-connected components of a $3$-edge-connected graph $G$ in linear time.
Let $r$ be a distinguished vertex of $G$, and let $C$ be a minimum cut of $G$. By removing $C$ from $G$, $G$ becomes disconnected into two connected components. We let $V_C$ denote the connected component of $G\setminus{C}$ that does not contain $r$, and we refer to the number of vertices of $V_C$ as the \emph{$r$-size} of the cut $C$. (Of course, these notions are relative to $r$.)

Let $G=(V,E)$ be a $3$-edge-connected graph, and let $\mathcal{C}$ be the collection of the $3$-cuts of $G$. If the collection $\mathcal{C}$ is empty, then $G$ is $4$-edge-connected, and $V$ is the only $4$-edge-connected component of $G$. Otherwise, let $C\in\mathcal{C}$ be a $3$-cut of $G$. By removing $C$ from $G$, $G$ is separated into two connected components, and every $4$-edge-connected component of $G$ lies entirely within a connected component of $G\setminus{C}$. This observation suggests a recursive algorithm for computing the $4$-edge-connected components of $G$, by successively splitting $G$ into smaller graphs according to its $3$-cuts. Thus, we start with a $3$-cut $C$ of $G$, and we perform the splitting operation shown in Figure \ref{figure:splitting}. Then we take another $3$-cut $C'$ of $G$ and we perform the same splitting operation on the part which contains (the corresponding $3$-cut of) $C'$. We repeat this process until we have considered every $3$-cut of $G$. When no more splits are possible, the connected components of the final split graph correspond (by ignoring the newly introduced vertices) to the $4$-edge-connected components of $G$.

To implement this procedure in linear time, we must take care of two things. First, whenever we consider a $3$-cut $C$ of $G$, we have to be able to know which ends of the edges of $C$ belong to the same connected component of $G\setminus{C}$. And second, since an edge $e$ of a $3$-cut of the original graph may correspond to two virtual edges of the split graph, we have to be able to know which is the virtual edge that corresponds to $e$. We tackle both these problems by locating the $3$-cuts of $G$ on a DFS-tree $T$ of $G$ rooted at $r$, and by processing them in increasing order with respect to their $r$-size. By locating a $3$-cut $C\in\mathcal{C}$ on $T$ we can answer in $O(1)$ time which ends of the edges of $C$ belong to the same connected component of $G\setminus{C}$. And then, by processing the $3$-cuts of $G$ in increasing order with respect to their size, we ensure that (the $3$-cut that corresponds to) a $3$-cut $C\in\mathcal{C}$ that we process lies in the split part of $G$ that contains $r$.

\begin{figure}
\begin{center}
\centerline{\includegraphics[trim={0cm 23cm 0cm 0cm}, scale=1,clip, width=\textwidth]{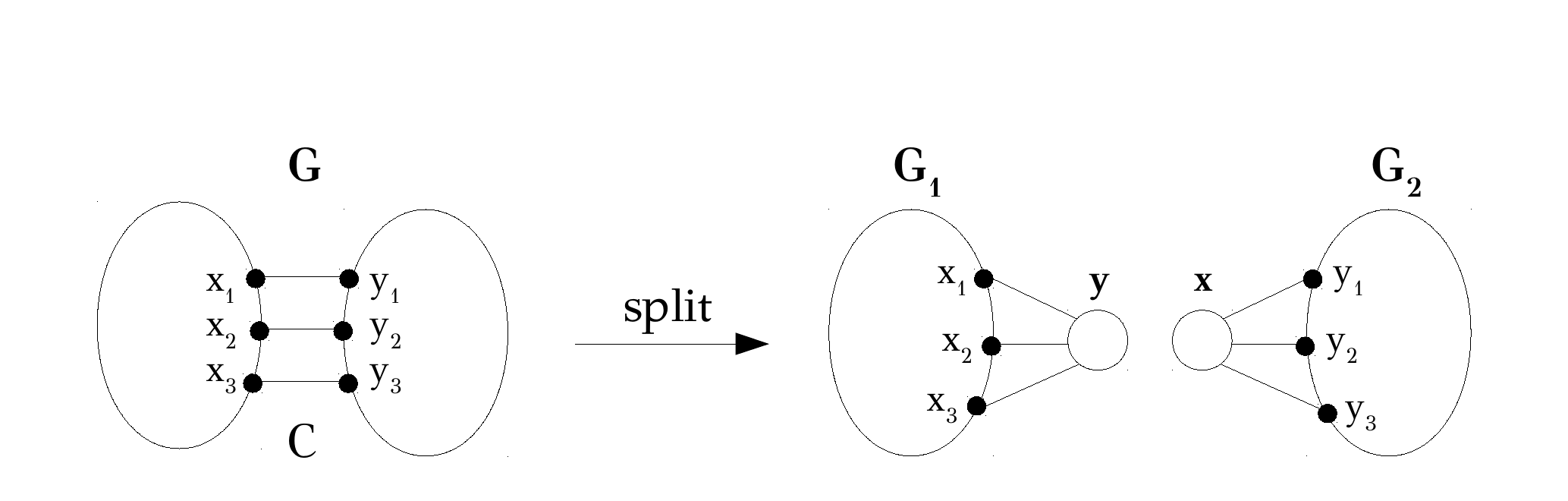}}
\caption{$C=\{(x_1,y_1),(x_2,y_2),(x_3,y_3)\}$ is a $3$-cut of $G$, with $\{x_1,x_2,x_3\}$ and $\{y_1,y_2,y_3\}$ lying in different connected components of $G\setminus{C}$. The split operation of $G$ at $C$ consists of the removal the edges of $C$ from $G$, and the introduction of two new nodes $x,y$, and six virtual edges $(x_1,y),(x_2,y),(x_3,y),(x,y_1),(x,y_2),(x,y_3)$. Now, the split graph is made of two connected components, $G_1$ and $G_2$. Every $3$-cut $C'\neq C$ of $G$ (or more precisely: a $3$-cut that corresponds to $C'$) lies entirely within $G_1$ or $G_2$. Conversely, every $3$-cut of either $G_1$ or $G_2$ corresponds to a $3$-cut of $G$. Thus, every $4$-edge-connected component of $G$ lies entirely within $G_1$ or $G_2$.}
\label{figure:splitting}
\end{center}
\end{figure}

Now, due to the analysis of the preceding sections, we can distinguish the following types of $3$-cuts on a DFS-tree $T$ (see also Figure \ref{figure:all_cases}):
\begin{itemize}
\item (I) $\{(v,p(v)),(x_1,y_1),(x_2,y_2)\}$, where $(x_1,y_1)$ and $(x_2,y_2)$ are back-edges.
\item (IIa) $\{(u,p(u)),(v,p(v)),(x,y)\}$, where $u$ is a descendant of $v$ and $(x,y)\in B(v)$.
\item (IIb) $\{(u,p(u)),(v,p(v)),(x,y)\}$, where $u$ is a descendant of $v$ and $(x,y)\in B(u)$.
\item (III) $\{(u,p(u)),(v,p(v)),(w,p(w))\}$, where $w$ is an ancestor of both $u$ and $v$, but $u,v$ are not related as ancestor and descendant.
\item (IV) $\{(u,p(u)),(v,p(v)),(w,p(w))\}$, where $u$ is a descendant of $v$ and $v$ is a descendant of $w$.
\end{itemize}
Let $r$ be the root of $T$. Then, for every $3$-cut $C\in\mathcal{C}$, $V_C$ is either $T(v)$, or $T(v)\setminus{T(u)}$, or $T(w)\setminus(T(u)\cup T(v))$, or $T(u)\cup(T(w)\setminus T(v))$, depending on whether $C$ is of type (I), (II), (III), or (IV), respectively. Thus we can immediately calculate the size of $C$ and the ends of its edges that lie in $V_C$. In particular, the size of $C$ is either $\mathit{ND}(v)$, or $\mathit{ND}(v)-\mathit{ND}(u)$, or $\mathit{ND}(w)-\mathit{ND}(u)-\mathit{ND}(v)$, or $\mathit{ND}(u)+\mathit{ND}(w)-\mathit{ND}(v)$, depending on whether it is of type (I), (II), (III), or (IV), respectively; $V_C$ contains either $\{v,x_1,x_2\}$, or $\{p(u),v,x\}$, or $\{p(u),v,y\}$, or $\{p(u),p(v),w\}$, or $\{u,p(v),w\}$, depending on whether $C$ is of type (I), (IIa), (IIb), (III), or (IV), respectively.

Algorithm \ref{algorithm:4-components} shows how we can compute the $4$-edge-connected components of $G$ in linear time, by repeatedly splitting $G$ into smaller graphs according to its $3$-cuts. When we process a $3$-cut $C$ of $G$, we have to find the edges of the split graph that correspond to those of $C$, in order to delete them and replace them with (new) virtual edges. That is why we use the symbol $v'$, for a vertex $v\in V$, to denote a vertex that corresponds to $v$ in the split graph. (Initially, we set $v' \leftarrow v$.) Now, if $(x,y)$ is an edge of $C$ with $x\in V_C$, the edge of the split graph corresponding to $(x,y)$ is $(x',y')$. Then we add two new vertices $v_C$ and $\tilde{v_C}$ to $G$, and the virtual edges $(x',\tilde{v_C})$ and $(v_C,y')$. Finally, we let $x$ correspond to $v_C$, and so we set $x' \leftarrow v_C$. This is sufficient, since we process the $3$-cuts of $G$ in increasing order with respect to their size, and so the next time we meet the edge $(x,y)$ in a $3$-cut, we can be certain that it corresponds to $(v_C,y')$.

\begin{algorithm}[!h]
\caption{\textsf{Compute the $4$-edge-connected components of a $3$-edge-connected graph $G=(V,E)$}}
\label{algorithm:4-components}
\LinesNumbered
\DontPrintSemicolon
Find the collection $\mathcal{C}$ of the $3$-cuts of $G$\;
Locate and classify the $3$-cuts of $G$ on a DFS-tree of $G$ rooted at $r$\;
For every $C\in\mathcal{C}$, calculate $\mathit{size}(C)$ (relative to $r$)\;
Sort $\mathcal{C}$ in increasing order w.r.t. the $\mathit{size}$ of its elements\;
\lForEach{$v\in V$}{Set $v' \leftarrow v$}
\ForEach{$C=\{(x_1,y_1),(x_2,y_2),(x_3,y_3)\}\in\mathcal{C}$}{
  \label{line:alg4-comp}
  Find the ends of the edges of $C$ that lie in $V_C$ \tcp{\textit{Let those ends be $x_1$,$x_2$ and $x_3$}}
  \label{line:alg4-comp-1}
  Remove the edges $(x_1',y_1')$,$(x_2',y_2')$,$(x_3',y_3')$ from $G$\;
  Introduce two new vertices $v_C$ and $\tilde{v_C}$ to $G$\;
  Add the edges $(x_1',\tilde{v_C})$,$(x_2',\tilde{v_C})$,$(x_3',\tilde{v_C})$,$(v_C,y_1')$,$(v_C,y_2')$,$(v_C,y_3')$ to $G$\;
  \label{line:alg4-comp-last}
  Set $x_1' \leftarrow v_C$, $x_2' \leftarrow v_C$, $x_3' \leftarrow v_C$\;
}
Output the connected components of $G$, ignoring the newly introduced vertices\;
\end{algorithm}

\clearpage

\bibliography{ltg}

\end{document}